\newtheorem{lemma}{Lemma}[section]
\newtheorem{theorem}[lemma]{Theorem}
\newtheorem{corollary}[lemma]{Corollary}
\newtheorem{definition}[lemma]{Definition}
\newtheorem{remark}[lemma]{Remark}
\newtheorem{claim}[lemma]{Claim}
\newcommand{\SNR}{\mathsf{SNR}}
\newcommand{\INR}{\mathsf{INR}}
\newcommand{\C}{\mathsf{C}^\mathsf{B}}
\newcommand{\E}{\mathrm{E}}
\newcommand{\sym}{\mathrm{sym}}
\newcommand{\STG}{\mathrm{STG}}
\newcommand{\OR}{\mathrm{OneRound}}
\newcommand{\conv}{\mathrm{conv}}
\newcommand{\aaaa}{\mathrm{(a)}}
\newcommand{\bbbb}{\mathrm{(b)}}
\newcommand{\cccc}{\mathrm{(c)}}
\newcommand{\dddd}{\mathrm{(d)}}
\newcommand{\eeee}{\mathrm{(e)}}
\newcommand{\ffff}{\mathrm{(f)}}
\newcommand{\lp}{\left(}
\newcommand{\rp}{\right)}
\newcommand{\lbp}{\left\{}
\newcommand{\rbp}{\right\}}
\newcommand{\ul}{\underline}
\newcommand{\mcal}{\mathcal}
\newcommand{\mscr}{\mathscr}
\newcommand{\what}{\widehat}
\newcommand{\wtild}{\widetilde}
\newcommand{\ra}{\rightarrow}
\title{Interference Mitigation Through Limited Receiver Cooperation}
\author{
\authorblockN{I-Hsiang Wang and David N. C. Tse}\\
\authorblockA{Wireless Foundations\\
University of California at Berkeley,\\
Berkeley, California 94720, USA\\
\textsf{\{ihsiang, dtse\}@eecs.berkeley.edu}}
\thanks{This work was supported by National Science Foundation under grant \# CCF-0830796.}
}
\begin{document}
\maketitle

\begin{abstract}

Interference is a major issue limiting the performance in wireless networks. Cooperation among receivers can help mitigate interference by forming distributed MIMO systems. The rate at which receivers cooperate, however, is limited in most scenarios. How much interference can one bit of receiver cooperation mitigate? In this paper, we study the two-user Gaussian interference channel with conferencing decoders to answer this question in a simple setting. We identify two regions regarding the gain from receiver cooperation: linear and saturation regions. In the linear region receiver cooperation is efficient and provides a \emph{degrees-of-freedom} gain, which is either \emph{one cooperation bit buys one more bit} or \emph{two cooperation bits buy one more bit} until saturation. In the saturation region receiver cooperation is inefficient and provides a \emph{power} gain, which is at most a constant regardless of the rate at which receivers cooperate.
The conclusion is drawn from the characterization of capacity region to within two bits. The proposed strategy consists of two parts: (1) the transmission scheme, where superposition encoding with a simple power split is employed, and (2) the cooperative protocol, where one receiver quantize-bin-and-forwards its received signal, and the other after receiving the side information decode-bin-and-forwards its received signal. 
\end{abstract}


\section{Introduction}
In modern communication systems, interference is one of the fundamental factors that limit performance.
Characterizing its capacity region is a long-standing open problem, except for several special cases (eg., the strong interference regime \cite{Sato_81}). The largest achievable region to date is reported by Han and Kobayashi \cite{HanKobayashi_81}, and the core of the scheme is a superposition coding strategy. 
Recent progress has been made on both inner bounds and outer bounds: 
Recently, Etkin, Tse, and Wang characterize the capacity region of the Gaussian interference channel to within one bit \cite{EtkinTse_07} by using a superposition coding scheme with a simple power-split configuration and by providing new upper bounds. The constant-gap-to-optimality result \cite{EtkinTse_07} provides a strong guarantee on the performance of the proposed scheme. Later, Motahari and Khandani \cite{MotahariKhandani_09}, Shang, Kramer, and Chen \cite{ShangKramer_09}, and Annapureddy and Veeravalli \cite{AnnapureddyVeeravalli_08} independently improve the bound and characterize the sum capacity in a very weak interference regime and a mixed interference regime.

In the above interference channel set-up, transmitters or receivers are not allowed to communicate with one another, and each user has to combat interference on its own. In various scenarios, however, nodes are not isolated, and transmitters/receivers can exchange certain amount of information. Cooperation among transmitters/receivers can help mitigate interference by forming distributed MIMO systems which provides two kinds of gains: \emph{degrees-of-freedom} gain and \emph{power} gain. The rate at which they cooperate, however, is limited, due to physical constraints. Therefore, one of the fundamental questions is, how much {\it interference} can limited {\it transmitter/receiver cooperation} mitigate? How much gain can it provide?

In this paper, we consider a two-user Gaussian interference channel with {\it conferencing decoders} to answer this question regarding receiver cooperation. Conferencing among encoders/decoders has been studied in \cite{Willems_83}, \cite{BrossLapidoth_08}, \cite{MaricYates_07}, \cite{DaboraServetto_06}, \cite{Simeone_08}, and \cite{YuZhou_08}. Our model is similar to those in \cite{Simeone_08} and \cite{YuZhou_08} but in an interference channel set-up. The work in \cite{Simeone_08} characterizes the capacity region of the compund MAC with unidirectional conferencing between decoders. For general set-up, it provides achievable rates but is not able to establish a constant-gap-to-optimality result. The work in \cite{YuZhou_08} considers one-sided Gaussian interference channels with unidirectional conferencing between decoders and characterizes the capacity region in strong interference regimes and the asymptotic sum capacity at high $\SNR$. For general receiver cooperation, works including \cite{Host-Madsen_06} and \cite{PrabhakaranViswanath_09}, investigate cooperation in interference channels with a set-up where the cooperative links are of the same band as the links in the interference channel. In particular, \cite{PrabhakaranViswanath_09} characterizes the sum capacity of Gaussian interference channels with in-band receiver cooperation to within a constant number of bits. Our work, on the other hand, is focused on the Gaussian interference channel with out-of-band (orthogonal) receiver cooperation, and studies its entire capacity region.

We propose a strategy achieving the capacity region universally to within 2 bits per user, regardless of channel parameters. The two-bit gap is the worst-case gap which can be loose in some regimes, and it is vanishingly small at high $\SNR$ when compared to the capacity. The strategy consists of two parts: (1) the transmission scheme, describing how transmitters encode their messages, and (2) the cooperative protocol, describing how receivers exchange information and decode messages. For transmission, both transmitters use superposition coding \cite{HanKobayashi_81} with the same power-split as in the case without cooperation \cite{EtkinTse_07}. For the cooperative protocol, it is appealing to apply the decode-forward or compress-forward schemes, originally proposed in \cite{CoverElGamal_79} for the relay channel,  like most works dealing with more complicated networks, including \cite{DaboraServetto_06}, \cite{Simeone_08}, \cite{YuZhou_08}, \cite{Host-Madsen_06}, \cite{KramerGastpar_05}, etc. It turns out neither conventional compress-forward nor decode-forward achieves capacity to within a constant number of bits universally for the problem at hand. On the other hand, \cite{CoverKim_07}, \cite{Kim_07}, and \cite{AvestimehrDiggavi_09} observe that the conventional compress-forward scheme \cite{CoverElGamal_79} may be improved by the destination directly decoding the sender's message instead of requiring to first decode the quantized signal of the relay. We use such an improved compress-forward scheme as part of our cooperative protocol. 
One of the receivers quantizes its received signal at an appropriate distortion, bins the quantization codeword and sends the bin index to the other receiver. The other receiver then decodes its own information based on its own received signal and the received bin index. After decoding, it bin-and-forwards the decoded common messages back to the former receiver and helps it decode. 

We identify two regions regarding the gain from receiver cooperation: linear and saturation regions, as illustrated through a numerical example in Fig. \ref{fig_CoopGainRegime}. In the linear region, receiver cooperation is \emph{efficient}, in the sense that the growth of user data rate is roughly linear with respect to the capacity of receiver-cooperative links. The gain in this region is the \emph{degrees-of-freedom} gain that distributed MIMO systems provide. On the other hand, in the saturation region, receiver cooperation is \emph{inefficient} in the sense that the growth of user data rate becomes saturated as one increases the rate in receiver-cooperative links. The gain is the \emph{power} gain of at most a constant number of bits, independent of the channel strength. We will focus on the system performance in the linear region, because not only that in most scenarios the rate at which receivers can cooperate is limited, but also that the gain from cooperation is more significant. 
 
\begin{figure}[htbp]
{\center
\includegraphics[width=5in]{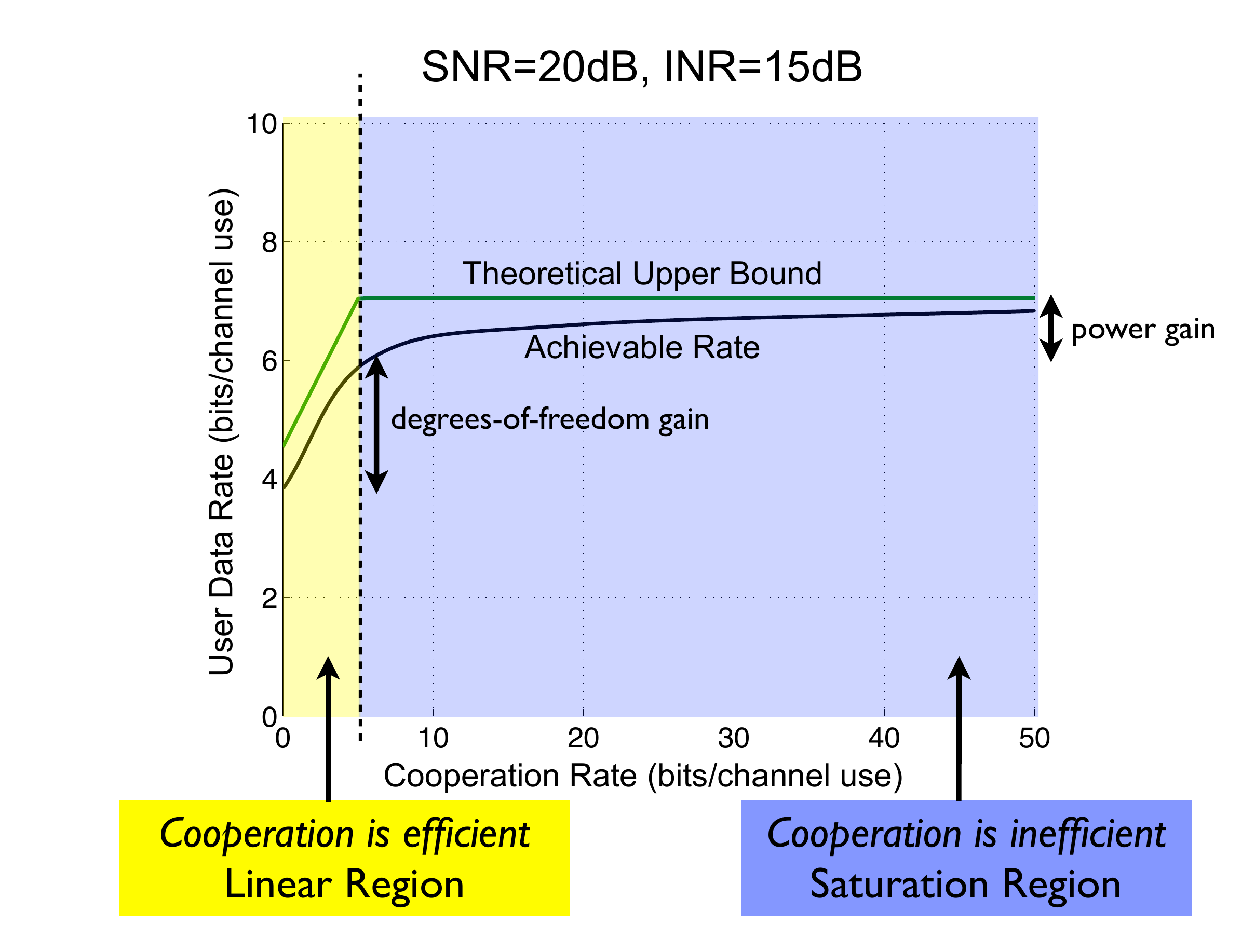}
\caption{The Gain from Limited Receiver Cooperation}
\label{fig_CoopGainRegime}
}
\end{figure}
 
With the constant-gap-to-optimality result, we find that the fundamental gain from cooperation in the linear region as follows: either \emph{one cooperation bit buys one more bit} or \emph{two cooperation bits buy one more bit} until saturation, depending on channel parameters. In the symmetric set-up, at high $\SNR$, when $\INR$ is below 50\% of $\SNR$ in dB scale, one-bit cooperation per direction buys roughly one-bit gain per user until full receiver cooperation performance is reached, while when $\INR$ is between 67\% and 200\% of $\SNR$ in dB scale, one-bit cooperation per direction buys roughly half-bit gain per user. In the weak interference regime, for a given pair of $(\SNR,\INR)$, when the receiver-cooperative link capacity $\C > \log\INR$, cooperation among receivers can get a close-to-interference-free (that is, within a constant number of bits) performance. In the strong interference regime, in contrast to that without cooperation, system performance can be boost {\it beyond} interference-free performance, by utilizing receiver-cooperative links not only for interference mitigation but also for forwarding desired information, since the cross link is stronger than the direct link.


The rest of this paper is organized as follows. In Section \ref{sec_Formulation}, we introduce the channel model and formulate the problem. In Section \ref{sec_Example}, we provide intuitive discussions about achievability and motivate our two-round strategy. Then we give examples to illustrate why it is not a good idea to use cooperative protocols based on conventional compress-forward or decode-forward. In Section \ref{sec_Achievable}, we describe the strategy concretely and derive its achievable rates, and in Section \ref{sec_AsymmConstantGap} we show that the achievable rate region is within 2 bits per user to the outer bounds we provide. In addition, we characterize the capacity region of compound MAC with conferencing decoders to within 1 bit, as a by-product. In Section \ref{sec_Dof}, focusing on the symmetric set-up, we illustrate the fundamental gain from receiver cooperation by deriving the optimal number of {\it generalized degrees of freedom} (g.d.o.f.) and compare it with the achievable ones of suboptimal schemes. 


\section{Problem Formulation}\label{sec_Formulation}

\subsection{Channel Model}
The two-user Gaussian interference channel with conferencing decoders is depicted in Fig. \ref{fig_ChModel}.

\begin{figure}[htbp]
{\center
\includegraphics[width=4in]{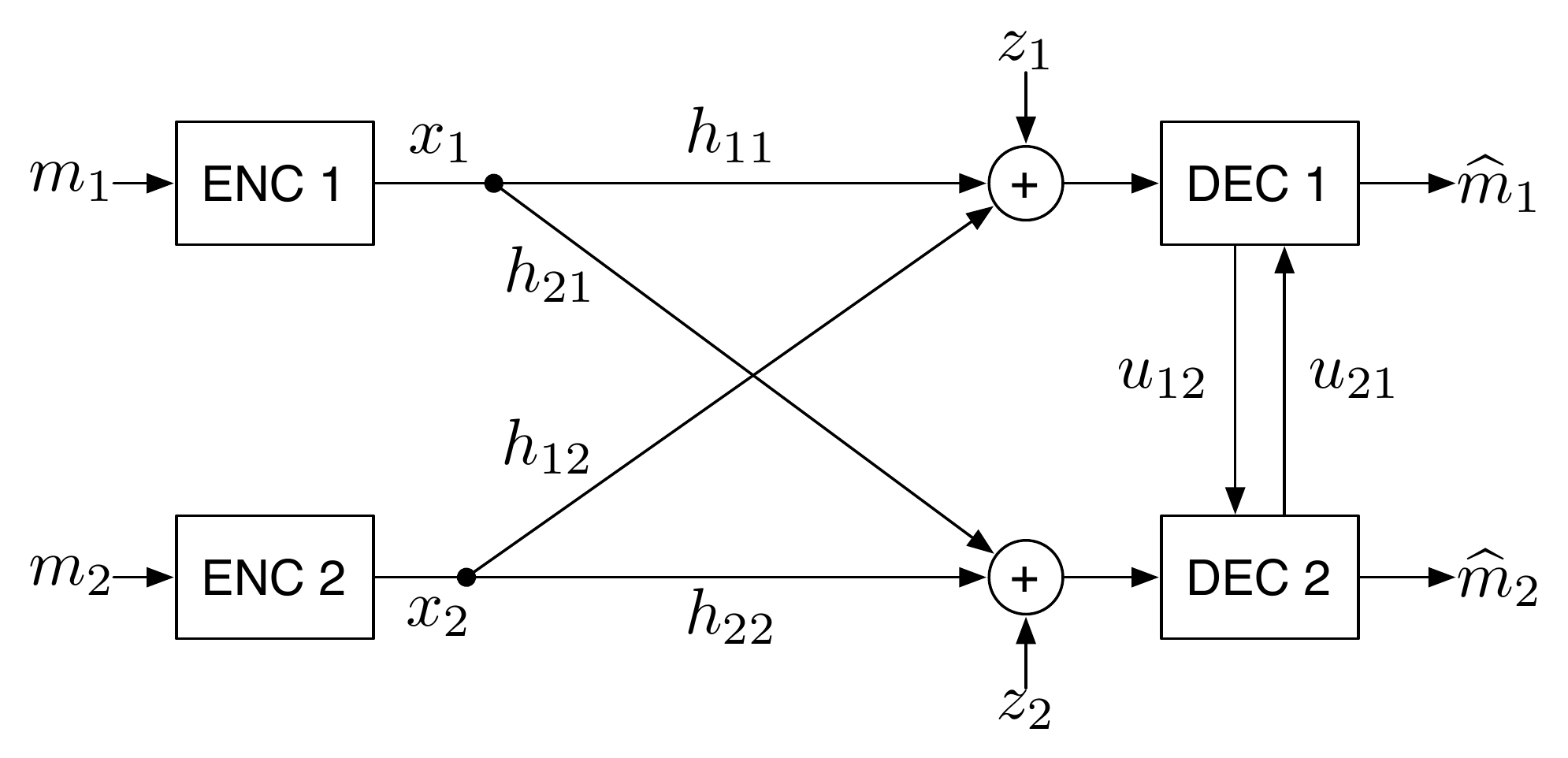}
\caption{Channel Model}
\label{fig_ChModel}
}
\end{figure}

\subsubsection{Transmitter-Receiver Links}
The transmitter-receiver links are modeled as the \emph{normalized} Gaussian interference channel:
\begin{align}
y_1 &= h_{11}x_1 + h_{12}x_2 + z_1\\
y_2 &= h_{21}x_1 + h_{22}x_2 + z_2,
\end{align}
where the additive noise processes $\{z_i[n]\}$, ($i=1,2$), are independent $\mathcal{CN}(0,1)$, i.i.d. over time. In this paper, we use $[.]$ to denote time indices. Transmitter $i$ intends to convey message $m_i$ to receiver $i$ by encoding it into a block codeword $\{x_i[n]\}_{n=1}^N$, with transmit power constraints
\begin{align} 
\frac{1}{N}\sum_{n=1}^N\E\Big[ \big\lvert x_i[n] \big\lvert^2 \Big] \le 1,\ i=1,2,
\end{align}
for arbitrary block length $N$. Note that outcome of the encoder depends solely on its own message. Messages $m_1,m_2$ are independent.
Define channel parameters
\begin{align}
\SNR_i := |h_{ii}|^2,\ \INR_i := |h_{ij}|^2,\ i,j=1,2,\ i\ne j.
\end{align}
%

\subsubsection{Receiver-Cooperative Links}
The receiver-cooperative links are noiseless with finite capacity $\C_{ij}$ from receiver $i$ to $j$. Encoding must satisfy causality constraints: for any time index $n=1,2,\ldots, N$, $u_{21}[n]$ is only a function of $\{y_2[1],\ldots, y_2[n-1], u_{12}[1], \ldots, u_{12}[n-1]\}$, and $u_{12}[n]$ is only a function of $\{y_1[1],\ldots, y_1[n-1], u_{21}[1], \ldots, u_{21}[n-1]\}$.

In the rest of this paper, we use $v^n$ to denote the sequence $\{v[1],\ldots,v[n]\}$.

\subsection{Strategies, Rates, and Capacity Region}
We give the basic definitions for the coding strategies, achievable rates of the strategy, and the capacity region of the channel.
\begin{definition}[Strategy and Average Probability of Error]
An $(M_1, M_2, N)$-strategy consists of the following: for $i,j=1,2$, $i\ne j$,
\begin{itemize}
\item[(1)] message set $\mcal{M}_i:= \{1,2,\ldots, M_i\}$ for user $i$;
\item[(2)] encoding function $e^{(N)}_i: \mcal{M}_i \rightarrow \mathbb{C}^N$, $m_i \mapsto x_i^N$ at transmitter $i$;
\item[(3)] set of relay functions $\{r^{(n)}_{i}\}_{n=1}^N$ such that $u_{ij}[n] =r^{(n)}_{i}(y_i^{n-1},u_{ji}^{n-1}) \in \{1,2,\ldots, 2^{\C_{ij}}\}$, $\forall n=1,2,\ldots,N$ at receiver $i$;
\item[(4)] decoding function $d^{(N)}_i:  \mathbb{C}^N\times \{1,2,\ldots, 2^{N\C_{ji}}\} \rightarrow \mcal{M}_i$, $(y_i^N , u_{ji}^N) \mapsto \what{m}_i$ at receiver $i$.
\end{itemize}

The average probability of error 
\begin{align}
P^{(N)}_e := \frac{1}{M_1M_2}\sum_{m_1\in \mcal{M}_1, m_2\in \mcal{M}_2} 
\Pr\lbp \begin{array}{l}d^{(N)}_1(y_1^N , u_{21}^N) \ne m_1\ \mathrm{or}\\ d^{(N)}_2(y_2^N , u_{12}^N)\ne m_2 \end{array}\Bigg\lvert m_1,m_2\ \textrm{are sent}\rbp
\end{align}
\end{definition}

\begin{definition}[Achievable Rates and Capacity Region]
A rate tuple $(R_1,R_2)$ is achievable if for any $\epsilon > 0$ and for all sufficiently large $N$, there exists an $(M_1, M_2, N)$ strategy with $M_i \ge 2^{NR_i}$, for $i=1,2$, such that $P^{(N)}_e < \epsilon$. The capacity region $\mscr{C}$ is the collections of all achievable $(R_1,R_2)$.
\end{definition}

\subsection{Notations}
We summarize below the notations used in the rest of this paper.
\begin{itemize}
\item
For a real number $a$, $(a)^+ := \max(a,0)$ denotes its positive part.
\item
For sets $A,B \subseteq \mathbb{R}^k$ in $k$-dimensional space, $A \oplus B := \{a+b: a\in A, b\in B\}$ denotes the direct sum of $A$ and $B$. $\conv\{A\}$ denotes the convex hull of the set $A$.
\item
With a little abuse of notations, for $x,y\in\mathbb{F}_q$, $x\oplus y$ denotes the modulo-$q$ sum of $x$ and $y$.
\item
Unless specified, all the logarithms $\log(.)$ is of base 2.
\end{itemize}


\section{Motivation of Strategies}\label{sec_Example}
Before introducing our main result, we first provide intuitive discussions about achievability and motivate our two-round strategy (to be described in detail in Section \ref{sec_Achievable}) from a high-level perspective. Then we give examples to illustrate why cooperative protocols based on conventional compress-forward or decode-forward may not be good for cooperation between receivers to mitigate interference. Throughout the discussion in this section, we will make use of the \emph{linear deterministic channel} proposed in \cite{AvestimehrDiggavi_09}.

\subsection{Optimal Strategy in the Linear Deterministic Channel}
First, consider the following symmetric channel: $\SNR_1=\SNR_2=\SNR$, $\INR_1=\INR_2=\INR$, and $\C_{12}=\C_{21}=\C$. Set $\INR$ to be 2/3 of $\SNR$ in dB scale, that is, $\log\INR=\frac{2}{3}\log\SNR$. Set $\C=\frac{1}{3}\log\SNR$. The corresponding linear deterministic channel (LDC) is depicted in Fig. \ref{fig_Examples}. The bits at the levels of transmitters/receivers can be thought of as chunks of binary expansions of the transmitted/received signals. Note that in this example, one bit in the LDC corresponds to $\frac{1}{3}\log\SNR$ in the Gaussian channel.

We begin with the baseline where two receivers are not allowed to cooperate. The transmitted signals are naturally broken down into two parts: (1) the common levels, which appear at both receivers, and (2) the private levels, which only appear at its own receiver. Each transmitter splits its message into common and private parts, which are linearly modulated onto the common and private levels of the signal respectively. Each receiver then decodes both user's common messages and its own private message by solving the linear equations it received. This is shown to be optimal in the two-user interference channel \cite{BreslerTse_08}. In this example (Fig. \ref{fig_Examples}.(a)), bits $a_1$ and $b_1$ are common, while $a_3$ and $b_3$ are private. The sum capacity without cooperation is 4 bits. Since all levels at both receivers are occupied, one cannot turn on bits $a_2$ or $b_2$ without causing collisions.

\begin{figure}[htbp]
{\center
\subfigure[Without Cooperation]{\includegraphics[width=3in]{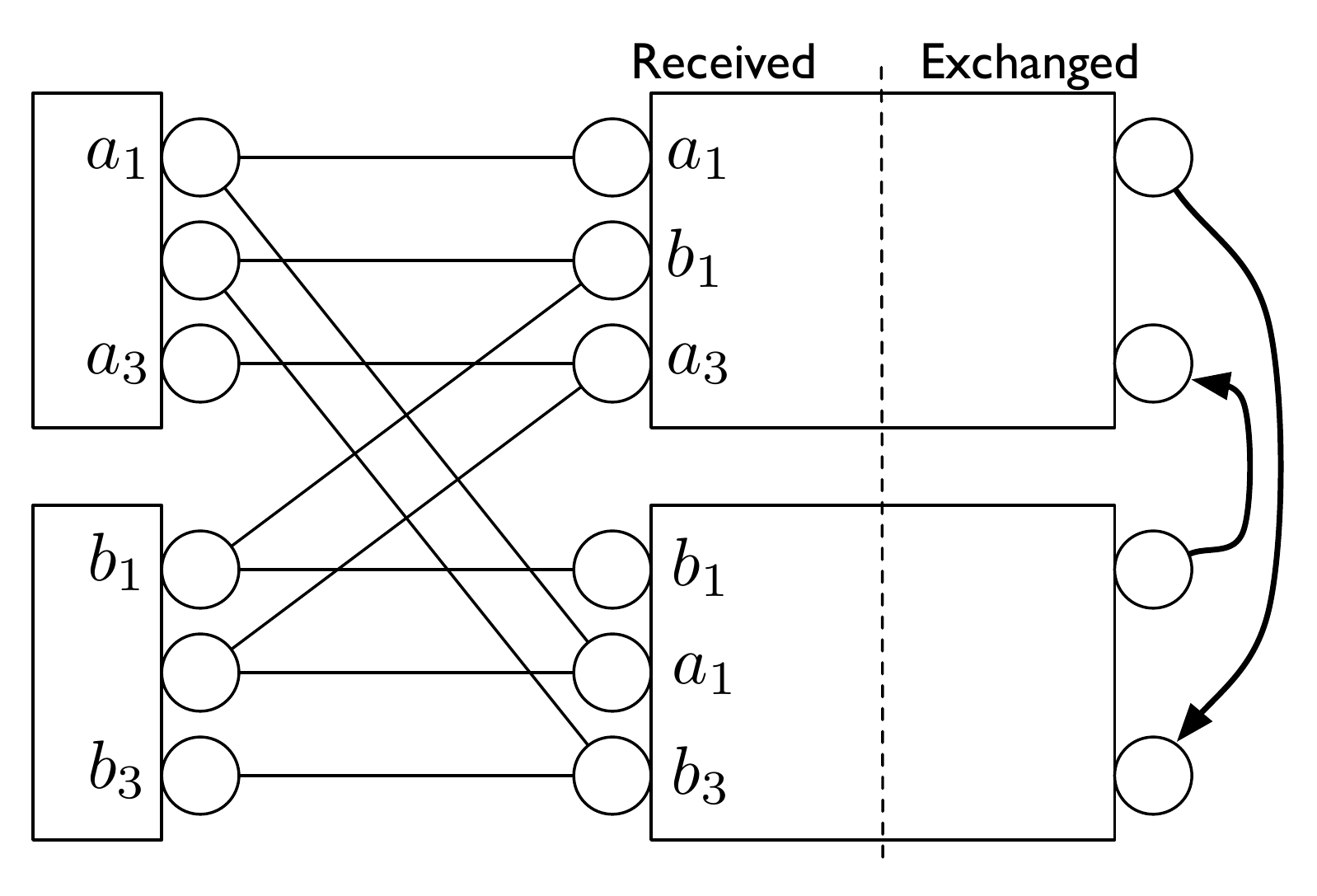}}
\subfigure[With Cooperation]{\includegraphics[width=3in]{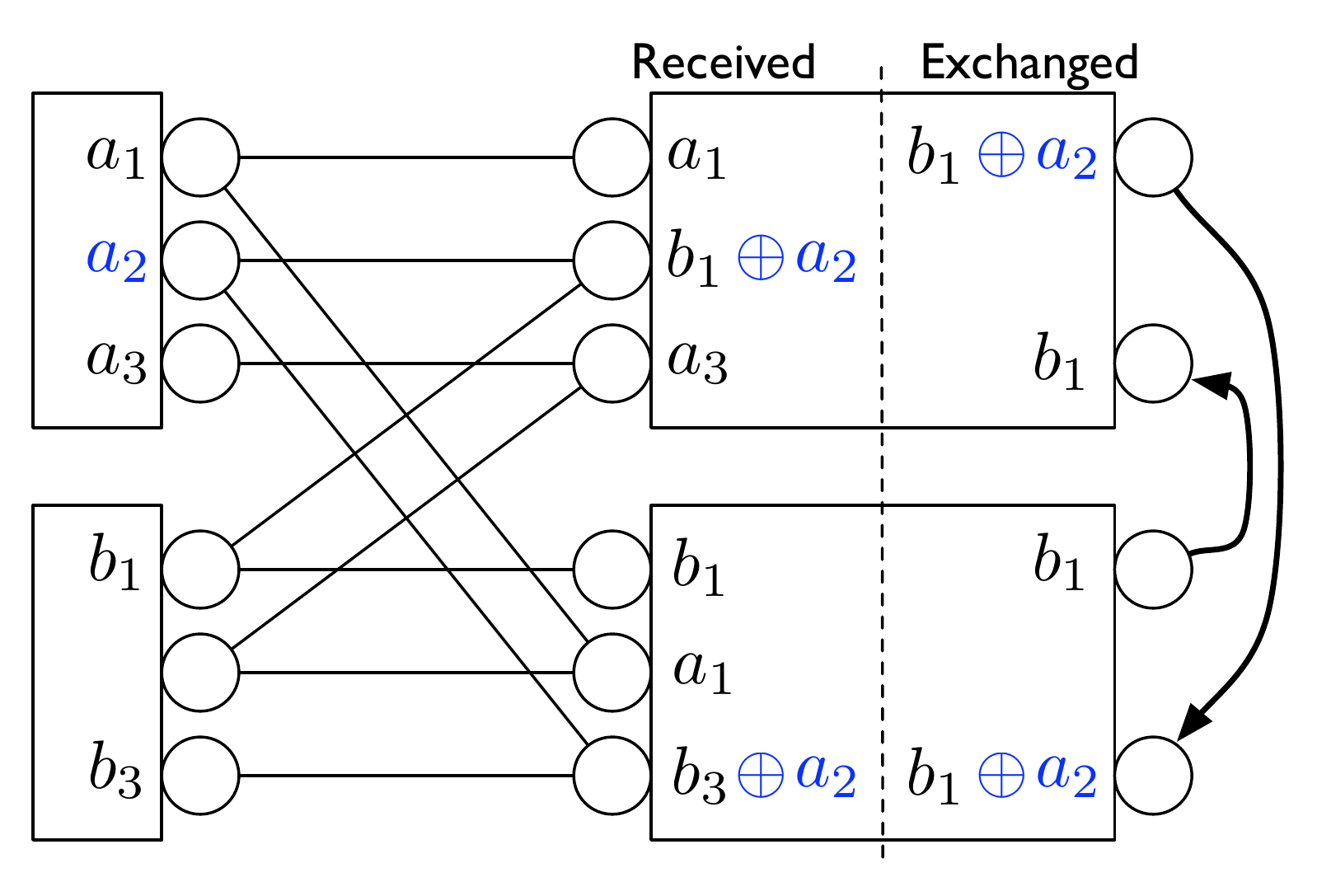}}
\subfigure[Conventional Compress-Forward]{\includegraphics[width=3in]{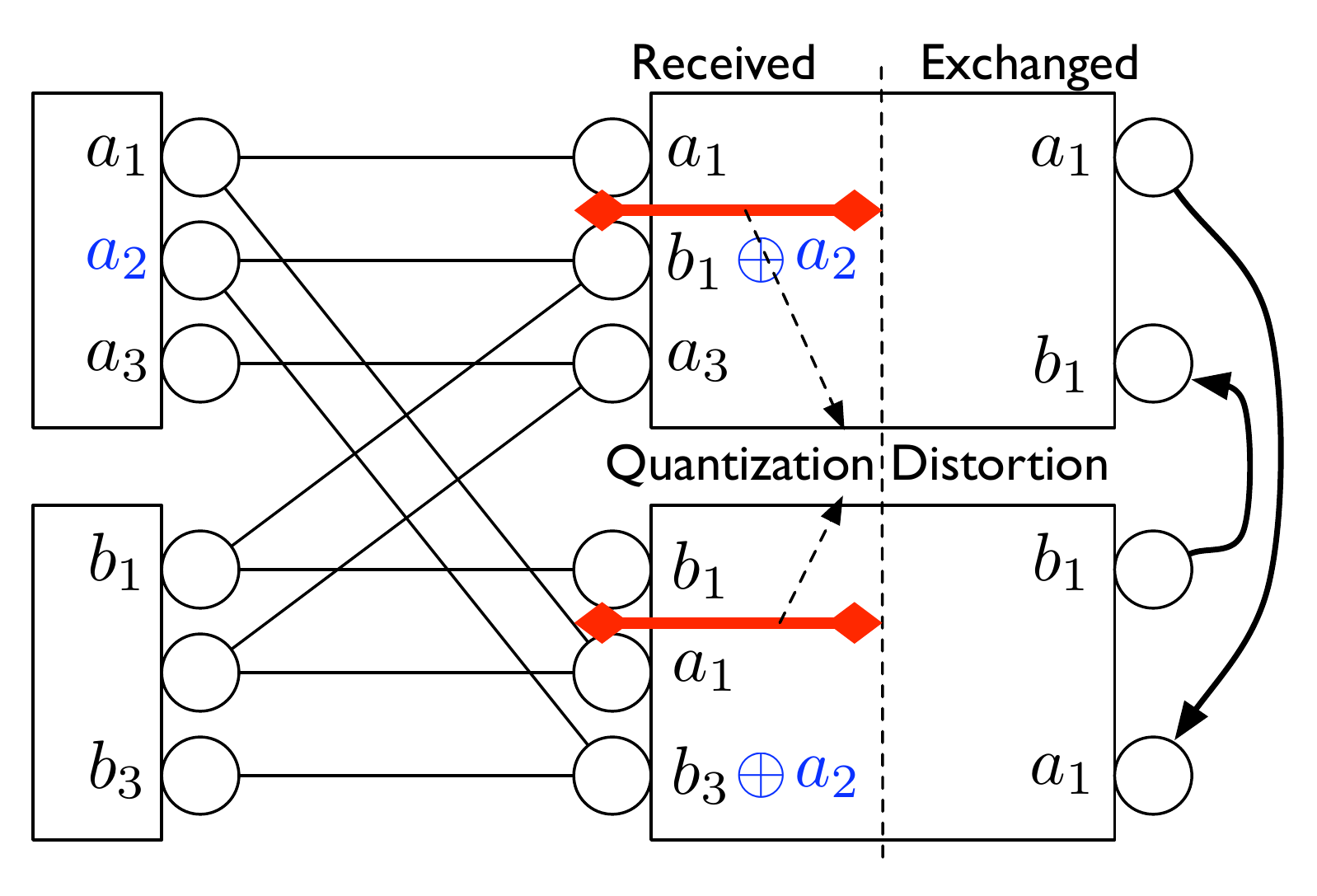}}
\caption{An Example Channel}
\label{fig_Examples}
}
\end{figure}

With receiver cooperation, the natural split of transmitted signals does not change. This suggests that the encoding procedure and the aim of each decoder remain the same. Each receiver with the help from the other receiver, however, is able to decode more information. Since each user's private message is not of interest to the other receiver, a natural scheme for receiver cooperation is to exchange linear combinations formed by the signals \emph{above} the private signal level so that the undesired signal does not pollute the cooperative information. In this example, as illustrated in Fig. \ref{fig_Examples}.(b), with one-bit cooperation in each direction in the LDC, the optimal sum rate is 5 bits, achieved by turning on one more bit $a_2$. This causes collisions at the second level at receiver 1 and at the third level at receiver 2, while they can be resolved with cooperation: receiver 1 sends $b_1\oplus a_2$ to receiver 2, and receiver 2 sends $b_1$ to receiver 1. Now receiver 1 can solve $\lp a_1,a_2,a_3,b_1\rp$, and receiver 2 can solve $\lp b_1,b_3,a_1,a_2\rp$. In fact, the exchanged linear combinations are not unique. For example, receiver 1 can send $\lp b_1\oplus a_2 \rp\oplus a_1$ and receiver 2 can send $b_1\oplus a_1$, and this again achieves the same rates. As long as receiver 1 does not send a linear combination containing the private bit $a_3$ and the sent linear combination is linearly independent of the signals at receiver 2 (and vice versa for the linear combination sent from receiver 2 to receiver 1), the scheme is optimal for this example channel. The above discussion regarding the scheme in the LDC naturally leads to an implementable one-round scheme in the Gaussian channel, where both receivers quantize-and-bin their received signals at their own private signal level.


In the above example, it is optimal that each receiver sends to the other linear combinations formed by its received signal above its private signal level. Is this optimal in general? The answer is no. Consider the following asymmetric example: $\SNR_2=\INR_2$, $\SNR_1$ is $2/3$ of $\SNR_2$ in dB, and $\INR_1$ is $1/3$ of $\SNR_2$ in dB. $\C_{12}= \frac{2}{3}\log\SNR_2$ and $\C_{21}= \frac{1}{3}\log\SNR_2$. The corresponding LDC is depicted in Fig. \ref{fig_Asymm}, where one bit in the LDC corresponds to $\frac{1}{3}\log\SNR_2$ in the Gaussian channel. First consider the same scheme as that in the previous exmaple. Note that if receiver 2 just forwards signals above its private signal level, it can only forward $a_1$ to receiver 1 and achieves $R_1$ up to 2 bits. On the other hand, if receiver 2 forwards $a_3$ to receiver 1, which is below user 2's private signal level, it achieves $R_1=3$ bits. From this example, we see that once there are ``useful" information (which should not be polluted by the receiver's own private bits) lies \emph{at or below} the private signal level (in this example, the bit $a_3$), the one-round scheme described in the previous example is suboptimal. To extract the useful information at or below the private signal level, one of the receivers (in this example, receiver 2) can first decode and then form linear combinations using (decoded) common messages \emph{only}. 

\begin{figure}[hbtp]
{\center
\subfigure[Suboptimal Scheme]{\includegraphics[width=3in]{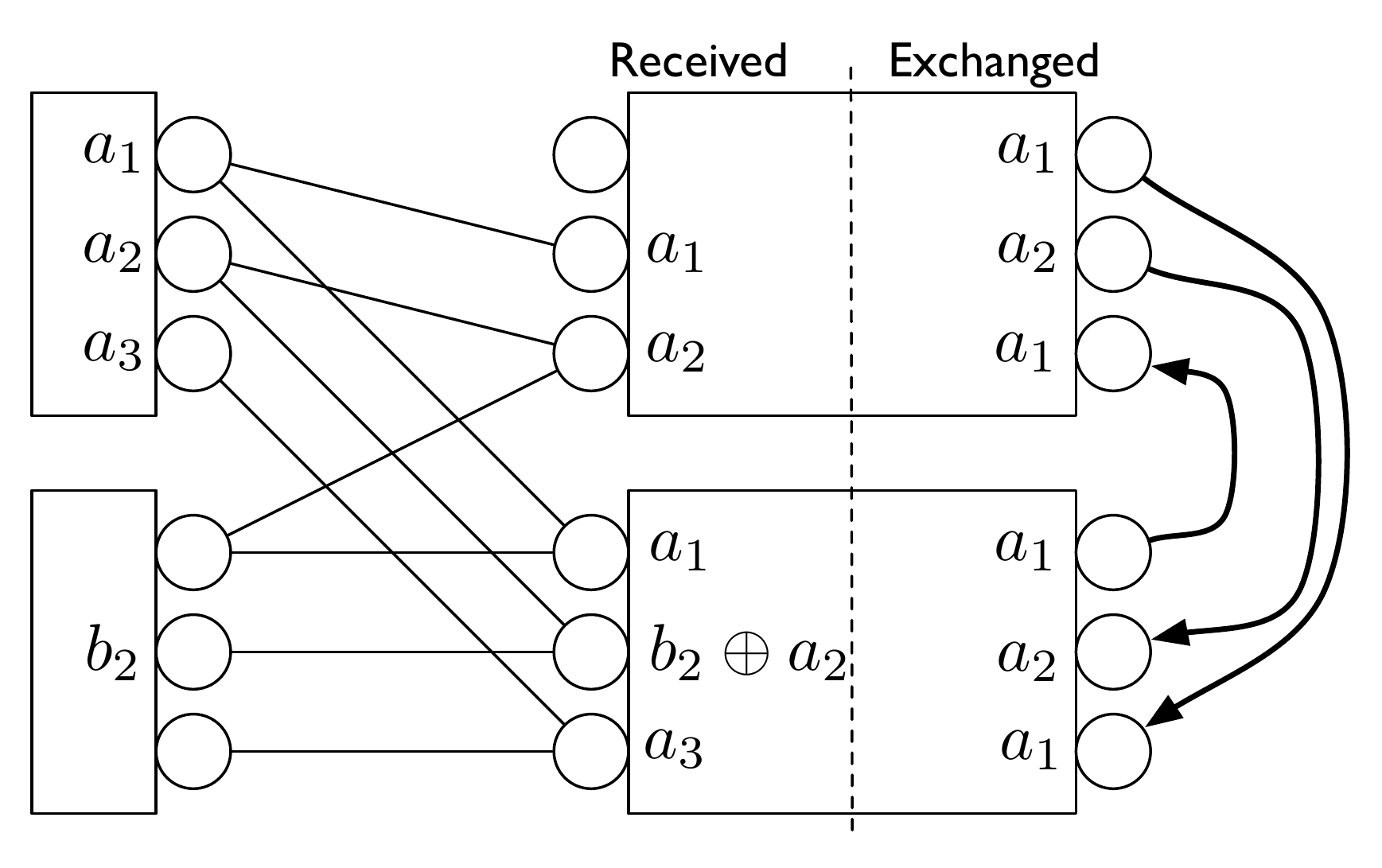}}
\subfigure[Optimal Scheme]{\includegraphics[width=3in]{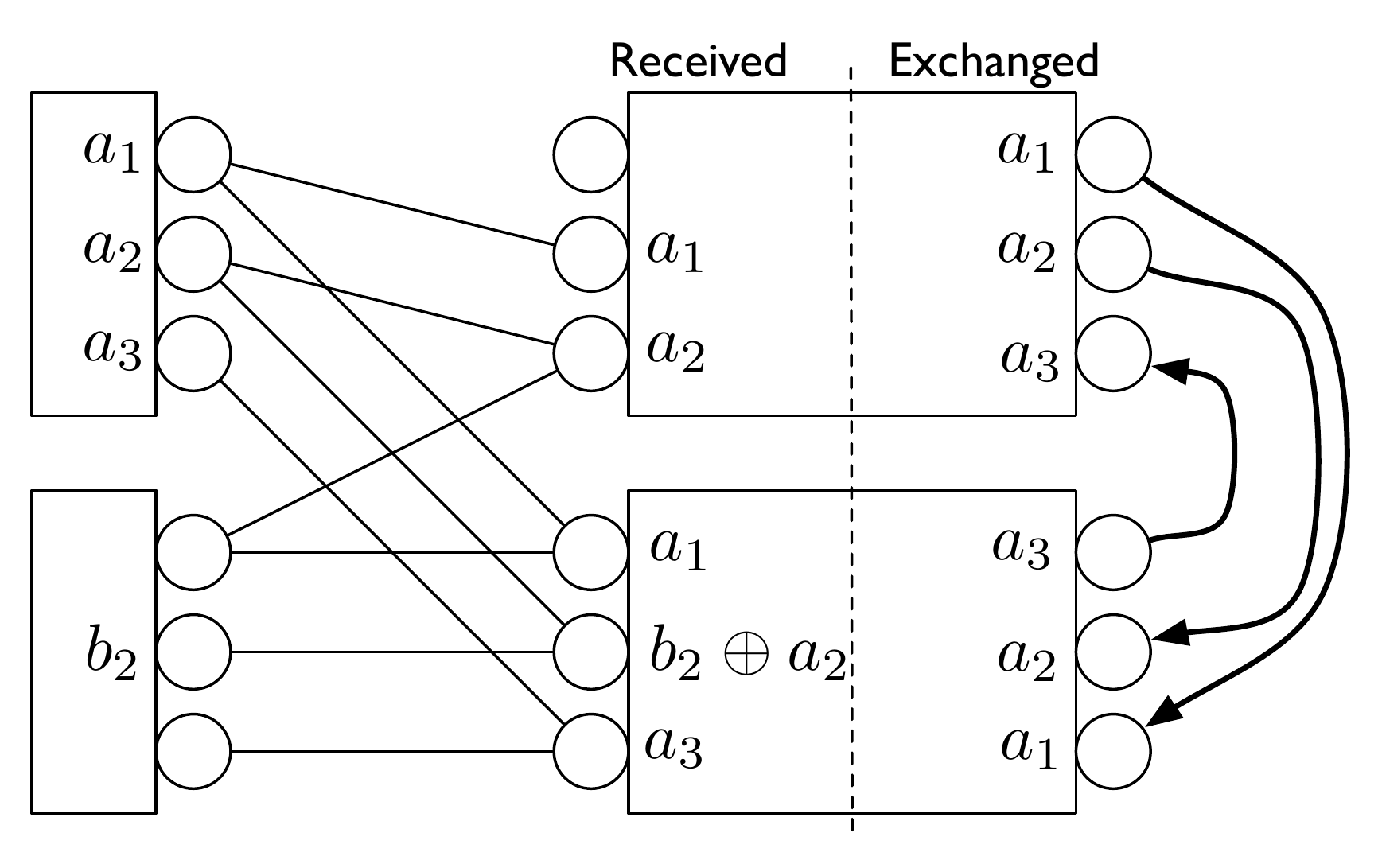}}
\caption{An Asymmetric Example}
\label{fig_Asymm}
}
\end{figure}

Without loss of generality, however, the above situation that there is useful information for the other receiver lies at or below the private signal level, only happens at one of the two receivers. In other words, there exists a receiver where no useful information (for the other receiver) lies at or below the private signal level. The reason is the following:
\begin{enumerate}
\item It is straightforward to see that the capacity region is convex, and hence if a scheme can achieve $\max_{(R_1,R_2) \in \mscr{C} }\lbp \mu_1 R_1 + \mu_2 R_2\rbp$ for all $\mu_1,\mu_2 \ge 0$, it is optimal.
\item If $\mu_1 \ge \mu_2$, we weigh user 1's rate more. Since the private bits are cheaper to support in the sense that they do not cause interference at receiver 2, user 1 should be transmitting at its full private rate, which is equal to the number of levels at or below the private signal level at receiver 1. Therefore, all levels at or below the private signal level are occupied by user 1's private bits and there is no useful information at receiver 1 for receiver 2.
\item Similarly if $\mu_1 \le \mu_2$, there is no useful information at receiver 2 for receiver 1 at or below the private signal level.
\end{enumerate}
Hence, the following two-round strategy is optimal in the LDC: if $\mu_1 \ge \mu_2$, receiver 1 forms a certain number (no more than the cooperative link capacity) of linear combinations composed of the signals above its private signal level and sends them to receiver 2. After receiver 2 decodes, it forms a certain number of linear combinations composed of the decoded common bits and sends them to receiver 1. If $\mu_1 \le \mu_2$, the roles of receiver 1 and 2 are exchanged. Note that depending on the operating point in the capacity region, we use different configurations, implying that time-sharing is needed to achieve the full capacity region.

From the above discussion, a natural and implementable two-round strategy for Gaussian channels emerges. For the transmission, we use a superposition Gaussian random coding scheme with a simple power-split configuration, as described in \cite{EtkinTse_07}. For the cooperative protocol, one of the receivers quantize-and-bins its received signal at its private signal level and forwards the bin index; after the other receiver decodes with the helping side information, it bin-and-forwards the decoded common messages back to the first receiver and helps it decode. In Section \ref{sec_AsymmConstantGap}, we shall prove that this strategy achieves the capacity region universally to within 2 bits per user.



\subsection{Conventional Compress-Forward and Decode-Forward}
Now, suppose the conventional compress-forward assuming joint Gaussianity of the received signals is used for receivers to cooperate. It is a standard approach to evaluate achievable rates of Gaussian channels using compress-forward in the literature, including \cite{DaboraServetto_06}, \cite{Simeone_08}, \cite{YuZhou_08}, \cite{Host-Madsen_06}, \cite{KramerGastpar_05}, etc. From its corresponding LDC, however, one can see that the two received signals of the Gaussian channel, $\lp y_1,y_2\rp$, are not jointly Gaussian. The reason is that, suppose they are jointly Gaussian, the conditional distribution of $y_2$ given $y_1$ should be marginally Gaussian. As Fig. \ref{fig_Examples} suggests, however, conditioning on receiver 1's signal results in a hole at the second level of receiver 2's signal, which was occupied by $a_1$. Therefore, transmitter 2's common codebook is not dense enough to make the conditional distribution of $y_2$ given $y_1$ marginally Gaussian. The incorrect assumption results in larger quantization distortions, as depicted in Fig. \ref{fig_Examples}.(c). The information sent from receiver 1 to receiver 2, $a_1$, is \emph{redundant}, and cannot help mitigate interference $a_2$. Hence, the achievable sum rate is 4 bits (3 bits for user 1 and 1 bit for user 2), which is the same as that without cooperation and is one bit smaller than the optimal one. Recall that 1 bit in the LDC corresponds to $\frac{1}{3}\log\SNR$ in the Gaussian channel, therefore the performance loss is unbounded as $\SNR\rightarrow\infty$. The main reason why conventional compress-forward does not work well is that the scheme does not well utilize the dependency between the two received signals.

On the other hand, suppose two receivers cooperate based on the decode-forward scheme. Note that there is no gain if we require both common messages to be decoded at one of the receivers at the first stage without cooperation. By symmetry we can assume that, without loss of generality, each receiver first decodes its own common message and then bin-and-forwards the decoded information to the other receiver. At the second stage, it then decodes the other user's common message with the help from cooperation and decodes its own private message. In the corresponding LDC, the common bit $a_2$ cannot be decoded at the first stage, and hence the total throughput using this strategy is at most 4 bits, which is again the same as that without cooperation. The reason why decode-forward is not good for the two receivers to cooperate is that, it is too costly to decode users' own common message at the first stage without the help from cooperation.

\section{A Two-Round Strategy}\label{sec_Achievable}
In this section we describe the two-round strategy and derive its achievable rate region.
The strategy consists of two parts: (1) the transmission scheme and (2) the cooperative protocol.
\subsection{Transmission Scheme}\label{subsec_Transmission}
We use a simple superposition coding scheme with Gaussian random codebooks. For each transmitter, it splits its own message into common and private (sub-)messages. Each common message is aimed at both receivers, while each private one is aimed at its own receiver. Each message is encoded into a Gaussian random codeword with certain power. For transmitter $i$, the power for its private and common codewords are $Q_{ip}$ and $Q_{ic}=1-Q_{ip}$ respectively, for $i=1,2$. As \cite{EtkinTse_07} points out, since the private signal is undesired at the unintended receiver, a reasonable configuration is to make the private interference at or below the noise level so that it does not cause much damage and can still convey additional information in the direct link if it is stronger than the cross link. When the interference is stronger than the desired signal, simply set the whole message to be common. In a word, for $(i,j) = (1,2)$ or $(2,1)$, $Q_{ip} = \max\lbp \frac{1}{\INR_j}, 1\rbp$ if $\SNR_i > \INR_j$, and $Q_{ip}=0$ otherwise.

\subsection{Cooperative Protocol}\label{subsec_CoopProtocol}
The cooperative protocol is two-round. We briefly describe it as follows: for $(i,j)=(1,2)$ or $(2,1)$, at the first round, receiver $j$ quantizes its received signal and sends out the bin index (described in detail below). At the second round, receiver $i$ receives this side information, decodes its desired messages (both users' common message and its own private message) with the decoder described in detail below, randomly bins the decoded common messages, and sends the bin indices to receiver $j$. Finally receiver $j$ decodes with the help from the receiver-cooperative link. We call this a two-round strategy $\STG_{j\rightarrow i\rightarrow j}$, meaning that the processing order is: receiver $j$ quantize-and-bins, receiver $i$ decode-and-bins, and receiver $j$ decodes. Its achievable rate region is denoted by $\mscr{R}_{j\ra i \ra j}$. By time-sharing, we can obtain achievable rate region $\mscr{R} := \mathrm{conv}\lbp \mscr{R}_{2\ra1\ra2} \cup \mscr{R}_{1\ra2\ra1}\rbp$, convex hull of the union of two rate regions.

\begin{remark}[Engineering interpretation]
There is a simple way to understand the strategy from an engineering perspective. To achieve $\max_{(R_1,R_2) \in \mscr{R} }\lbp \mu_1 R_1 + \mu_2 R_2\rbp$ for some non-negative $\lp \mu_1,\mu_2\rp$, the processing configuration can be easily determined: strategy $\STG_{j\ra i\ra j}$ should be used, where $i = {\arg\min}_{l=1,2}\{ \mu_l\}$ and $j = {\arg\max}_{l=1,2}\{ \mu_l\}$. In a word, the receiver which decodes last is the one we favor most. This is the high-level intuition we obtained from the discussion in the LDC in Section \ref{sec_Example}. 
\end{remark}

In the following, we describe each component in detail, including quantize-binning, decode-binning, and their corresponding decoders. For simplicity, we consider strategy $\STG_{2\ra1\ra2}$.

{\flushleft \it Quantize-binning}:\par
Upon receiving its signal from the transmitter-receiver link, receiver 2 does not decode messages immediately. Instead, serving as a relay, it first quantizes its signal by a pre-generated Gaussian quantization codebook with certain distortion, and then sends out a bin index determined by a pre-generated binning function. How should we set the distortion? As discussed in the previous section, note that both its own private signal and the noise it encounters are not of interest to receiver 1. Therefore, a natural configuration is to set the distortion level equal to {\it the maximum of noise power and private signal power level}.

{\flushleft \it Decoder at receiver 1}:\par
After retrieving the receiver-cooperative side information, that is, the bin index, receiver 1 decodes two common messages and its own private message, by searching in transmitters' codebooks for a codeword triple (indexed by the two common messages and the user's own private message) that is jointly typical with its received signal and some quantization point (codeword) in the given bin. If there is no such unique codeword triple, it declares an error.

{\flushleft \it Decode-binning}:\par
After receiver 1 decodes, it uses two pre-generated binning functions to bin the two common messages and sends out these two bin indices to receiver 2.

{\flushleft \it Decoder at receiver 2}:\par
After receiving these two bin indices, receiver 2 decodes two common messages and its own private message, by searching in transmitters' codebooks for a codeword triple that is jointly typical with its received signal and the common messages both lie in the given bins.

\begin{remark}[Difference from the conventional compress-forward]\label{rm_DiffCF}
The action of receiver 2 as a relay is very similar to that of the relay in the conventional compress-forward with Gaussian vector quantization. Note that the main difference from the conventional compress-forward with Gaussian vector quantization lies in the \emph{decoding} procedure (at receiver 1) and the chosen distortion. In the conventional Gaussian compress-forward, the decoder first searches in the bin for one quantization codeword that is jointly typical with its received signal from its own transmitter \emph{only}, assuming that the two received signals are jointly Gaussian. This may not be true since a single user may not transmit at the capacity in its own link, which results in ``holes" in signal space. As a consequence, this scheme may not utilize the dependency of two received signals well and cause larger distortions. Our scheme, on the other hand, utilizes the dependency in a better way by \emph{jointly} deciding the quantization codeword and the message triple, consequently allows smaller distortions, and is able to reveal the beneficial side information to the other receiver.

Quantize-binning and its corresponding decoding part of our scheme is very similar to \emph{extended hash-and-forward} proposed in \cite{Kim_07}, in which it is pointed out that the scheme has no advantage over conventional compress-forward in a single-source single-relay setting. Due to the above mentioned issues, however, we recognize in our problem where the channel consists of two source-destination pairs and two relays, the scheme has an unbounded advantage over the conventional compress-forward in certain regimes. This will be made clear in later sections.
\end{remark}

\subsection{Achievable Rates}\label{subsec_AchievableRate}
The following theorem establish the achievable rates of strategy $\STG_{2\ra1\ra2}$. Let $R_{ic}$ and $R_{ip}$ denote the rates for user $i$'s common message and private message respectively, for $i=1,2$.

\begin{theorem}[Achievable Rate Region for $\STG_{2\ra1\ra2}$]\label{thm_CodingThm}
The rate tuple $\lp R_{1c},R_{2c},R_{1p},R_{2p}\rp$ satisfying the following constraints are achievable:
{\flushleft \emph{\underline{Constraints at receiver 1}}:}
\begin{align}
R_{1p} &\le \min \lbp I\lp x_{1};y_1|x_{1c},x_{2c}\rp + (\C_{21} - \xi_1)^+, I\lp x_{1};y_1,\what{y}_2 | x_{1c},x_{2c}\rp \rbp \label{eq_PrivRate}\\
R_{2c} &\le \min \lbp I\lp x_{2c};y_1|x_{1}\rp + (\C_{21} - \xi_1)^+, I\lp x_{2c};y_1,\what{y}_2 | x_{1}\rp \rbp \label{eq_CommRate}\\
R_{2c}+R_{1p} &\le \min \lbp I\lp x_{2c},x_{1};y_1|x_{1c}\rp + (\C_{21} - \xi_1)^+, I\lp x_{2c},x_{1};y_1,\what{y}_2 | x_{1c}\rp \rbp\\
R_{1c}+R_{1p} &\le \min \lbp I\lp x_{1};y_1|x_{2c}\rp + (\C_{21} - \xi_1)^+, I\lp x_{1};y_1,\what{y}_2 | x_{2c}\rp \rbp \label{eq_User1Rate}\\
R_{1c}+R_{2c}+R_{1p} &\le \min \lbp I\lp x_{1},x_{2c};y_1 \rp + (\C_{21} - \xi_1)^+, I\lp x_{1},x_{2c};y_1,\what{y}_2 \rp \rbp
\end{align}
where $\xi_1 = I\lp \what{y}_2; y_2| x_{1c},x_1,x_{2c},y_1\rp$. For $i=1,2$, $x_{ic}\sim\mcal{CN}\lp0,Q_{ic}\rp$ is the common codebook generating random variable. $x_{1} = x_{1p}+x_{1c}$ is the superposition codebook generating variable, where $x_{1p}\sim\mcal{CN}\lp0,Q_{1p}\rp$ is independent of $x_{1c}$. $\what{y}_2\overset{d}{=}y_2+\what{z}_2$ is the quantization codebook generating random variable, and $\what{z}_2\sim\mcal{CN}\lp0,\Delta_2\rp$, independent of everything else. $\Delta_2$ is the quantization distortion at receiver 2.

{\flushleft \emph{\underline{Constraints at receiver 2}}:}
\begin{align}
R_{2p} &\le I\lp x_{2};y_2|x_{2c},x_{1c}\rp\\
R_{1c}+R_{2p} &\le I\lp x_{1c},x_{2};y_2|x_{2c}\rp + \C_{12}\\
R_{2c}+R_{2p} &\le I\lp x_{2};y_2|x_{1c}\rp + \C_{12}\\
R_{2c}+R_{1c}+R_{2p} &\le I\lp x_{2},x_{1c};y_2 \rp + \C_{12},
\end{align}
where $x_{2} = x_{2p}+x_{2c}$ is the superposition codebook generating variable, and $x_{2p}\sim\mcal{CN}\lp0,Q_{2p}\rp$ is independent of $x_{2c}$.
\end{theorem}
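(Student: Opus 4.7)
The plan is to prove achievability by a random coding argument combining superposition coding at the transmitters, a Wyner-Ziv style quantize-and-bin at receiver $2$, joint typicality decoding at receiver $1$, and a Slepian-Wolf style decode-and-bin on the return link. For the codebook construction: independently generate Gaussian codebooks for $x_{1c},x_{2c},x_{1p},x_{2p}$ at the stated variances and set $x_i^N=x_{ic}^N+x_{ip}^N$; independently generate a quantization codebook $\{\what{y}_2^N(\ell)\}_{\ell=1}^{2^{NR_Q}}$ i.i.d. from the marginal of $\what{y}_2\overset{d}{=}y_2+\what{z}_2$ with $R_Q=I(\what{y}_2;y_2)+\epsilon$; draw a uniform random binning function $b_{21}:[1:2^{NR_Q}]\to[1:2^{N\C_{21}}]$; and draw two uniform random binning functions for $m_{1c},m_{2c}$ with combined rate $\C_{12}$.

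At the first round the covering lemma ensures that receiver $2$ almost surely finds $\ell^\star$ with $(\what{y}_2^N(\ell^\star),y_2^N)$ jointly $\epsilon$-typical; it forwards $b_{21}(\ell^\star)$ to receiver $1$. Receiver $1$ then performs joint typicality decoding: it declares $(\hat m_{1c},\hat m_{2c},\hat m_{1p})$ if there exists such a triple together with some $\hat\ell$ in the announced bin making $(x_{1c}^N,x_{2c}^N,x_{1p}^N,\what{y}_2^N(\hat\ell),y_1^N)$ jointly typical. The true triple paired with $\ell^\star$ is jointly typical with probability $\to1$, so the error events to bound are (b) wrong message indices paired with the true $\ell^\star$, and (c) wrong message indices paired with a spurious $\hat\ell$ in the bin. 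Standard packing-lemma estimates applied to (b) yield the ``genie'' bounds $R_{\mathrm{wrong}}<I(X_{\mathrm{wrong}};y_1,\what{y}_2\mid X_{\mathrm{correct}})$, which is the second argument of each $\min$ in (\ref{eq_PrivRate})--(\ref{eq_User1Rate}). For (c) the expected number of offending pairs is at most $2^{N(R_{\mathrm{wrong}}+R_Q-\C_{21})}$, and each has joint-typicality probability of order $2^{-N[I(X_{\mathrm{wrong}};y_1\mid X_{\mathrm{correct}})+I(\what{y}_2;y_2)-\xi_1]}$ after invoking the Markov chain $\what{y}_2-y_2-(x_{1c},x_1,x_{2c},y_1)$ in the exponent; letting $R_Q\downarrow I(\what{y}_2;y_2)$ and taking the positive part when $\C_{21}<\xi_1$ collapses the requirement to $R_{\mathrm{wrong}}<I(X_{\mathrm{wrong}};y_1\mid X_{\mathrm{correct}})+(\C_{21}-\xi_1)^+$, which is the first argument of each $\min$. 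Enumerating the nontrivial subsets of $(m_{1c},m_{2c},m_{1p})$ that can be wrong produces exactly the five listed inequalities at receiver $1$.

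At the second round receiver $1$ randomly bins the decoded $m_{1c},m_{2c}$ with combined rate $\C_{12}$ and forwards the indices to receiver $2$. Receiver $2$ performs joint typicality decoding of $(m_{1c},m_{2c},m_{2p})$ restricted to the announced bins. A Slepian-Wolf style analysis -- treating the forwarded bin indices as side information about the common-message pair -- yields the standard MAC constraints at receiver $2$, in which every sum containing at least one common-message rate inherits an extra $\C_{12}$ bits of slack, reproducing the second block of inequalities in the statement. Averaging over the ensemble, expurgating, and optimising the split of $\C_{12}$ between the two bin indices realises any rate tuple in $\mscr{R}_{2\ra1\ra2}$.

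The main obstacle is the careful accounting in error event (c) at receiver $1$. One must simultaneously track the $2^{N(R_Q-\C_{21})}$ rival quantization codewords that happen to share the announced bin and the combinatorially many wrong message triples, and arrange the Markov-chain algebra so that the joint-typicality exponent collapses to the compact form $I(X_{\mathrm{wrong}};y_1\mid X_{\mathrm{correct}})+(\C_{21}-\xi_1)^+$. The resulting $\min$ structure -- and in particular the fact that the genie-bounded term $I(\cdot;y_1,\what{y}_2\mid\cdot)$ may be strictly smaller than the hash-bounded term for some channel regimes -- is exactly what enables the improved compress-forward to exploit the dependency between $y_1$ and $y_2$ more effectively than the conventional Gaussian compress-forward, and is the technical origin of the advantage flagged in Remark \ref{rm_DiffCF}.
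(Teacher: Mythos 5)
Your proposal follows the paper's proof almost step for step: the codebook construction is identical, your error events (b) and (c) at receiver~1 are precisely the paper's $E^{(2)}_S$ and $E^{(1)}_S$, the Markov-chain rewriting of the exponent to produce $\xi_1=I(\what{y}_2;y_2\mid x_{1c},x_1,x_{2c},y_1)$ matches the paper's manipulation, and the receiver-2 side (Slepian--Wolf binning with a $\lambda$-split of $\C_{12}$ followed by convexification) is the same as well. So this is the paper's approach, correctly identified and correctly organized.

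There is one real gap, concentrated in the sentence ``letting $R_Q\downarrow I(\what{y}_2;y_2)$ and taking the positive part when $\C_{21}<\xi_1$ collapses the requirement.'' The union bound you set up for event~(c) — counting $2^{N(R_{\mathrm{wrong}}+R_Q-\C_{21})}$ offending pairs, each jointly typical with probability about $2^{-N[I(X_{\mathrm{wrong}};y_1\mid X_{\mathrm{correct}})+I(\what{y}_2;y_2)-\xi_1]}$ — yields exactly $R_{\mathrm{wrong}}<I(X_{\mathrm{wrong}};y_1\mid X_{\mathrm{correct}})+\C_{21}-\xi_1$ with \emph{no} positive part, which is strictly weaker than the stated constraint whenever $\C_{21}<\xi_1$. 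The positive part does not fall out of that calculation; it requires a second, independent bound on $\Pr\{E^{(1)}_S\}$. Concretely, the event~(c) for a given wrong triple $\ul{m}$ is contained in the event $\{\ul{m}\in L(y_1^N)\}$ (i.e., the wrong triple must at least be jointly typical with $y_1^N$ before any quantization codeword can make it look plausible), so one also has $\Pr\{E^{(1)}_S\}\le 2^{NR_{\mathrm{wrong}}}\Pr\{\ul{m}\in L(y_1^N)\}$, giving the condition $R_{\mathrm{wrong}}<I(X_{\mathrm{wrong}};y_1\mid X_{\mathrm{correct}})$ regardless of $\C_{21}$. It is the \emph{minimum} of the two bounds — your $2^{N(R_{\mathrm{wrong}}+R_Q-\C_{21})}\cdot(\text{pair probability})$ bound and this cruder one — that produces $(\C_{21}-\xi_1)^+$. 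The paper spells this out explicitly (it rewrites $\Pr\{E^{(1)}_S\}$ as $\sum_{\ul{m}}\Pr\{\ul{m}\in L\}\cdot\Pr\{\exists\,k\ne 1 \text{ in bin}\mid \ul{m}\in L\}$ and bounds the conditional factor by $1$). Without this observation, the argument only establishes the weaker rate region in the $\C_{21}<\xi_1$ regime.
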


\begin{proof}
For details, see Appendix \ref{app_ErrorAnalysis}. Here we give some high-level comments on these rate constraints. First, unlike interference channels without cooperation, here receiver 1 is required to decode $m_{2c}$ correctly so that it can help receiver 2. This additional requirement gives the rate constraint \eqref{eq_CommRate} on $R_{2c}$. 

Second, in the set of constraints at receiver 1, on the right-hand side they are all minimum of two terms. The second term corresponds to the case when the receiver-cooperative link is strong enough to convey the quantized $\what{y}_2^N$ correctly. The first term corresponds to the case when receiver 1 can only figure out a set of candidates of quantized $\what{y}_2^N$. In Section \ref{sec_Example} we see that, in the LDC as long as the quantization level is chosen such that no private signals pollute the cooperative information, the cooperation from receiver 2 to 1 is able to gain $\C_{21}$ comparing with the case without cooperation. In the Gaussian channel, however, due to the carry-over of real additions and the Gaussian vector quantization, there is a rate loss of $\xi_1$ bits (which is at most a constant number of bits if we choose the quantization distortion properly). In fact, $\xi_1 = I\lp \what{y}_2; y_2| x_{1c},x_1,x_{2c},y_1\rp$ corresponds to the number of private bits polluting the cooperative linear combinations in the LDC if one does not choose the quantization distortion properly.

Finally, in the set of constraints at receiver 2, since receiver 1 only help receiver 2 decoding $m_{1c}$ and $m_{2c}$, there is no enhancement in $R_{2p}$.
\end{proof}


We shall use the following shorthand notations throughout the rest of the paper: for $(i,j)=(1,2),(2,1)$,
\begin{align}
\SNR_{ip} &:= |h_{ii}|^2Q_{ip} = \SNR_i\cdot Q_{ip},\\
\INR_{ip} &:= |h_{ij}|^2Q_{jp} = \INR_i\cdot Q_{jp}.
\end{align}

Next, we quantify the ``rate loss" term $\xi_1$ in the set of rate constraints at receiver 1, in terms of distortions $\Delta_2$:
\begin{align}
\xi_1 &= I(\what{y}_2;y_2|x_{1c},x_1,x_{2c},y_1) = h(\what{y}_2|x_{1c},x_1,x_{2c},y_1) - h(\what{y}_2|x_{1c},x_1,x_{2c},y_1,y_2)\\
&= h\lp h_{22}x_{2p}+z_2+\what{z}_2|h_{12}x_{2p}+z_1\rp - h\lp \what{z}_2\rp\\
&= \left\{\begin{array}{ll}\log\lp \frac{1+\Delta_2}{\Delta_2} + \frac{\SNR_{2p}}{(1+\INR_{1p})\Delta_2}\rp, &\SNR_2>\INR_1\\ \log\lp \frac{1+\Delta_2}{\Delta_2}\rp, &\SNR_2\le\INR_1\end{array}\right.\\ 
\label{eq_RateLoss}
&\overset{}{\le} \log\lp \frac{1+\Delta_2+\SNR_{2p}}{\Delta_2}\rp,
\end{align} 

Below we shall see why the intuition of quantizing at the private signal level works. By choosing $\Delta_2 = \max\{\SNR_{2p},1\}$, the ``rate loss" $\xi_1$ is upper bounded by a constant. In particular, when $\SNR_2 > \INR_1$, $\xi_1 \le \log3$; when $\SNR_2 \le \INR_1$, $\xi_1 = 1$. On the other hand, note that for receiver 1 the unwanted signal power level in $y_2$ is roughly $\max\{\SNR_{2p},1\}$, and hence replacing $\what{y}_2$ by $y_2$ gains at most a constant number of bits.

\begin{remark}
The above configuration of the distortion may not be optimal. The achievable rates can be further improved if we optimize over all possible distortions. For example, if the cooperative link capacity is huge, one could lower the distortion level to yield a finer description of received signals. With the above simple configuration, however, we are able to show that it achieves the capacity region to within a constant number of bits universally.
\end{remark}

\section{Characterization of the Capacity Region to within 2 Bits}\label{sec_AsymmConstantGap}

The main result in this section is the characterization of the capacity region to within 2 bits per user universally, regardless of channel parameters. To prove it, first we provide outer bounds of the capacity region. Ideas about how to prove them are outlined, and details are left in appendices. Then we make use of Theorem \ref{thm_CodingThm} to evaluate the achievable rate region, and show that it is within 2 bits per user to the proposed outer bounds.

\subsection{Outer Bounds}
To prove the outer bounds, the main idea is the following: first, upper bound the rates by mutual informations via Fano's inequality and data processing inequality; second, decompose them into two parts: (1) terms which are similar to those in Gaussian interference channels without cooperation, and (2) terms which correspond to the enhancement from cooperation. We use the genie-aided techniques in \cite{EtkinTse_07} to upper bound the first part and obtain namely the Z-channel bound (where the genie gives interfering symbols $x^N_j$ to receiver $i$, $i\ne j$) and ETW-bound (where the genie gives the interference term caused by user $i$ at receiver $j$, ${s}^N_i:=h_{ji}x^N_i+{z}^N_{j}$ to receiver $i$). For the second part, we make use of the fact that $u_{12}^N$ and $u_{21}^N$ are both functions of $(y_1^N,y_2^N)$, and other straightforward bounding techniques. The results are summarized in the following lemma.

%
\begin{lemma}\label{lem_OutBd}
$\mscr{C} \subseteq \overline{\mscr{C}}$, where $\overline{\mscr{C}}$ consists of nonnegative rate tuples $(R_1,R_2)$ satisfying
\begin{align}
\begin{split}
R_1 &\le \log(1+\SNR_1)+ \min\left\{ \C_{21},\log\left(1+\frac{\INR_2}{1+\SNR_1}\right) \right\}\\
R_2 &\le \log(1+\SNR_2)+ \min\left\{ \C_{12},\log\left(1+\frac{\INR_1}{1+\SNR_2}\right) \right\}
\end{split}\label{eq_CutSetBd} \\
R_1+R_2 &\le \log\left(1+\INR_1+\frac{\SNR_1}{1+\INR_2}\right) + \log\left(1+\INR_2+\frac{\SNR_2}{1+\INR_1}\right) + \C_{21}+\C_{12} \label{eq_ETWBound}\\
\begin{split}
R_1+R_2 &\le \log\left(1+\SNR_2+\INR_2\right) + \log\left(1+\frac{\SNR_1}{1+\INR_2}\right) + \C_{12}\\
R_1+R_2 &\le \log\left(1+\SNR_1+\INR_1\right) + \log\left(1+\frac{\SNR_2}{1+\INR_1}\right) + \C_{21}
\end{split} \label{eq_ZBound}\\
R_1+R_2 &\le \log\lp 1+\SNR_1+\SNR_2+\INR_1+\INR_2 + |h_{11}h_{22} - h_{12}h_{21}|^2 \rp \label{eq_SIMOBd}\\
\begin{split}
2R_1 + R_2 & \le \log\left(1+\INR_2+\frac{\SNR_2}{1+\INR_1}\right) + \log\left(1+\frac{\SNR_1}{1+\INR_2}\right)\\&\quad + \log\lp 1+ \SNR_1+\INR_1\rp + \C_{21}+\C_{12}\\
R_1 + 2R_2 & \le \log\left(1+\INR_1+\frac{\SNR_1}{1+\INR_2}\right) + \log\left(1+\frac{\SNR_2}{1+\INR_1}\right)\\&\quad + \log\lp 1+ \SNR_2+\INR_2\rp + \C_{12}+\C_{21}
\end{split} \label{eq_Slope2Bd}\\
\begin{split}
2R_1+R_2 &\le \log\lp 1+ \frac{\SNR_2}{1+\INR_1} + \INR_2 + \SNR_1 + \frac{\INR_1}{1+\INR_1} + \frac{|h_{11}h_{22} - h_{12}h_{21}|^2}{1+\INR_1}\rp\\
&\quad + \log\lp1+\SNR_1+\INR_1\rp + \C_{21}\\
R_1+2R_2 &\le \log\lp 1+ \frac{\SNR_1}{1+\INR_2} + \INR_1 + \SNR_2 + \frac{\INR_2}{1+\INR_2} + \frac{|h_{11}h_{22} - h_{12}h_{21}|^2}{1+\INR_2}\rp\\
&\quad + \log\lp1+\SNR_2+\INR_2\rp + \C_{12}
\end{split} \label{eq_Slope2CutSetBd}
\end{align}
\end{lemma}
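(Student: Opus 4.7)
The plan is to establish each bound by combining Fano's inequality, the Etkin--Tse--Wang genie-aided technique of \cite{EtkinTse_07}, and a clean separation of the cooperation contribution. The starting point is that for any achievable rate pair Fano's inequality gives
\begin{align*}
N R_i \le I(m_i; y_i^N, u_{ji}^N) + N \epsilon_N, \qquad i \ne j.
\end{align*}
A short induction on the causality constraints shows that $u_{12}^N$ and $u_{21}^N$ are deterministic functions of $(y_1^N, y_2^N)$, which provides two complementary ways to handle the cooperation. Either one uses the entropy bound $I(m_i; u_{ji}^N \mid y_i^N,\cdot) \le H(u_{ji}^N) \le N \C_{ji}$, which peels the cooperation off as an additive $N \C_{ji}$ term on top of a non-cooperative mutual information, or one replaces $u_{ji}^N$ by $y_j^N$ outright, obtaining a cooperation-free, MIMO-type outer bound. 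The remainder of the proof consists of inserting an appropriate genie and then choosing one of these two handlings on each side.

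For the single-rate bound \eqref{eq_CutSetBd} I would give $x_j^N$ as a genie to receiver $i$: the entropy route yields $\log(1+\SNR_i)+\C_{ji}$, and the MIMO route $I(x_i^N; y_1^N, y_2^N \mid x_j^N) \le N\log(1+\SNR_i+\INR_j)$ rewrites algebraically as $N[\log(1+\SNR_i)+\log(1+\INR_j/(1+\SNR_i))]$; taking the smaller of the two gives \eqref{eq_CutSetBd}. For the SIMO-type sum bound \eqref{eq_SIMOBd} I would skip the genie entirely and write $R_1+R_2 \le I(m_1,m_2; y_1^N, y_2^N)/N + \epsilon_N$, then bound the right-hand side by $\log\det\Sigma$, where $\Sigma$ is the covariance of $(y_1,y_2)$ under independent unit-power Gaussian inputs; expanding the determinant produces the cross term $|h_{11}h_{22}-h_{12}h_{21}|^2$. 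For the ETW sum bound \eqref{eq_ETWBound} the ETW side information $s_i^N = h_{ji} x_i^N + z_j^N$ is fed to each receiver and the cooperation is peeled off by the entropy route on both sides, yielding the two ETW logarithms plus $\C_{12}+\C_{21}$. For the Z-bounds \eqref{eq_ZBound} the genie $x_j^N$ is given to receiver $i$; on the receiver-$i$ side the cooperation is absorbed via the MIMO route so that no $\C$ appears there, and on the receiver-$j$ side the entropy route leaves only the single cooperation rate $\C_{ij}$.

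For the weighted bounds \eqref{eq_Slope2Bd} and \eqref{eq_Slope2CutSetBd} I would combine an ETW-style genie for the user that appears twice with a MIMO- or Z-style genie for the single-appearance user, chosen so that the chain-rule cancellations telescope into the specific logarithms on the right-hand side. Bound \eqref{eq_Slope2Bd} carries both cooperation rates, so the entropy route is used on both receivers; bound \eqref{eq_Slope2CutSetBd} carries only one cooperation rate, which forces one of the $R_1$ copies to go through the MIMO route conditional on $x_2$ (this is what produces the determinant term $|h_{11}h_{22}-h_{12}h_{21}|^2/(1+\INR_1)$) while the other $R_1$ and the $R_2$ copy are handled through the ETW and single-receiver cut-set style bounds respectively.

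The main obstacle I anticipate is the algebraic bookkeeping in \eqref{eq_Slope2CutSetBd}: the conditional-MIMO determinant coming from the receiver-1 side must line up exactly with the ETW chain-rule cancellations on the receiver-2 side, and the cooperation rate $\C_{21}$ must be counted exactly once. All of the other bounds should follow fairly mechanically once the two cooperation-handling identities above are in place.
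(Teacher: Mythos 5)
Your overall strategy matches the paper's: Fano plus a genie, with the cooperation handled either by the entropy bound $I(\cdot;u_{ji}^N\mid\cdot)\le H(u_{ji}^N)\le N\C_{ji}$ or by replacing $u_{ji}^N$ with $y_j^N$ (valid because $u_{ji}^N$ is a function of $(y_1^N,y_2^N)$). Your treatment of the bounds \eqref{eq_CutSetBd}, \eqref{eq_ETWBound}, \eqref{eq_ZBound}, and \eqref{eq_SIMOBd} is essentially the paper's, and your outline for \eqref{eq_Slope2Bd} is in the right spirit, though you have the genie assignment backwards: for $2R_1+R_2$ it is the double-appearance user (receiver~1) whose one copy is fed the Z-style genie $(x_2^N,y_2^N)$, and the single-appearance receiver~2 that gets the ETW side information $s_2^N$.

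The real gap is in your plan for \eqref{eq_Slope2CutSetBd}. You claim that running one $R_1$ copy through ``the MIMO route conditional on $x_2$'' is what produces the term $|h_{11}h_{22}-h_{12}h_{21}|^2/(1+\INR_1)$. It does not. If you condition the two-antenna output $(y_1,y_2)$ on $x_2$, the remaining system is rank one and the covariance determinant collapses to $1+\SNR_1+\INR_2$; no cross-determinant term $|h_{11}h_{22}-h_{12}h_{21}|^2$ can appear. The paper instead gives the receiver a \emph{noisy} copy of the interference, $\wtild{s}_i^N = h_{ji}x_i^N + \wtild{z}_j^N$ with fresh noise $\wtild{z}_j$, and conditions on that. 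Because $\wtild{s}_i$ only partially reveals $x_i$ (MMSE residual power $1/(1+\INR_i)$), the conditional covariance of $(y_1,y_2)$ stays full rank, and its determinant produces exactly $1+ \frac{\SNR_j}{1+\INR_i}+\INR_j+\SNR_i+\frac{\INR_i}{1+\INR_i}+\frac{|h_{11}h_{22}-h_{12}h_{21}|^2}{1+\INR_i}$. The companion side information $y_j^N$ to receiver $i$ and $y_i^N$ to one copy of receiver $j$ lets both $u_{12}^N$ and $u_{21}^N$ be absorbed via the MIMO route, leaving only the single $\C_{ij}$ from the remaining entropy-routed copy. Without switching from $x_j^N$ to $\wtild{s}_i^N$ as the conditioning genie, the bound in \eqref{eq_Slope2CutSetBd} cannot be derived, and the constant-gap argument that depends on it would fail.
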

\begin{proof}
Details are left in Appendix \ref{app_PfOutBd}. Below we give a short outline and intuitions. First of all, bounds (\ref{eq_CutSetBd}) and (\ref{eq_SIMOBd}) are straightforward cut-set upper bounds of individual rates and sum rate respectively. 

Bound (\ref{eq_ETWBound}) corresponds to the ETW-bound in Gaussian interference channels without cooperation. In the genie-aided channel, we upper bound the gain from receiver cooperation by $\C_{12}+\C_{21}$, that is, in both directions each bit is useful. 

Bounds (\ref{eq_ZBound}) correspond to the Z-channel bounds. In the genie-aided channel, since the genie gives interfering symbols $x^N_j$ to receiver $i$, $i\ne j$, there is no interference at receiver $i$. Intuitively, the cooperation from receiver $j$ to $i$ is now providing only the power gain, and the genie can provide $y_j^N$ to receiver $i$ to upper bound this power gain. The gain from the cooperation from receiver $i$ to $j$ is upper bounded by $\C_{ij}$. 

Bounds (\ref{eq_Slope2Bd}) on $R_i+2R_j$ are derived by giving side information ${s}_i^N$ to receiver $i$ and side information $x_i^N$ and $y_i^N$ to one of the receiver $j$'s. In the genie-aided channel there is an underlying Z-channel structure, and hence the gain from one direction of the cooperation is absorbed into a power gain. The rest is upper bounded by $\C_{12}+\C_{21}$.

Bounds (\ref{eq_Slope2CutSetBd}) on $R_i+2R_j$ are derived by giving side information $y_j^N$ and $\wtild{s}_i^N:=h_{ji}x^N_i+\wtild{z}^N_{j}$, where $\wtild{z}_j\sim\mcal{CN}(0,1)$ and independent of everything else, to receiver $i$ and side information $y_i^N$ to one of the receiver $j$'s. In the genie-aided channel, there is an underlying point-to-point MIMO channel, and hence the gain from both directions of cooperation is absorbed into the MIMO system. The rest is upper bounded by $\C_{ij}$.

Note that the derivation of all bounds are irrelevant to the relations among $\INR$'s and $\SNR$'s.
\end{proof}

We make the following observations:

\begin{remark}[Dependence on phases]
The sum-rate cut-set bound (\ref{eq_SIMOBd}) not only depends on $\SNR$'s and $\INR$'s but also on the phases of channel coefficients, due to the term $|h_{11}h_{22} - h_{12}h_{21}|^2$.
In particular, when the receiver-cooperative link capacities $\C$'s are large, the two receivers become near-fully cooperated, and the system performance is constrained by that of the SIMO MAC, that is, it enters the saturation region. Therefore this bound becomes active and the outer bound depends on phases.
\end{remark}

\begin{remark}[Strong interference regime]
When $\SNR_1 \le \INR_2$ and  $\SNR_2 \le \INR_1$, unlike the Gaussian interference channel of which the capacity region is equal to that of a compound MAC in the strong interference regime \cite{Sato_81}, here we cannot apply Sato's argument. Recall that when there is no cooperation, once user $i$'s own message is decoded successfully at receiver $i$, it can produce $\wtild{y}_j^N$ which has the same distribution as $y_j^N$. Since the error probability for decoding user $j$'s message at receiver $j$ only depends on the {\it marginal} distribution of $y_j^N$, it can be concluded that at receiver $i$ one can achieve the same performance for decoding user $j$'s message by using the same decoder as that in receiver $j$, and hence receiver $i$ can decode user $j$'s message successfully as well. When there is cooperation, however, the error probability for decoding user $j$'s message at receiver $j$ depends on the {\it joint} distribution of $(y_j^N, u_{ij}^N)$. Note that the additive noise terms in $\wtild{y}_j^N$ and $y_j^N$ have different correlations with the noise term $z_i^N$, and $u_{ij}^N$ can be highly correlated with $z_i^N$. As a consequence, the joint distributions of $(y_j^N, u_{ij}^N)$ and $(\wtild{y}_j^N, u_{ij}^N)$ are not guaranteed to be the same, and receiver $i$ may not be able to achieve the same performance for decoding user $j$'s message by using the same decoder as that in receiver $j$. Therefore, we cannot claim that the capacity region under strong interference condition is the same as that of compound MAC with conferencing decoders (CMAC-CD). Instead, we take the Z-channel bound (\ref{eq_ZBound}), which is within 1 bit to the sum rate cut-set bound of CMAC-CD in strong interference regimes. This will be discussed in the last part of this section.
\end{remark}

\subsection{Capacity Region to within 2 bits}

Subsequently we investigate three qualitatively different cases, namely, weak interference, mixed interference, and strong interference\footnote{
We distinguish the general set-up into three qualitatively different cases: (1) weak interference, where $\SNR_1 > \INR_2$ and $\SNR_2 > \INR_1$; (2) mixed interference, where $\SNR_1 > \INR_2$ and $\SNR_2 \le \INR_1$; (3) strong interference, where $\SNR_1 \le \INR_2$ and $\SNR_2 \le \INR_1$.}, in the rest of this section. We summarize the main achievability result in the following theorem: (recall that $\overline{\mscr{C}}$ is the outer bound region defined in Lemma \ref{lem_OutBd})
\begin{theorem}[Within constant gap to capacity region]\label{thm_ApproxCapacity}
\begin{align}
\mscr{R} \subseteq \mscr{C} \subseteq \overline{\mscr{C}} \subseteq \mscr{R} \oplus \big( [0,2]\times[0,2] \big),
\end{align}
\end{theorem}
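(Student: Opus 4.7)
The plan is to turn the constant-gap claim into a term-by-term comparison between a Fourier–Motzkin projection of Theorem \ref{thm_CodingThm} and the bounds in Lemma \ref{lem_OutBd}. Concretely, I would first specialize the achievable region $\mscr{R}_{2\to1\to2}$ to Gaussian codebooks with the power split $Q_{ip}=\min\{1/\INR_j,1\}$ when $\SNR_i>\INR_j$ and $Q_{ip}=0$ otherwise, and with quantization distortion $\Delta_2=\max\{\SNR_{2p},1\}$. With this choice, every mutual-information term in Theorem \ref{thm_CodingThm} becomes an explicit $\log(1+\cdot)$ expression, and the ``rate loss'' $\xi_1$ is controlled uniformly: as computed in the text, $\xi_1\le\log 3$ when $\SNR_2>\INR_1$ and $\xi_1=1$ otherwise. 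Similarly, the substitution of $\what y_2$ by $y_2$ costs at most a constant, because the unwanted content $h_{22}x_{2p}+z_2$ in $y_2$ has power at most $2\max\{\SNR_{2p},1\}$.

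Next I would Fourier–Motzkin eliminate $(R_{1c},R_{2c},R_{1p},R_{2p})$ subject to $R_i=R_{ic}+R_{ip}$, to obtain an inner region in the $(R_1,R_2)$-plane. The surviving constraints naturally group into: two individual-rate bounds, two single-receiver sum-rate bounds (one from the receiver-1 side, one from the receiver-2 side), a ``cross'' sum-rate bound that adds constraints from both receivers, and two weighted bounds on $2R_1+R_2$ and $R_1+2R_2$. This matches, one-for-one, the six families in Lemma \ref{lem_OutBd}: the individual cut-set bounds \eqref{eq_CutSetBd}, the ETW bound \eqref{eq_ETWBound}, the two Z-channel bounds \eqref{eq_ZBound}, the SIMO bound \eqref{eq_SIMOBd}, and the two weighted bounds \eqref{eq_Slope2Bd}, \eqref{eq_Slope2CutSetBd}. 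For each pair I would subtract the achievable expression from its outer counterpart and verify that the difference is bounded by a small absolute constant; the per-user normalization (dividing the sum-rate gaps by $2$, the weighted-sum gaps by $3$, and allocating the individual slack) then yields the $2$-bit gap per user.

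To handle the $2R_1+R_2$ vs.\ $R_1+2R_2$ asymmetry, I would use the time-sharing construction $\mscr{R}=\conv\{\mscr{R}_{2\to1\to2}\cup\mscr{R}_{1\to2\to1}\}$: the bound on $2R_1+R_2$ is matched using $\STG_{1\to2\to1}$ (which favors user $1$, since the last decoder is receiver $1$) and vice versa, as anticipated by the engineering remark in Section \ref{sec_Achievable}. The three interference regimes (weak/mixed/strong) would be treated in parallel, since the power split changes the arithmetic of $\INR_{ip},\SNR_{ip}$: in weak interference all four quantities appear, in strong interference $Q_{ip}=0$ kills every $\SNR_{ip},\INR_{ip}$ term and several achievable expressions collapse into a cleaner form, and mixed interference interpolates. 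In the strong case the gap to the ETW/Z bounds needs a small extra argument because, as discussed in the remark after Lemma \ref{lem_OutBd}, we cannot invoke Sato's compound-MAC reduction; instead I would lean on the Z-bound \eqref{eq_ZBound} which is already within a constant of the SIMO/CMAC-CD sum bound.

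The main obstacle is the Fourier–Motzkin bookkeeping combined with the min-of-two-terms structure of the receiver-$1$ constraints in Theorem \ref{thm_CodingThm}: each achievable inequality is really the minimum of a ``$+\,(\C_{21}-\xi_1)^+$'' branch and a ``$y_1,\what y_2$'' branch, and one must argue that the active branch always lines up with the corresponding outer bound. The clean way to do this is to split each regime by whether $\C_{21}$ is above or below $\log(1+\INR_2/(1+\SNR_1))$ (and symmetrically for $\C_{12}$): when it is below, the first branch is active and pairs with the cooperation-linear term in the outer bound; when it is above, the second branch is active and pairs with the saturated SIMO-type term. Once this dichotomy is in place, bounding the residual mutual-information differences term-wise (using elementary inequalities like $\log(1+a+b)\le\log(1+a)+\log(1+b)$ and $\SNR_{ip}\le 1$) produces the $2$-bit gap.
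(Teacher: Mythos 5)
Your plan follows the same broad pipeline as the paper (Gaussian codebooks with the stated power split, $\Delta_2=\max\{\SNR_{2p},1\}$, Fourier--Motzkin, term-by-term gap computation, three regimes, time-sharing of $\mscr{R}_{2\to1\to2}$ and $\mscr{R}_{1\to2\to1}$), but there is a concrete gap in the weak-interference case that your proposal does not address. You assume that after Fourier--Motzkin each surviving constraint lines up ``one-for-one'' with one of the six families in Lemma~\ref{lem_OutBd}. That is not true: the FM elimination of $\mscr{R}_{2\to1\to2}$ produces a $2R_1+R_2$ bound (the one the paper labels \eqref{eq_Trouble}) of the form
\begin{align}
2R_1+R_2 \le I\lp x_1,x_{2c}; y_1,\what{y}_2\rp + I\lp x_1; y_1| x_{1c},x_{2c}\rp + I\lp x_{1c},x_2; y_2| x_{2c}\rp + \C_{12},
\end{align}
which has no $\C_{21}$ term at all. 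If you tried to compare this directly against either $2R_1+R_2$ outer bound in Lemma~\ref{lem_OutBd}---each of which scales at least with $\C_{21}$ (in \eqref{eq_Slope2Bd}) or with the SIMO term plus $\C_{21}$ (in \eqref{eq_Slope2CutSetBd})---the gap would be unbounded as $\C_{21}\to\infty$. So a naive term-by-term pairing fails here.

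The paper resolves this with a structural argument (Claim~\ref{claim_Corner}): whenever \eqref{eq_Trouble} is active, the relevant corner points of $\mscr{R}_{2\to1\to2}$ (where $R_1+R_2$ meets $R_1+2R_2$, or where $R_1+R_2$ meets $R_2$) can be shown to be achievable anyway, by exhibiting these corner points as an average of other achievable bounds; consequently \eqref{eq_Trouble} never cuts into the region that matters, and after taking $\mathrm{conv}\{\mscr{R}_{2\to1\to2}\cup\mscr{R}_{1\to2\to1}\}$ the troublesome inequality (and its mirror in $\mscr{R}_{1\to2\to1}$) simply does not appear as a facet. Your proposal's remark that ``$2R_1+R_2$ is matched using $\STG_{1\to2\to1}$'' gestures toward this but is not the right fix: most $2R_1+R_2$ bounds in $\mscr{R}_{2\to1\to2}$ do pair with outer bounds directly (with a $\log 24$-type gap); it is specifically \eqref{eq_Trouble} that is unmatched and requires the corner-point redundancy argument plus convexification, not a straightforward swap to the other processing order. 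You would need to add this step.

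A secondary, smaller mismatch: your plan to dichotomize on whether $\C_{21}$ is above or below $\log(1+\INR_2/(1+\SNR_1))$ to decide which branch of the $\min$ is active is not how the paper proceeds. The paper simply observes that for the single-message constraints \eqref{eq_PrivRate}, \eqref{eq_CommRate}, \eqref{eq_User1Rate} the two branches of the $\min$ differ by at most a constant, and so replaces each $\min$ by its smaller, $\C_{21}$-free branch (e.g.\ $I(x_1;y_1|x_{1c},x_{2c})$) before running Fourier--Motzkin. The remaining $\min$-of-two constraints are kept and propagated through FM as paired inequalities, which is what generates the pairs like \eqref{eq_SumBd1}/\eqref{eq_SumBd2}. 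Your threshold dichotomy could probably be made to work, but it is a heavier-handed bookkeeping device and does not by itself rescue you from the \eqref{eq_Trouble} problem above.
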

\begin{proof}
Proved by Lemma \ref{lem_WeakGap}, \ref{lem_MixedGap}, and \ref{lem_StrongGap} in the rest of this section.
\end{proof}

\subsection{Weak interference}
In the case $\SNR_1 > \INR_2$ and $\SNR_2 > \INR_1$, the superposition coding configuration is to split message $m_i$ into $m_{ic}$ and $m_{ip}$, for both users $i=1,2$. We first consider $\STG_{2\ra1\ra2}$: referring to Theorem \ref{thm_CodingThm}, we obtain the set of achievable rates $\lp R_{1c},R_{2c},R_{1p},R_{2p}\rp$. The term $\xi_1 \le \log 3 \approx 1.59$ bits, due to \eqref{eq_RateLoss} in Section \ref{subsec_AchievableRate} and the chosen distortion $\Delta_2 = \max\{\SNR_{2p},1\}$.


To simplify calculations, note that the right-hand-side of \eqref{eq_PrivRate} is at most a constant number of bits greater that its lower bound $I\lp x_{1};y_1|x_{1c},x_{2c}\rp$, the right-hand-side of \eqref{eq_CommRate} is at most a constant number of bits greater that its lower bound $I\lp x_{2c};y_1|x_{1}\rp$, and the right-hand-side of \eqref{eq_User1Rate} is at most a constant number of bits greater that its lower bound $I\lp x_{1};y_1|x_{2c}\rp$. Therefore, we replace these three constraints by
\begin{align}
R_{1p} &\le I\lp x_{1};y_1|x_{1c},x_{2c}\rp\\
R_{2c} &\le I\lp x_{2c};y_1|x_{1}\rp\\
R_{1c}+R_{1p} &\le I\lp x_{1};y_1|x_{2c}\rp
\end{align}
in the following calculations. Next, rewriting $R_{ip} = R_i - R_{ic}$ for $i=1,2$, applying Fourier-Motzkin algorithm to eliminate $R_{1c}$ and $R_{2c}$, and removing redundant terms (details omitted here), we obtain an achievable $\mscr{R}_{2\ra1\ra2}$, which consists of nonnegative $(R_1,R_2)$ satisfying:
\begin{align}
R_1 &\le \min \big\{ I\lp x_1; y_1| x_{2c}\rp, I\lp x_1;y_1| x_{1c},x_{2c}\rp + I\lp x_{1c},x_2; y_2| x_{2c}\rp +\C_{12} \big\}\\
R_2 &\le \min \big\{ I\lp x_2; y_2| x_{1c}\rp + \C_{12}, I\lp x_{2c}; y_1| x_1\rp + I\lp x_2; y_2| x_{1c}, x_{2c}\rp \big\}\\
R_1+R_2 &\le I\lp x_1,x_{2c}; y_1\rp + I\lp x_2; y_2| x_{1c},x_{2c}\rp + \lp\C_{21}-\xi_1\rp^+  \label{eq_SumBd1}\\
R_1+R_2 &\le I\lp x_1,x_{2c}; y_1,\what{y}_2\rp + I\lp x_2; y_2| x_{1c},x_{2c}\rp  \label{eq_SumBd2}\\
R_1+R_2 &\le I\lp x_1,x_{2c}; y_1| x_{1c}\rp + I\lp x_{1c},x_2; y_2| x_{2c}\rp + \C_{12} + \lp\C_{21}-\xi_1\rp^+ \label{eq_SumBd3}\\
R_1+R_2 &\le I\lp x_1,x_{2c}; y_1,\what{y}_2| x_{1c}\rp + I\lp x_{1c},x_2; y_2| x_{2c}\rp + \C_{12} \label{eq_SumBd4}\\
R_1+R_2 &\le I\lp x_1; y_1| x_{1c},x_{2c}\rp + I\lp x_{1c},x_2; y_2\rp + \C_{12} \label{eq_SumBd5}\\
R_1+R_2 &\le I\lp x_1; y_1| x_{1c},x_{2c}\rp + I\lp x_{2c};y_1|x_1\rp + I\lp x_{1c},x_2; y_2| x_{2c}\rp + \C_{12} \label{eq_SumBd6}\\
2R_1+R_2 &\le I\lp x_1,x_{2c}; y_1\rp + I\lp x_1; y_1| x_{1c},x_{2c}\rp + I\lp x_{1c},x_2; y_2| x_{2c}\rp + \C_{12} + \lp\C_{21}-\xi_1\rp^+\\ 
2R_1+R_2 &\le I\lp x_1,x_{2c}; y_1,\what{y}_2\rp + I\lp x_1; y_1| x_{1c},x_{2c}\rp + I\lp x_{1c},x_2; y_2| x_{2c}\rp + \C_{12}  \label{eq_Trouble}
\\ 
R_1+2R_2 &\le I\lp x_1,x_{2c}; y_1| x_{1c}\rp + I\lp x_{1c},x_2; y_2\rp + I\lp x_2; y_2| x_{1c},x_{2c}\rp + \C_{12} + \lp\C_{21}-\xi_1\rp^+\\
R_1+2R_2 &\le I\lp x_1,x_{2c}; y_1| x_{1c}\rp + I\lp x_{2c};y_1|x_1\rp + I\lp x_{1c},x_2; y_2| x_{2c}\rp + I\lp x_2; y_2| x_{1c},x_{2c}\rp\\&\quad + \C_{12} + \lp\C_{21}-\xi_1\rp^+\\
R_1+2R_2 &\le I\lp x_1,x_{2c}; y_1,\what{y}_2| x_{1c}\rp + I\lp x_{1c},x_2; y_2\rp + I\lp x_2; y_2| x_{1c},x_{2c}\rp + \C_{12}\\
R_1+2R_2 &\le I\lp x_1,x_{2c}; y_1,\what{y}_2| x_{1c}\rp + I\lp x_{2c};y_1|x_1\rp + I\lp x_{1c},x_2; y_2| x_{2c}\rp\\&\quad + I\lp x_2; y_2| x_{1c},x_{2c}\rp + \C_{12}
\end{align}

We will show that except \eqref{eq_Trouble}, all bounds are within a constant number of bits to the corresponding outer bounds in Lemma \ref{lem_OutBd}. By symmetry, however, one can write down $\mscr{R}_{1\ra2\ra1}$ and see that it can be compensated by time-sharing with rate points in $\mscr{R}_{1\ra2\ra1}$. Therefore the resulting $\mscr{R} := \mathrm{conv}\lbp \mscr{R}_{2\ra1\ra2} \cup \mscr{R}_{1\ra2\ra1}\rbp$ is within a constant number of bits to the outer bounds in Lemma \ref{lem_OutBd}. An illustration is provided in Fig. \ref{fig_Timeshare}.
\begin{figure}[hbtp]
{\center
\subfigure[Taking union is required, while time-sharing is not]{\includegraphics[width=3in]{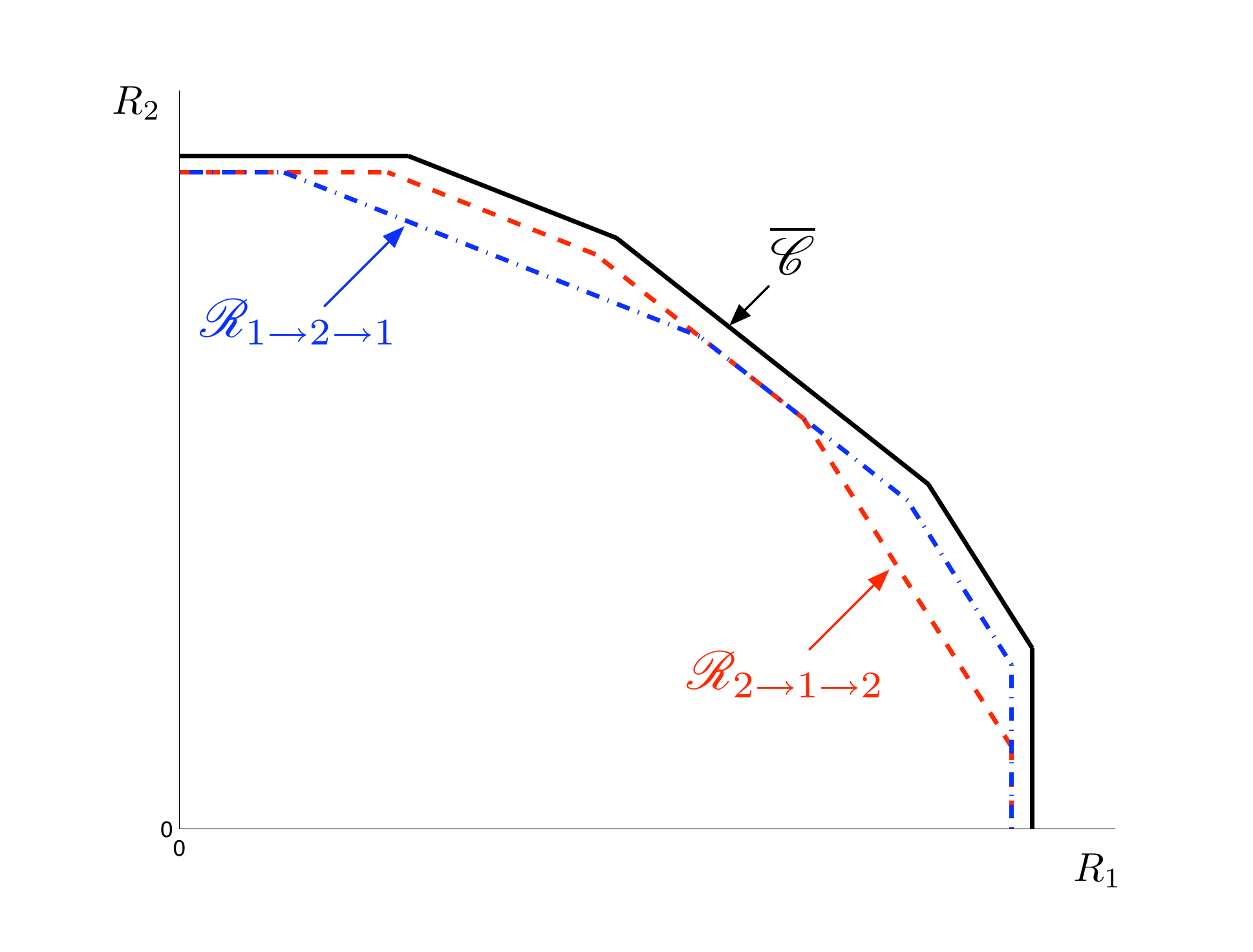}}
\subfigure[Time-sharing is required]{\includegraphics[width=3in]{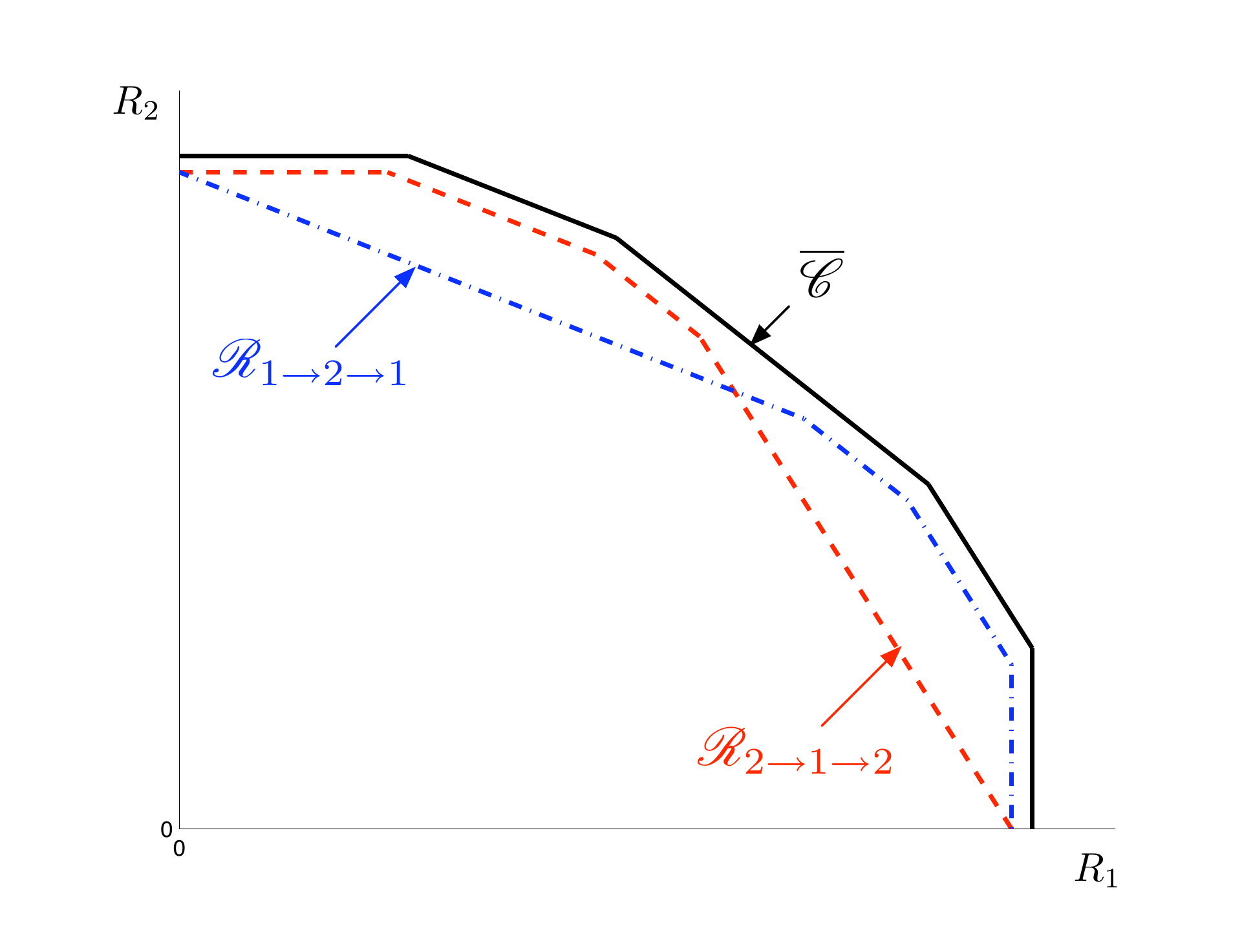}}
\caption{Time-sharing to achieve approximate capacity region}
\label{fig_Timeshare}
}
\end{figure}

We give the following lemma.
\begin{lemma}[Rate region in the weak interference regime]\label{lem_WeakGap}
\begin{align}
\mscr{R} \subseteq \mscr{C} \subseteq \overline{\mscr{C}} \subseteq \mscr{R} \oplus \big( [0,2]\times[0,2] \big),
\end{align}
in the weak interference regime.
\end{lemma}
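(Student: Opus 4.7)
The plan is to show term by term that each constraint defining the outer bound region $\overline{\mscr C}$ in Lemma \ref{lem_OutBd} is matched by a corresponding constraint in $\mscr{R}_{2\ra1\ra2}$ (or, failing that, by a constraint in $\mscr{R}_{1\ra2\ra1}$, with the slack absorbed via time-sharing) up to at most $2$ bits per user. Since $\mscr{R} \subseteq \mscr{C} \subseteq \overline{\mscr{C}}$ is already given by Theorem \ref{thm_CodingThm} and Lemma \ref{lem_OutBd}, only the third inclusion requires work.

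First I would instantiate all the mutual-information quantities appearing in $\mscr{R}_{2\ra1\ra2}$ using the prescribed Gaussian superposition inputs and the power split $Q_{ip}=1/\INR_j$ (valid in weak interference because $\SNR_i>\INR_j$), which yields $\SNR_{ip}\le 1$, $\INR_{ip}\le 1$, and $\xi_1\le\log 3$ by \eqref{eq_RateLoss}. The key auxiliary identities are of the form $I(x_1;y_1\mid x_{1c},x_{2c}) = \log\bigl(1+\SNR_{1p}/(1+\INR_{1p})\bigr)$, $I(x_{2c};y_1\mid x_1)=\log\bigl(1+\INR_1/(1+\INR_{1p})\bigr)$, $I(x_1,x_{2c};y_1)=\log\bigl(1+\SNR_1+\INR_1\bigr)-\log(1+\INR_{1p}+\SNR_{1p})+\log(1+\INR_{1p})$, and their $\what y_2$-augmented counterparts, which add at most a constant (an extra $\log(1+1/\Delta_2)$ term bounded by a constant by the distortion choice). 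Each such expression differs from the matching outer-bound term by terms of the form $\log(1+\INR_{ip})$ or $\log(1+\SNR_{ip})$ or $\log(1+\Delta_2)$, all bounded by a small constant in the weak regime.

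Next I would pair the achievable constraints with the outer bounds: the single-rate bound on $R_1$ with the cut-set bound \eqref{eq_CutSetBd} (noting $\log(1+\SNR_{1p}/(1+\INR_{1p}))+I(x_{1c},x_2;y_2\mid x_{2c})+\C_{12}$ recovers $\log(1+\SNR_1)+\C_{12}$ up to a constant, and taking the minimum with the $\what y_2$-version recovers the other branch of the cut-set); the sum-rate bounds \eqref{eq_SumBd1}--\eqref{eq_SumBd6} with the ETW-bound \eqref{eq_ETWBound}, the Z-bound \eqref{eq_ZBound}, and the SIMO bound \eqref{eq_SIMOBd}; and the weighted-sum bounds on $2R_1+R_2$ (and analogously $R_1+2R_2$) with \eqref{eq_Slope2Bd} and \eqref{eq_Slope2CutSetBd}. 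In each pairing the algebra reduces to showing that the sum of the achievable $\log$-terms matches the outer-bound $\log$-terms up to a constant $\leq 2$ bits per user, which follows from $\SNR_{ip},\INR_{ip}\le 1$ and $\xi_1 \leq \log 3$.

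The main obstacle is the weighted-sum constraint \eqref{eq_Trouble}, which carries only a $\C_{12}$ term and therefore cannot by itself match the outer bound \eqref{eq_Slope2Bd} that includes both $\C_{12}+\C_{21}$: when $\C_{21}$ is large the gap is unbounded. The fix is to observe that by the symmetry of the cooperative protocol, $\mscr{R}_{1\ra2\ra1}$ contains the analogous bound with $\C_{21}$ in place of $\C_{12}$. I would pick any target point $(R_1^\star,R_2^\star)\in\overline{\mscr C}$ maximizing $\mu_1 R_1+\mu_2 R_2$ and show that when $\mu_1\ge\mu_2$ the point lies within $2$ bits of $\mscr{R}_{1\ra2\ra1}$ (using \eqref{eq_Slope2Bd} with $\C_{21}$ on the achievable side) and when $\mu_2\ge\mu_1$ within $2$ bits of $\mscr{R}_{2\ra1\ra2}$; since $\mscr{R}$ is the convex hull of their union, every supporting line of $\overline{\mscr{C}}$ is matched within $2$ bits, and a standard separating-hyperplane argument then gives $\overline{\mscr C}\subseteq \mscr{R}\oplus([0,2]\times[0,2])$. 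The picture in Fig.~\ref{fig_Timeshare} captures exactly this time-sharing step, and completing it rigorously is the one piece beyond routine calculation.
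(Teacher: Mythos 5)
Your outline follows the paper's own logic: the first two inclusions are immediate from Theorem \ref{thm_CodingThm} and Lemma \ref{lem_OutBd}, the Gaussian-input evaluation and the constraint-by-constraint pairing against Lemma \ref{lem_OutBd} are exactly Claim \ref{claim_WeakGap}, and your identification of \eqref{eq_Trouble} as the single unmatched constraint, to be neutralized by taking $\conv\{\mscr{R}_{2\ra1\ra2} \cup \mscr{R}_{1\ra2\ra1}\}$ and selecting the processing order by $\arg\max \mu_\ell$, is precisely the time-sharing picture in Fig.~\ref{fig_Timeshare} and the engineering remark in Section \ref{subsec_CoopProtocol}.

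The gap is in the step you flag as ``the one piece beyond routine calculation'' but then wave off as a ``standard separating-hyperplane argument.'' Switching to $\mscr{R}_{1\ra2\ra1}$ when $\mu_1 \ge \mu_2$ does not by itself guarantee that the target outer-bound vertex shifted by $(2,2)$ lands inside $\mscr{R}_{1\ra2\ra1}$: that region still contains its own unmatched constraint, the mirror image of \eqref{eq_Trouble}, namely an $R_1+2R_2$ bound carrying only $\C_{21}$, and nothing in your argument rules out that constraint slicing off precisely the corner you are trying to hit. The paper closes this gap with Claim \ref{claim_Corner}: a direct algebraic check that whenever \eqref{eq_Trouble} is active in $\mscr{R}_{2\ra1\ra2}$, the vertex where the $R_1+R_2$ bound meets the $R_1+2R_2$ (or $R_2$) bound still satisfies \eqref{eq_Trouble} — i.e.\ $2R_1^*+R_2^*$ at that vertex does not exceed the \eqref{eq_Trouble} RHS, which is verified by exhibiting, for each of the several possible active $R_1+2R_2$ bounds, a rewriting of $3(R_1^*+R_2^*)$ as a sum of three of the non-troublesome sum-rate bounds \eqref{eq_SumBd1}--\eqref{eq_SumBd6}. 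This plus its mirror in $\mscr{R}_{1\ra2\ra1}$ is what legitimizes dropping \eqref{eq_Trouble} from the description of $\mscr{R}$ and only then invoking the per-constraint gaps in Claim \ref{claim_WeakGap}. Without it your supporting-hyperplane argument reduces to asserting the conclusion, so this is the piece you would need to supply to make the proof complete.
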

\begin{proof}
We need the following claims:
\begin{claim}\label{claim_Corner}
In $\mscr{R}_{2\ra1\ra2}$, whenever the $2R_1+R_2$ bound \eqref{eq_Trouble} is active,
\begin{itemize}
\item [(a)] if $R_1+2R_2$ bounds are active, the corner point where $R_1+R_2$ bound and $R_1+2R_2$ bound intersect can be achieved;
\item [(b)] if $R_1+2R_2$ bounds are not active, the corner point where $R_1+R_2$ bound and $R_2$ bound intersect can be achieved.
\end{itemize}

Above two situations are illustrated in Fig. \ref{fig_Region}.
\end{claim}
\begin{proof}
See Appendix \ref{app_PfClaims}.
\end{proof}

\begin{figure}[hbtp]
{\center
\subfigure[$R_1+2R_2$ bound is active]{\includegraphics[width=3in]{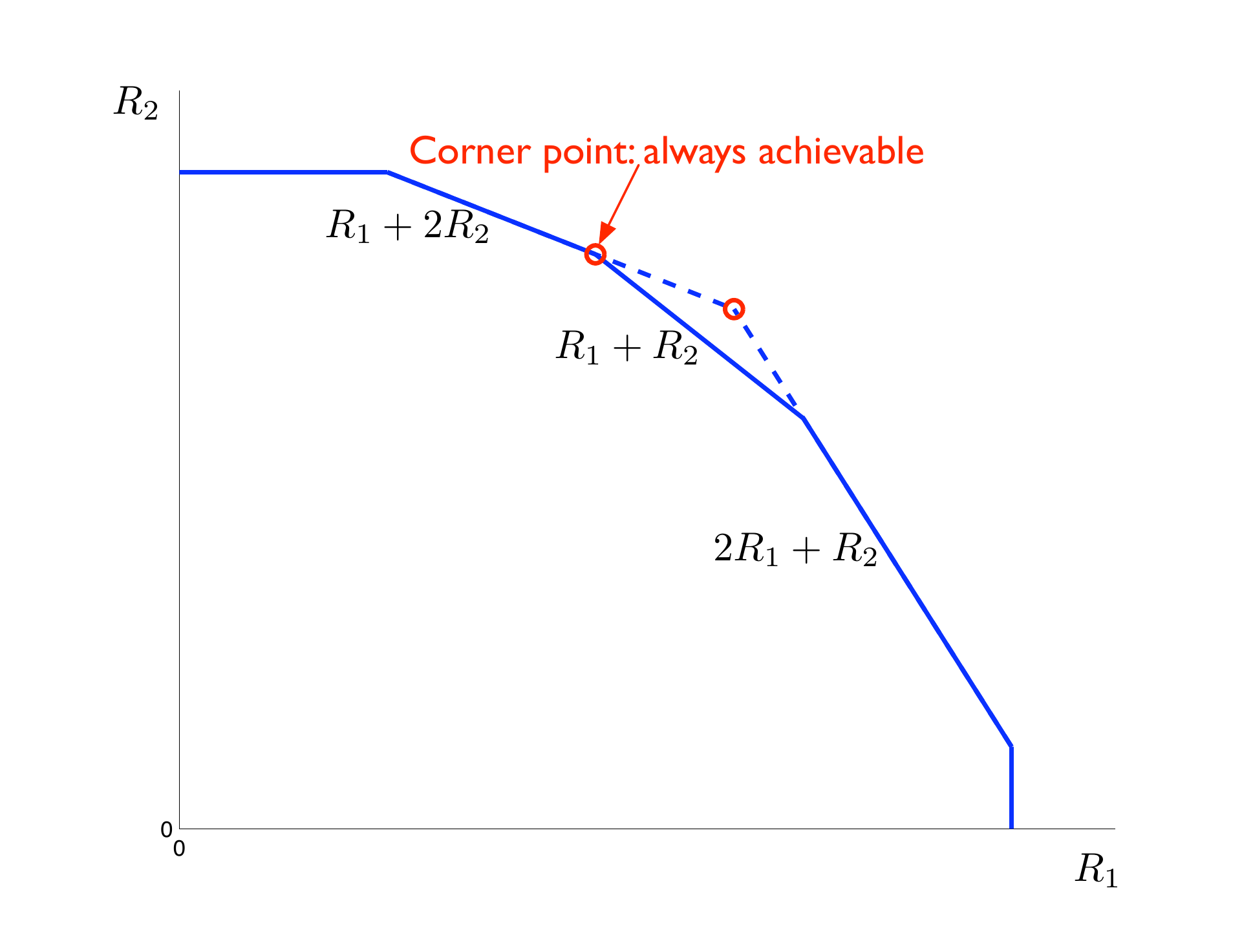}}
\subfigure[$R_1+2R_2$ bound is not active]{\includegraphics[width=3in]{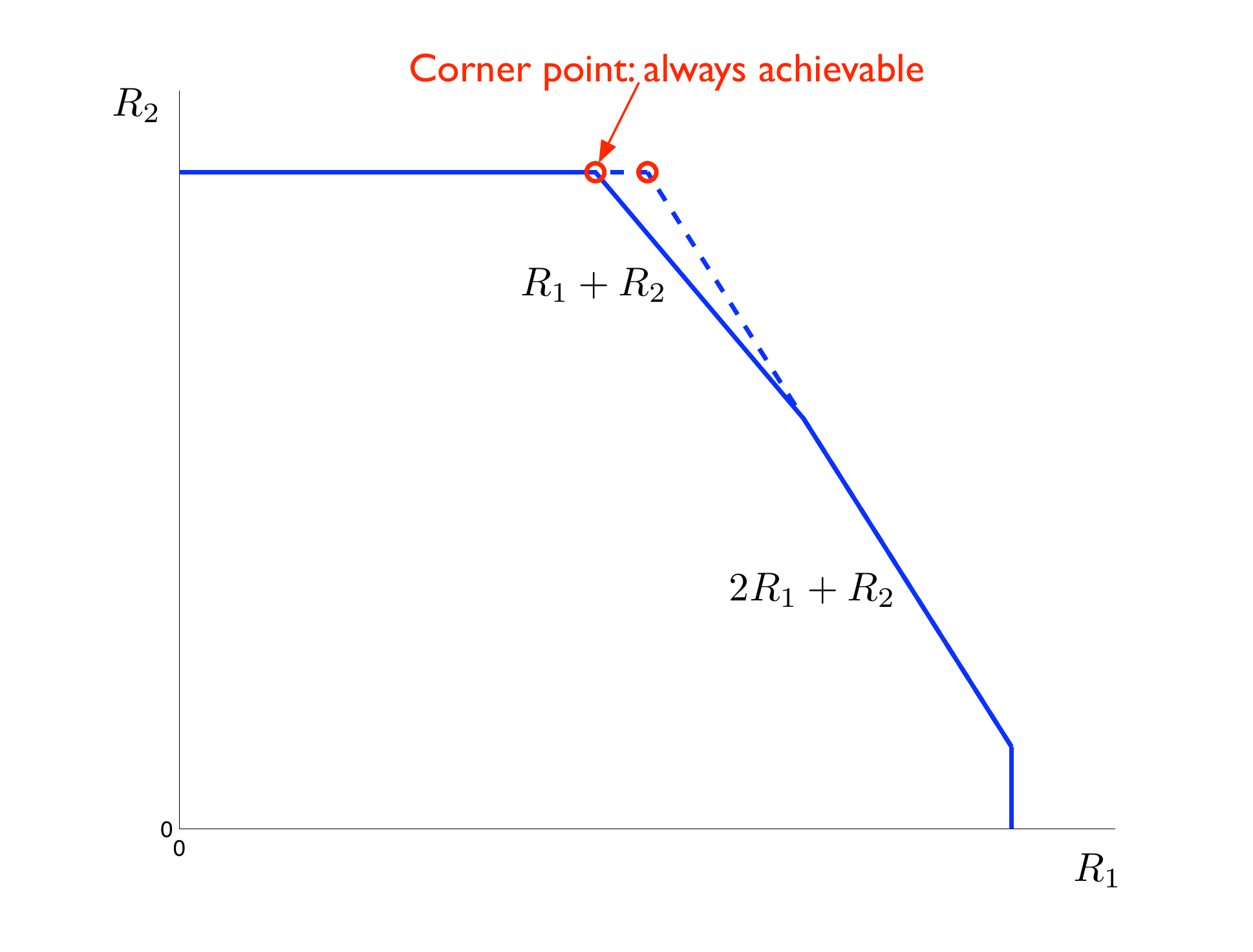}}
\caption{Situations in $\mscr{R}_{2\ra1\ra2}$}
\label{fig_Region}
}
\end{figure}

Therefore, the $2R_1+R_2$ bound \eqref{eq_Trouble} and, by symmetry, its corresponding $R_1+2R_2$ bound in $\mscr{R}_{1\ra2\ra1}$ do not show up in $\mscr{R} = \mathrm{conv}\big\{ \mscr{R}_{2\ra1\ra2} \cup \mscr{R}_{1\ra2\ra1}\big\}$, and $\mscr{R}$ is within 2 bits per user to the outer bounds in Lemma \ref{lem_OutBd}. To show this, we first look at the bounds in $\mscr{R}_{2\ra1\ra2}$ except \eqref{eq_Trouble}. We claim that

\begin{claim}\label{claim_WeakGap}
The bounds in $\mscr{R}_{2\ra1\ra2}$ except \eqref{eq_Trouble} satisfies:
\begin{itemize}
\item $R_1$ bound is within $2$ bits to outer bounds;
\item $R_2$ bound is within $2$ bits to outer bounds;
\item $R_1+R_2$ bound is within $\log 12$ bits to outer bounds;
\item $2R_1+R_2$ bound is within $\log 24$ bits to outer bounds;
\item $R_1+2R_2$ bound is within $\log 48$ bits to outer bounds.
\end{itemize}
\end{claim}
\begin{proof}
See Appendix \ref{app_PfClaims}.
\end{proof}

By symmetry, we obtain similar results for $\mscr{R}_{1\ra2\ra1}$, and hence conclude that the bounds in $\mscr{R}$ satisfies:
\begin{itemize}
\item $R_1$ bound is within $2$ bits to outer bounds;
\item $R_2$ bound is within $2$ bits to outer bounds;
\item $R_1+R_2$ bound is within $\log 12$ bits to outer bounds;
\item $2R_1+R_2$ bound is within $\log 48$ bits to outer bounds;
\item $R_1+2R_2$ bound is within $\log 48$ bits to outer bounds.
\end{itemize}

This completes the proof.
\end{proof}

\subsection{Mixed interference}
In the case $\SNR_1 > \INR_2$ and $\SNR_2 \le \INR_1$, the superposition coding configuration is to split message $m_1$ into $m_{1c}$ and $m_{1p}$, while making the whole $m_2$ to be common. We first consider $\STG_{2\ra1\ra2}$: by Theorem \ref{thm_CodingThm}, rates satisfying the following are achievable:
\begin{align}
R_{1p} &\le \min \lbp I\lp x_{1};y_1|x_{1c},x_{2}\rp + (\C_{21} - \xi_1)^+, I\lp x_{1};y_1,\what{y}_2 | x_{1c},x_{2}\rp \rbp \label{eq_r1p}\\
R_{2} &\le \min \lbp I\lp x_{2};y_1|x_{1}\rp + (\C_{21} - \xi_1)^+, I\lp x_{2};y_1,\what{y}_2 | x_{1}\rp \rbp \label{eq_r2} \\
R_{2}+R_{1p} &\le \min \lbp I\lp x_{2},x_{1};y_1|x_{1c}\rp + (\C_{21} - \xi_1)^+, I\lp x_{2},x_{1};y_1,\what{y}_2 | x_{1c}\rp \rbp\\
R_{1c}+R_{1p} &\le \min \lbp I\lp x_{1};y_1|x_{2}\rp + (\C_{21} - \xi_1)^+, I\lp x_{1};y_1,\what{y}_2 | x_{2}\rp \rbp \label{eq_r1} \\
R_{1c}+R_{2}+R_{1p} &\le \min \lbp I\lp x_{1},x_{2};y_1 \rp + (\C_{21} - \xi_1)^+, I\lp x_{1},x_{2};y_1,\what{y}_2 \rp \rbp\\
R_{1c} &\le I\lp x_{1c};y_2|x_{2}\rp + \C_{12}\\
R_{2} &\le I\lp x_{2};y_2|x_{1c}\rp + \C_{12}\\
R_{2}+R_{1c} &\le I\lp x_{2},x_{1c};y_2 \rp + \C_{12},
\end{align}
where $\xi_1 = 1$ since $\SNR_2 \le \INR_1$.

Again to simplify calculations, note that the right-hand-side of \eqref{eq_r1p} is at most a constant number of bits greater that its lower bound $I\lp x_{1};y_1|x_{1c},x_{2}\rp$, the right-hand-side of \eqref{eq_r2} is at most a constant number of bits greater that its lower bound $I\lp x_{2};y_1|x_{1}\rp$, and the right-hand-side of \eqref{eq_r1} is at most a constant number of bits greater that its lower bound $I\lp x_{1};y_1|x_{2}\rp$. Therefore, we replace these three constraints by
\begin{align}
R_{1p} &\le I\lp x_{1};y_1|x_{1c},x_{2}\rp\\
R_{2} &\le I\lp x_{2};y_1|x_{1}\rp\\
R_{1c}+R_{1p} &\le I\lp x_{1};y_1|x_{2}\rp
\end{align}
in the following calculations. Next, rewriting $R_{1p} = R_1 - R_{1c}$, applying Fourier-Motzkin algorithm to eliminate $R_{1c}$, and removing redundant terms (details omitted here), we obtain an achievable $\mscr{R}_{2\ra1\ra2}$, consists of nonnegative $(R_1,R_2)$ satisfying:
\begin{align}
R_1 &\le \min\big\{ I\lp x_1; y_1| x_2\rp , I\lp x_1; y_1| x_{1c},x_2\rp + I\lp x_{1c}; y_2| x_2\rp + \C_{12} \big\}\\
R_2 &\le \min\big\{ I\lp x_2; y_1| x_1\rp , I\lp x_2; y_2| x_{1c}\rp + \C_{12} \big\}\\
R_1+R_2 &\le I\lp x_1,x_2; y_1\rp + \lp\C_{21}-\xi_1\rp^+\\
R_1+R_2 &\le I\lp x_1,x_2; y_1,\what{y}_2\rp\\
R_1+R_2 &\le I\lp x_1; y_1|x_{1c},x_2\rp + I\lp x_{1c},x_2; y_2\rp + \C_{12}\\
R_1+R_2 &\le I\lp x_1,x_2; y_1| x_{1c}\rp + I\lp x_{1c}; y_2| x_2\rp + \C_{12} + \lp\C_{21}-\xi_1\rp^+\\
R_1+R_2 &\le I\lp x_1,x_2; y_1,\what{y}_2| x_{1c}\rp + I\lp x_{1c}; y_2| x_2\rp + \C_{12}\\
R_1+2R_2 &\le I\lp x_1,x_2; y_1| x_{1c}\rp + I\lp x_{1c},x_2; y_2\rp + \C_{12} + \lp\C_{21}-\xi_1\rp^+\\
R_1+2R_2 &\le I\lp x_1,x_2; y_1,\what{y}_2| x_{1c}\rp + I\lp x_{1c},x_2; y_2\rp + \C_{12}
\end{align}


Comparing $\mscr{R}_{2\ra1\ra2}$ with the outer bounds in Lemma \ref{lem_OutBd}, one can easily conclude that
\begin{lemma}[Mixed interference rate region]\label{lem_MixedGap}
\begin{align}
\mscr{R}_{2\ra1\ra2} \subseteq \mscr{C} \subseteq \overline{\mscr{C}} \subseteq \mscr{R}_{2\ra1\ra2} \oplus \big( [0,1.5]\times[0,1.5] \big), 
\end{align}
in the mixed interference regime. Besides, $\mscr{R}_{2\ra1\ra2} \subseteq \mscr{R}$.
\end{lemma}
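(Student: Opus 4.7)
The plan is to verify, one constraint at a time, that each inequality in the description of $\mscr{R}_{2\ra1\ra2}$ lies within $1.5$ bits of some inequality from $\overline{\mscr{C}}$ in Lemma~\ref{lem_OutBd}. The inclusion $\mscr{R}_{2\ra1\ra2} \subseteq \mscr{R}$ is immediate from the definition $\mscr{R} = \conv\{\mscr{R}_{2\ra1\ra2}\cup \mscr{R}_{1\ra2\ra1}\}$, so only the upper-gap $\overline{\mscr{C}} \subseteq \mscr{R}_{2\ra1\ra2} \oplus ([0,1.5]\times[0,1.5])$ requires work.

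First I would plug in the Gaussian input distribution prescribed in Section~\ref{subsec_Transmission}. Since $\SNR_1 > \INR_2$, set $Q_{1p} = 1/\INR_2$ (so $\INR_{2p} \le 1$); since $\SNR_2 \le \INR_1$, set $Q_{2p} = 0$, so $x_2 = x_{2c}$ and $\SNR_{2p} = 0$. With $\Delta_2 = \max\{\SNR_{2p},1\} = 1$, the rate-loss term simplifies to $\xi_1 = 1$, matching the value asserted above the list of constraints. Each mutual information that appears then reduces to an explicit $\log(1+\cdot)$ expression (for example $I(x_1;y_1|x_2) = \log(1+\SNR_1)$, $I(x_2;y_1|x_1) = \log(1+\INR_1)$, $I(x_{1c};y_2|x_2) = \log\bigl(1+\INR_2/(1+\INR_{2p})\bigr) - O(1)$, and so on), which is routine.

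Next I would pair constraints with outer bounds and bound each gap. The two alternatives in the $R_1$ bound match, respectively, the cut-set bound \eqref{eq_CutSetBd} (noting that in mixed interference $\INR_2 < \SNR_1$, so the $\min$ term on the RHS of \eqref{eq_CutSetBd} contributes at most $1$ bit when $\C_{21}$ is large, matching $I(x_1;y_1|x_2)\approx\log(1+\SNR_1)$) and the ETW-type splitting $I(x_1;y_1|x_{1c},x_2) + I(x_{1c};y_2|x_2) + \C_{12}$, which after combining approximates the Z-type bound in \eqref{eq_ZBound}. The two alternatives in the $R_2$ bound match \eqref{eq_CutSetBd} similarly, once one notes that in mixed interference $\INR_1 \ge \SNR_2$. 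For the five sum-rate constraints I would match \eqref{eq_SumBd1}/\eqref{eq_SumBd2} (which differ only by the $(\C_{21}-1)^+$ versus noisy-$\what{y}_2$ term) against the Z-bound in \eqref{eq_ZBound}; \eqref{eq_SumBd3}/\eqref{eq_SumBd4}/\eqref{eq_SumBd5} against the ETW bound \eqref{eq_ETWBound} (the splitting of $y_1$ into signal-plus-interference terms reproduces the two logarithms of \eqref{eq_ETWBound} up to additive constants, and the $\C$ offsets line up). For the two $R_1+2R_2$ bounds I would match against \eqref{eq_Slope2Bd} and \eqref{eq_Slope2CutSetBd}: expand $I(x_1,x_2;y_1|x_{1c}) + I(x_{1c},x_2;y_2)$ and observe that the two resulting logarithms mirror the first two terms of \eqref{eq_Slope2Bd} (or of \eqref{eq_Slope2CutSetBd} when $\what{y}_2$ appears), with the cooperation offsets $\C_{12}+(\C_{21}-1)^+$ matching $\C_{12}+\C_{21}$ up to one bit, and $\log(1+\INR_2+\ldots)$ terms matching up to another bit from the $\INR_{2p}\le 1$ control.

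The main obstacle, as usual in constant-gap arguments, is bookkeeping rather than conceptual: one must control the sum of small rounding losses (e.g.\ $\log(1+a+b) \le \log(1+a)+\log(1+b)+0$, $\log\tfrac{1+a+b}{1+a} \ge \log(1+b)-1$, and the $+1$ coming from $\xi_1 = 1$) so that the cumulative slack on any single inequality stays below $1.5$ bits per user. Because $m_2$ is entirely common and only one processing order is used, no $2R_1+R_2$ achievable bound arises, so there is no analogue of Claim~\ref{claim_Corner} needed here; once the above pairings are verified, the double inclusion in the statement follows immediately, and the final line $\mscr{R}_{2\ra1\ra2}\subseteq \mscr{R}$ holds by definition.
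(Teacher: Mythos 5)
Your overall methodology — substitute the explicit Gaussian inputs with $Q_{2p}=0$, $\INR_{2p}\le 1$, $\Delta_2=1$, $\xi_1=1$, reduce every mutual-information term to a $\log(1+\cdot)$ expression, and pair each achievable inequality with one outer-bound inequality from Lemma~\ref{lem_OutBd} — is exactly the paper's approach (Claim~\ref{claim_MixedGap}, Appendix~\ref{app_PfClaims}). However, the specific matching plan you describe contains errors that would make the gap unbounded if followed literally.

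The key rule (which you nod at but then violate) is that the $\C$-offsets on the achievable and outer sides must agree, since $\C_{12},\C_{21}$ can be arbitrarily large: an achievable bound carrying $\what{y}_2$ (and hence no $\C_{21}$ offset, because receiver~1 already "has" $y_2$ through the quantization) must be compared against the SIMO cut-set bound~\eqref{eq_SIMOBd}, which also has no $\C$ offset; an achievable bound carrying $+\C_{12}$ alone against the Z-type bound in~\eqref{eq_ZBound} carrying $+\C_{12}$; and one carrying $\C_{12}+(\C_{21}-1)^+$ against the ETW bound~\eqref{eq_ETWBound} carrying $\C_{12}+\C_{21}$. Your plan matches the $\what{y}_2$-based sum-rate bounds against~\eqref{eq_ZBound} and the $+\C_{12}$-only bound $I(x_1;y_1|x_{1c},x_2)+I(x_{1c},x_2;y_2)+\C_{12}$ against the ETW bound~\eqref{eq_ETWBound}; neither pairing closes to a constant, because the extra $\C_{21}$ (or the missing $\C_{12}$) appears on only one side. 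The paper's actual pairings are: the $(\C_{21}-1)^+$ bound against~\eqref{eq_ZBound} (with $+\C_{21}$), the two $\what{y}_2$ bounds against~\eqref{eq_SIMOBd}, the $+\C_{12}$-only bound against~\eqref{eq_ZBound} (with $+\C_{12}$), and the $\C_{12}+(\C_{21}-1)^+$ bound against~\eqref{eq_ETWBound}. Similarly for $R_1+2R_2$: the $(\C_{21}-1)^+$ version goes against~\eqref{eq_Slope2Bd}, and the $\what{y}_2$ version against~\eqref{eq_Slope2CutSetBd}.

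Two smaller points. First, the equation labels~\eqref{eq_SumBd1}--\eqref{eq_SumBd5} you cite belong to the weak-interference achievable region, not the mixed one; the mixed region's sum-rate constraints are unlabeled, and because $Q_{2p}=0$ there they are genuinely different expressions (no $I(x_2;y_2|x_{1c},x_{2c})$ private term). Second, in your remark about the $R_1$ cut-set bound, the reason the $\min$ term in~\eqref{eq_CutSetBd} costs at most one bit is not "$\C_{21}$ is large"; it is that $\SNR_1>\INR_2$ forces $\log\bigl(1+\INR_2/(1+\SNR_1)\bigr)<1$ regardless of $\C_{21}$. Once the pairings are corrected and those logarithmic inequalities carried out, the paper finds $R_1,R_2$ gaps $\le 1$ bit and $R_1+R_2$, $R_1+2R_2$ gaps $\le 3$ bits, from which the $1.5$-per-user rectangle follows (the binding case is the sum rate: $3/2 = 1.5$).
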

\begin{proof}

We investigate the bounds in $\mscr{R}_{2\ra1\ra2}$ and claim that
\begin{claim}\label{claim_MixedGap}
The bounds in $\mscr{R}_{2\ra1\ra2}$ satisfies:
\begin{itemize}
\item $R_1$ bound is within $1$ bit to outer bounds;
\item $R_2$ bound is within $1$ bit to outer bounds;
\item $R_1+R_2$ bound is within $3$ bits to outer bounds;
\item $R_1+2R_2$ bound is within $3$ bits to outer bounds.
\end{itemize}
\end{claim}
\begin{proof}
See Appendix \ref{app_PfClaims}
\end{proof}

This completes the proof.
\end{proof}

\subsection{Strong interference}\label{subsec_StrongAchieve}
In the case $\SNR_1 \le \INR_2$ and $\SNR_2 \le \INR_1$, it turns out that a one-round strategy $\STG_{\OR}$ described below suffices to achieve capacity to within a constant number of bits. The transmission scheme is the same as that described in Section \ref{subsec_Transmission}. The difference is that, both receivers quantize-and-bins their received signals and decode with the help from the side information, as described in Section \ref{subsec_CoopProtocol}. It is called one-round since both receivers decode after one-round exchange of informaion. Below is the coding theorem for this strategy:


\begin{theorem}\label{thm_ErrProb}
The rate tuple $\lp R_{1c},R_{2c},R_{1p},R_{2p}\rp$ satisfying the following constraints are achievable for $\STG_{\OR}$:
{\flushleft \emph{\underline{Constraints at receiver 1}}:}
\begin{align}
R_{1p} &\le \min \lbp I\lp x_{1};y_1|x_{1c},x_{2c}\rp + (\C_{21} - \xi_1)^+, I\lp x_{1};y_1,\what{y}_2 | x_{1c},x_{2c}\rp \rbp \\
R_{2c}+R_{1p} &\le \min \lbp I\lp x_{2c},x_{1};y_1|x_{1c}\rp + (\C_{21} - \xi_1)^+, I\lp x_{2c},x_{1};y_1,\what{y}_2 | x_{1c}\rp \rbp\\
R_{1c}+R_{1p} &\le \min \lbp I\lp x_{1};y_1|x_{2c}\rp + (\C_{21} - \xi_1)^+, I\lp x_{1};y_1,\what{y}_2 | x_{2c}\rp \rbp\\
R_{1c}+R_{2c}+R_{1p} &\le \min \lbp I\lp x_{1},x_{2c};y_1 \rp + (\C_{21} - \xi_1)^+, I\lp x_{1},x_{2c};y_1,\what{y}_2 \rp \rbp
\end{align}

{\flushleft \emph{\underline{Constraints at receiver 2}}: the above constraints with index ``1" and ``2" exchanged.}
\end{theorem}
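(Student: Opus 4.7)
The plan is to follow the same random-coding and joint-typicality argument used to prove Theorem \ref{thm_CodingThm} (Appendix \ref{app_ErrorAnalysis}), specialized to the symmetric one-round setting of $\STG_{\OR}$. The simplification is that neither receiver performs a decode-and-bin second round; each decoder only combines its own received signal with the quantize-bin side-information from the other receiver. Hence the analysis reduces to the ``first-round'' portion of the proof of Theorem \ref{thm_CodingThm}, applied independently at each receiver.

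I would start by generating codebooks exactly as in Section \ref{subsec_Transmission}: for each transmitter $i\in\{1,2\}$, independent Gaussian common and private codebooks of powers $Q_{ic},Q_{ip}$, superposed into $x_i^N=x_{ic}^N(m_{ic})+x_{ip}^N(m_{ip})$; for each receiver $i$, a Gaussian vector-quantization codebook of size $2^{NR_{Q_i}}$ drawn i.i.d.\ from the marginal of $\what{y}_i=y_i+\what{z}_i$ with $\what{z}_i\sim\mcal{CN}(0,\Delta_i)$, uniformly partitioned into $2^{N\C_{ij}}$ bins. Encoding and quantize-binning are then standard: each receiver $i$ locates a quantization codeword jointly typical with $y_i^N$ (which exists with high probability for $R_{Q_i}$ slightly above $I(y_i;\what{y}_i)$) and forwards its bin index over the $\C_{ij}$ cooperative link.

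For the decoder at receiver 1 (receiver 2 is symmetric), I would analyze the rule that searches for a unique tuple $(\hat m_{1c},\hat m_{2c},\hat m_{1p})$ such that some $\what{y}_2^N$ in the received bin is jointly typical with $\lp x_{1c}^N(\hat m_{1c}),x_{2c}^N(\hat m_{2c}),x_1^N(\hat m_{1c},\hat m_{1p}),y_1^N,\what{y}_2^N\rp$. Each listed constraint corresponds to a specific error subset $S$ of $\{m_{1c},m_{2c},m_{1p}\}$. The union bound over the $\approx 2^{N\sum_{k\in S}R_k}$ wrong candidates and the $\approx 2^{N(R_{Q_2}-\C_{21})}$ surviving quantization codewords, combined with the Gaussian joint-typicality exponent and the choice $R_{Q_2}\approx I(y_2;\what{y}_2)$, yields the constraint $\sum_{k\in S}R_k\le I\lp X_S;\,y_1\,|\,X_{-S}\rp+(\C_{21}-\xi_1)^+$ in the regime $\C_{21}<\xi_1$, where $\xi_1=I(\what{y}_2;y_2\mid x_{1c},x_1,x_{2c},y_1)$ captures the per-symbol rate loss from residual ambiguity in $\what{y}_2^N$. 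The complementary regime $\C_{21}\ge\xi_1$, in which the bin essentially singles out the true quantization codeword, gives the alternative $\sum_{k\in S}R_k\le I\lp X_S;\,y_1,\what{y}_2\,|\,X_{-S}\rp$; taking the minimum of the two reproduces each constraint in the theorem. Error subsets that do not appear in the list correspond to events that are either redundant against the other listed constraints (once combined with the private-rate bound via the chain rule under the superposition structure $x_i=x_{ic}+x_{ip}$) or do not tighten the region after Fourier-Motzkin elimination. The constraints at receiver 2 follow by the obvious index exchange.

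The main obstacle I anticipate is the joint error analysis in the regime $\C_{21}<\xi_1$, where many quantization codewords survive in the received bin and the union bound must simultaneously track (i) the randomness of the wrong message codewords, which are independent of the truth under their marginals, and (ii) the correlated randomness of the surviving $\what{y}_2^N$'s in the bin under the Gaussian quantizer. Verifying that the two exponents combine cleanly into the stated $(\C_{21}-\xi_1)^+$ rate loss is the delicate step, but this machinery is identical (up to relabeling) to its counterpart in Appendix \ref{app_ErrorAnalysis}, so once it is imported the rest of the proof is mechanical.
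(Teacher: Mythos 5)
Your overall machinery is right, but there is a genuine gap in how you explain away the constraint that is absent from the theorem, and it stems from using the wrong decoder and hence the wrong set of error events. You propose that receiver 1 ``searches for a unique tuple $(\hat m_{1c},\hat m_{2c},\hat m_{1p})$'' and declares an error otherwise. With that decoder, the error subset $S=\{2c\}$ (in which $\hat m_{2c}\ne m_{2c}$ but $\hat m_{1c}=m_{1c}$, $\hat m_{1p}=m_{1p}$) is an error event, and its analysis produces the additional constraint $R_{2c}\le\min\{I(x_{2c};y_1|x_1)+(\C_{21}-\xi_1)^+,\,I(x_{2c};y_1,\what{y}_2|x_1)\}$, exactly as in Theorem~\ref{thm_CodingThm}. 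You then assert that the unlisted subsets are ``redundant against the other listed constraints'' or ``do not tighten the region after Fourier--Motzkin elimination,'' but this is false for $S=\{2c\}$: comparing it against the listed $R_{2c}+R_{1p}$ bound with $R_{1p}\ge 0$ gives a strictly looser implied bound on $R_{2c}$ alone, since $I(x_{2c},x_1;y_1|x_{1c})\ge I(x_{2c};y_1|x_1)$. So that constraint is not dominated, and keeping it would carve off part of the rate region claimed in the theorem.

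The actual reason the $R_{2c}$ constraint disappears---and this is the one nontrivial observation the paper's proof records---is that in the one-round strategy $\STG_{\OR}$ receiver~1 no longer forwards $m_{2c}$ to receiver~2, so it does not need to decode $m_{2c}$ correctly; a wrong $\hat m_{2c}$ with correct $(\hat m_{1c},\hat m_{1p})$ is not a decoding error. Correspondingly the decoder rule must be changed: receiver~1 should declare a \emph{unique pair} $(\hat m_{1c},\hat m_{1p})$ such that some $\hat m_{2c}$ and some $\what{y}_2^N$ in the indicated bin make the tuple jointly typical (multiple surviving $\hat m_{2c}$'s with the same $(\hat m_{1c},\hat m_{1p})$ are acceptable). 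With this modified decoder the $S=\{2c\}$ event is excluded by construction, and the remaining error analysis is word-for-word the first-round portion of Appendix~\ref{app_ErrorAnalysis}, yielding exactly the four listed constraints. (The subsets $S=\{1c\}$ and $S=\{1c,2c\}$, which you may have also had in mind as ``missing,'' genuinely are dominated by $S=\{1c,1p\}$ and $S=\{1c,2c,1p\}$ respectively, because a wrong cloud center makes the superposed private codeword marginally independent regardless of $\hat m_{1p}$; that part of your reasoning is fine.) Once you fix the decoder and the error set in this way, the rest of your proposal is correct and matches the paper's argument.
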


\begin{proof}
Follows the same line as the proof of Theorem \ref{thm_CodingThm}. There is no rate constraint for $R_{jc}$ at receiver $i$ for $(i,j)=(1,2)$ or $(2,1)$, since decoding $m_{jc}$ incorrectly at receiver $i$ does not account for an error.
\end{proof}

Now, in the strong interference regime, the superposition coding configuration is to make whole message $m_i$ be common for both users $i=1,2$; in words, there is no superposition eventually. One-round strategy $\STG_{\OR}$ yields achievable rate region $\mscr{R}_{\OR}$, which consists of nonnegative $(R_1,R_2)$ satisfying
\begin{align}
\begin{split}
R_{2} &\le \min \lbp I\lp x_{2};y_1|x_{1}\rp + (\C_{21} - \xi_1)^+, I\lp x_{2};y_1,\what{y}_2 | x_{1}\rp \rbp\\
R_{1} &\le \min \lbp I\lp x_{1};y_1|x_{2}\rp + (\C_{21} - \xi_1)^+, I\lp x_{1};y_1,\what{y}_2 | x_{2}\rp \rbp\\
R_{1}+R_{2} &\le \min \lbp I\lp x_{1},x_{2};y_1 \rp + (\C_{21} - \xi_1)^+, I\lp x_{1},x_{2};y_1,\what{y}_2 \rp \rbp\\
R_{1} &\le \min \lbp I\lp x_{1};y_2|x_{2}\rp + (\C_{12} - \xi_2)^+, I\lp x_{1};y_2,\what{y}_1 | x_{2}\rp \rbp\\
R_{2} &\le \min \lbp I\lp x_{2};y_2|x_{1}\rp + (\C_{12} - \xi_2)^+, I\lp x_{2};y_2,\what{y}_1 | x_{1}\rp \rbp\\
R_{2}+R_{1} &\le \min \lbp I\lp x_{2},x_{1};y_2 \rp + (\C_{12} - \xi_2)^+, I\lp x_{2},x_{1};y_2,\what{y}_1 \rp \rbp,
\end{split}\label{eq_StrongRegion}
\end{align}
where $\xi_i = 1$, for both $i=1,2$.



Comparing $\mscr{R}_{\OR}$ with the outer bounds in Lemma \ref{lem_OutBd}, one can easily conclude that
\begin{lemma}[Strong interference rate region]\label{lem_StrongGap}
\begin{align}
\mscr{R}_{\OR} \subseteq \mscr{C} \subseteq \overline{\mscr{C}} \subseteq \mscr{R}_{\OR} \oplus \big( [0,1]\times[0,1] \big),
\end{align}
in the strong interference regime. Besides, $\mscr{R}_{\OR} \subseteq \mscr{R}$.
\end{lemma}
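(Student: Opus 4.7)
The plan is to evaluate $\mscr{R}_{\OR}$ explicitly under the strong-interference configuration and then verify a bound-by-bound per-user gap of at most $1$ bit against the outer bounds in Lemma~\ref{lem_OutBd}; the containment $\mscr{R}_{\OR}\subseteq\mscr{R}$ will follow by specializing the two-round scheme to all-common messages.

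\emph{Step 1: Mutual-information evaluation.} In the strong regime the prescription of Section~\ref{subsec_Transmission} sets $Q_{ip}=0$, so $x_i=x_{ic}\sim\mcal{CN}(0,1)$; the distortion $\Delta_i=\max\{\SNR_{ip},1\}=1$ together with \eqref{eq_RateLoss} yields $\xi_i=1$. The six mutual informations in \eqref{eq_StrongRegion} collapse to three types: the non-cooperative terms $\log(1+\SNR_i)$, $\log(1+\INR_i)$, $\log(1+\SNR_i+\INR_i)$; the cooperation-aided single-user SIMO terms such as $I(x_2;y_1,\what{y}_2|x_1)=\log(1+\INR_1+\SNR_2/2)$; and the joint MIMO term $I(x_1,x_2;y_1,\what{y}_2)$, which via $\log\det(N+HH^*)-\log\det(N)$ with $N=\mathrm{diag}(1,2)$ reduces to $\log\!\big(1+(\SNR_2+\INR_2)/2+\SNR_1+\INR_1+|h_{11}h_{22}-h_{12}h_{21}|^2/2\big)$. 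Retaining the $|\det H|^2$ factor is precisely what lets the achievable region track the phase-dependent outer bound~\eqref{eq_SIMOBd}.

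\emph{Step 2: Bound-by-bound comparison.} For each individual-rate constraint I split on whether $\C_{ji}$ is at most or exceeds $\log(1+\INR_j/(1+\SNR_i))$: in the first branch the piece $\log(1+\SNR_i)+(\C_{ji}-1)^+$ matches the cut-set bound~\eqref{eq_CutSetBd} to within $1$ bit; in the second branch the SIMO piece matches it, since $\log(1+\SNR_i+\INR_j)-\log(1+\SNR_i+\INR_j/2)\le1$. For the sum-rate the $(\C_{ji}-1)^+$-piece is compared with the Z-channel bound~\eqref{eq_ZBound}: strong interference forces $\log(1+\SNR_i/(1+\INR_j))\le1$, so the sum-rate gap is at most $2$ bits, i.e., $1$ bit per user. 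The MIMO-piece from Step~1 matches~\eqref{eq_SIMOBd} to within $1$ bit of the sum-rate. Because $\mscr{R}_{\OR}$ has no slope-$2$ face, the slope-$2$ outer bounds \eqref{eq_Slope2Bd}--\eqref{eq_Slope2CutSetBd} need no separate check: $\overline{\mscr{C}}$ is already contained in the polytope carved out only by \eqref{eq_CutSetBd}--\eqref{eq_SIMOBd}, and that polytope, by the above, sits inside $\mscr{R}_{\OR}\oplus([0,1]\times[0,1])$.

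\emph{Step 3: $\mscr{R}_{\OR}\subseteq\mscr{R}$ and main obstacle.} Setting $R_{ip}=0$ and $x_i=x_{ic}$ in Theorem~\ref{thm_CodingThm}, the receiver-$1$ constraints of $\STG_{2\ra1\ra2}$ reduce verbatim to the receiver-$1$ constraints of $\STG_{\OR}$, while the receiver-$2$ constraints $I(\cdot;y_2|\cdot)+\C_{12}$ are strictly looser than the corresponding min-form constraints of $\STG_{\OR}$, because the second round of the two-round protocol conveys \emph{noise-free} binned common messages rather than a quantized signal. Hence $\mscr{R}_{\OR}\subseteq\mscr{R}_{2\ra1\ra2}\subseteq\mscr{R}$. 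The main delicacy is reproducing the $|h_{11}h_{22}-h_{12}h_{21}|^2$ term in Step~1 with the correct noise-enhanced covariance $\mathrm{diag}(1,1+\Delta_j)$: discarding $\what{y}_j$ or ignoring the noise enhancement would lose the phase structure and produce an unbounded gap when $H$ is nearly rank-deficient, which is exactly the regime in which \eqref{eq_SIMOBd} is the binding outer constraint.
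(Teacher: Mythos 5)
Your proposal follows essentially the same route as the paper's proof, which reduces Lemma~\ref{lem_StrongGap} to Claim~\ref{claim_StrongGap} and verifies that claim in Appendix~\ref{app_PfClaims}: evaluate the mutual informations in \eqref{eq_StrongRegion} at the all-common configuration with $\Delta_1=\Delta_2=1$ (so $\xi_1=\xi_2=1$), and do a bound-by-bound comparison against the cut-set bound \eqref{eq_CutSetBd}, the Z-channel bound \eqref{eq_ZBound}, and the MIMO cut-set bound \eqref{eq_SIMOBd}. Your MIMO determinant evaluation of $I(x_1,x_2;y_1,\what{y}_2)$ with noise covariance $\mathrm{diag}(1,2)$ is the same computation the paper does, and your explicit observations that the slope-$2$ outer bounds need no check (because $\mscr{R}_{\OR}$ has no $2R_i+R_j$ faces) and that $\mscr{R}_{\OR}\subseteq\mscr{R}_{2\ra1\ra2}$ by setting $R_{ip}=0$ in Theorem~\ref{thm_CodingThm} and comparing receiver-$2$ constraints are both correct --- the paper states these facts without argument.

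One place where your Step~2 is incomplete: for each individual rate, $\mscr{R}_{\OR}$ imposes the minimum of \emph{four} achievability pieces (two from each decoder's constraints), so for the two-bit correction to work you must show that \emph{every} piece sits within one bit above some outer bound. Your split-on-$\C_{ji}$ argument covers only the two pieces coming from receiver $i$'s own quantize-bin link, namely $I(x_i;y_i|x_j)+(\C_{ji}-1)^+$ and $I(x_i;y_i,\what{y}_j|x_j)$. The two pieces coming from receiver $j$'s decoding of the common $x_i$, namely $I(x_i;y_j|x_j)+(\C_{ij}-1)^+=\log(1+\INR_j)+(\C_{ij}-1)^+$ and $I(x_i;y_j,\what{y}_i|x_j)=\log(1+\INR_j+\SNR_i/2)$, must also be shown to be within one bit of \eqref{eq_CutSetBd}. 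This does go through, but it is precisely here that the strong-interference hypothesis $\SNR_i\le\INR_j$ is used a second time (to get $\log(1+\INR_j)\ge\log(1+\SNR_i+\INR_j)-1$), and the paper's Appendix~\ref{app_PfClaims} checks all four pieces explicitly; your write-up should too.
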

\begin{proof}
We investigate the bounds in $\mscr{R}_{\OR}$ and claim that:
\begin{claim}\label{claim_StrongGap}
The bounds in $\mscr{R}_{\OR}$ satisfies:
\begin{itemize}
\item $R_1$ bound is within $1$ bit to outer bounds;
\item $R_2$ bound is within $1$ bit to outer bounds;
\item $R_1+R_2$ bound is within $2$ bits to outer bounds.
\end{itemize}
\end{claim}
\begin{proof}
See Appendix \ref{app_PfClaims}.
\end{proof}

This completes the proof.
\end{proof}

\subsection{Approximate Capacity of Compound MAC with Conferencing Decoders}\label{subsec_CapCMAC}
One of the contribution in this work is characterizing the capacity region of {\it compound multiple access channel with conferencing decoders} (CMAC-CD) to within 1 bit. The channel is defined as follows.
\begin{definition} 
A compound multiple access channel with conferencing decoders (CMAC-CD), is a channel with the same set-up as depicted in Fig. \ref{fig_ChModel}., while both receivers aim to decode both $m_1$ and $m_2$.
\end{definition}

We give straightforward cut-set upper bounds as follows:
\begin{lemma}\label{lem_CMAC_OutBd}
If $(R_1,R_2)$ is achievable, it must satisfy the following constraints:
\begin{align}
R_1 &\le \min\left\{ \log(1+\SNR_1)+ \C_{21},\log(1+\INR_2)+ \C_{12},\log\left(1+\SNR_1+\INR_2\right) \right\}\\
R_2 &\le \min\left\{ \log(1+\SNR_2)+ \C_{12},\log(1+\INR_1)+ \C_{21},\log\left(1+\SNR_2+\INR_1\right) \right\}\\
R_1+R_2 &\le \log\left(1+\SNR_1+\INR_1\right) + \C_{21}\\
R_1+R_2 &\le \log\left(1+\SNR_2+\INR_2\right) + \C_{12}\\
R_1+R_2 &\le \log\lp 1+\SNR_1+\INR_1+\SNR_2+\INR_2 + |h_{11}h_{22} - h_{12}h_{21}|^2 \rp.
\end{align}
\end{lemma}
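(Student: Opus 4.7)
The plan is to apply standard cut-set arguments to the network consisting of the two transmitters, two receivers, the four Gaussian channel edges, and the two rate-limited cooperation edges. Since this is a compound MAC, both messages $m_1,m_2$ must be decoded at both receivers, so by Fano's inequality, for $i=1,2$, $H(m_1,m_2\mid y_i^N,u_{ji}^N)\le N\epsilon_N$. The only non-standard ingredient will be controlling the contributions of the conferencing outputs $u_{12}^N,u_{21}^N$, which are handled via the uniform bound $H(u_{ij}^N)\le\sum_n H(u_{ij}[n])\le N\C_{ij}$ coming directly from the alphabet constraint $u_{ij}[n]\in\{1,\ldots,2^{\C_{ij}}\}$.

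For each individual-rate bound on $R_1$, I would pick a network cut separating $T_1$ from some node that must decode $m_1$ and write $NR_1\le I(m_1;\text{cut outputs}\mid m_2)+N\epsilon_N$; conditioning on $m_2$ effectively hands $X_2^N$ to the decoder. The three terms correspond to three cuts: (a) $\{T_1,T_2,R_2\}\mid\{R_1\}$, where the cut edges carry $y_1^N$ and $u_{21}^N$, giving $I(X_1;Y_1\mid X_2)+\C_{21}\le\log(1+\SNR_1)+\C_{21}$; (b) $\{T_1,T_2,R_1\}\mid\{R_2\}$, giving $I(X_1;Y_2\mid X_2)+\C_{12}\le\log(1+\INR_2)+\C_{12}$; and (c) $\{T_1\}\mid\{T_2,R_1,R_2\}$, giving $I(X_1;Y_1,Y_2\mid X_2)\le\log(1+\SNR_1+\INR_2)$ via the scalar SIMO formula. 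The $R_2$ bounds follow by symmetry. In each case I would use the standard single-letterization argument, using independence of $m_1,m_2$ (so $X_1,X_2$ remain independent) and Gaussian max-entropy.

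For the sum-rate bounds I would use essentially the same three cuts but without conditioning: cut (a) gives $N(R_1+R_2)\le I(X_1,X_2;Y_1)+N\C_{21}\le N\log(1+\SNR_1+\INR_1)+N\C_{21}$; cut (b) by symmetry gives the $\C_{12}$ version; cut (c) gives $N(R_1+R_2)\le I(X_1,X_2;Y_1,Y_2)$, which is a $2\times 2$ MIMO cut with independent per-antenna inputs, each of unit power. The main (but still routine) computation is the determinant
\begin{align*}
\det\bigl(I+HH^*\bigr) &= 1+\operatorname{tr}(HH^*)+|\det H|^2\\
&= 1+\SNR_1+\SNR_2+\INR_1+\INR_2+|h_{11}h_{22}-h_{12}h_{21}|^2,
\end{align*}
using the identity $\det(I+A)=1+\operatorname{tr}(A)+\det(A)$ for $2\times 2$ matrices $A$, with $A=HH^*$. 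Gaussian max-entropy with the natural covariance $K_X=I$ (which is feasible because $X_1,X_2$ are independent with unit power) gives this bound.

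The only real conceptual point, as opposed to bookkeeping, is making sure the causal conferencing structure is not allowed to smuggle extra information across a cut. This is exactly where the uniform alphabet bound $H(u_{ij}^N)\le N\C_{ij}$ does the work: splitting $I(m;\,y_i^N,u_{ji}^N\mid\cdot)\le I(m;\,y_i^N\mid\cdot)+H(u_{ji}^N)$ isolates the channel part from the cooperation part, and the channel part single-letterizes in the usual Gaussian way. I expect no serious obstacle; the most care-intensive step is the MIMO determinant computation and verifying that the Gaussian input is actually optimal under the \emph{independent} per-transmitter power constraints, which is immediate since the feasible set for $K_X$ (diagonal, entries $\le 1$) contains $I$ and the objective is monotone in $K_X$ in the p.s.d.\ order.
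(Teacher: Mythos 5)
Your proposal is correct and matches the paper's (unstated) argument: the paper simply remarks that these are straightforward cut-set bounds and omits the details, and your cuts, the split $I(m;y_i^N,u_{ji}^N\mid\cdot)\le I(m;y_i^N\mid\cdot)+H(u_{ji}^N)\le I(m;y_i^N\mid\cdot)+N\C_{ji}$, and the $2\times2$ determinant identity are exactly the routine ingredients needed to fill them in.
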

\begin{proof} These are straightforward cut-set bounds. We omit the details here. \end{proof}

For achievability, we adapt the scheme proposed above with no superposition coding at transmitters. Therefore, the rate region is exactly the same as \eqref{eq_StrongRegion}. Hence, we conclude that 
\begin{theorem}[Within 1 bit to CMAC-CD Capacity Region]
The scheme achieves the capacity of compound MAC with conferencing decoders to within 1 bit.
\end{theorem}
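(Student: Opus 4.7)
The plan is to specialize the one-round strategy $\STG_{\OR}$ from Section \ref{subsec_StrongAchieve} to the CMAC-CD set-up: since both receivers must decode both messages, I would take $Q_{1c}=Q_{2c}=1$ (no private sub-messages) and quantize at the noise level, $\Delta_1=\Delta_2=1$, which makes $\xi_1=\xi_2=1$. The achievable region is then exactly \eqref{eq_StrongRegion} evaluated at these parameters, and produces at each receiver a ``digital'' bound of the form $I(\cdot; y_i | \cdot) + (\C_{ji}-1)^+$ and an ``analog'' bound of the form $I(\cdot; y_i, \what{y}_j | \cdot)$ on $R_1$, $R_2$ and $R_1+R_2$. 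The task is to pair every one of the five cut-set terms in Lemma \ref{lem_CMAC_OutBd} with one such achievable term and verify a gap of at most 1 bit.

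The routine pairings are as follows. Each ``$\C_{ji}$-type'' individual cut-set bound $R_i \le \log(1+\SNR_i) + \C_{ji}$ is handled by the digital bound $I(x_i;y_i|x_j) + (\C_{ji}-1)^+ = \log(1+\SNR_i) + (\C_{ji}-1)^+$ at receiver $i$; the cross-link cut-set $R_i \le \log(1+\INR_j) + \C_{ij}$ is handled by the symmetric digital bound at receiver $j$. The point-to-point cut-set $R_i \le \log(1+\SNR_i+\INR_j)$ is handled by either analog individual-rate bound, which (after a $2\times 2$ covariance determinant calculation) evaluates to $\log(1+\SNR_i+\tfrac{1}{2}\INR_j)$ or $\log(1+\tfrac{1}{2}\SNR_i+\INR_j)$, each within 1 bit. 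The two Z-type sum-rate cut-set bounds $R_1+R_2 \le \log(1+\SNR_i+\INR_i) + \C_{ji}$ are reproduced to within 1 bit by the digital sum-rate terms $I(x_1,x_2;y_i) + (\C_{ji}-1)^+ = \log(1+\SNR_i+\INR_i) + (\C_{ji}-1)^+$.

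The only non-routine matching is the joint-decoding cut-set $R_1+R_2 \le \log(1+\SNR_1+\INR_1+\SNR_2+\INR_2 + |h_{11}h_{22}-h_{12}h_{21}|^2)$, which must be handled by the analog sum-rate term $I(x_1,x_2;y_1,\what{y}_2)$. I would compute this mutual information from the $2\times 2$ Gaussian covariance of $(y_1,\what{y}_2)$ under unit-power inputs and invoke the algebraic identity
\begin{align*}
(1+\SNR_1+\INR_1)(1+\Delta_2+\SNR_2+\INR_2) - |h_{11}\overline{h_{21}} + h_{12}\overline{h_{22}}|^2 = |h_{11}h_{22} - h_{12}h_{21}|^2 + (1+\Delta_2)(1+\SNR_1+\INR_1) + \SNR_2 + \INR_2,
\end{align*}
which in turn follows from the orthogonal decomposition $(\SNR_1+\INR_1)(\SNR_2+\INR_2) = |h_{11}h_{22}-h_{12}h_{21}|^2 + |h_{11}\overline{h_{21}}+h_{12}\overline{h_{22}}|^2$. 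With $\Delta_2=1$ this yields $I(x_1,x_2;y_1,\what{y}_2) = \log\bigl(1+\SNR_1+\INR_1+\tfrac{1}{2}(\SNR_2+\INR_2+|h_{11}h_{22}-h_{12}h_{21}|^2)\bigr)$, which is within 1 bit of the target cut-set. I expect this identity to be the main obstacle, since one must see how the quantization penalty $(1+\Delta_2)$ and the cross-correlation $|h_{11}\overline{h_{21}}+h_{12}\overline{h_{22}}|^2$ conspire to reveal the MIMO coherent-combining term $|h_{11}h_{22}-h_{12}h_{21}|^2$. Once every cut-set term is matched to within 1 bit, collecting the gaps yields $\mscr{R}_{\OR} \subseteq \mscr{C} \subseteq \mscr{R}_{\OR} \oplus ([0,1]\times[0,1])$, the claimed 1-bit characterization.
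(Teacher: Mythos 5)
Your proposal is correct and follows essentially the same route as the paper: the paper simply invokes the computations already done in the proof of Claim \ref{claim_StrongGap} (with no private message, $\Delta_1=\Delta_2=1$, $\xi_1=\xi_2=1$), noting that the CMAC-CD cut-set bounds on $R_1+R_2$ lack the extra $\log\lp 1+\frac{\SNR_j}{1+\INR_j}\rp$ term of the Z-channel bound, which is what shrinks the sum-rate gap from $2$ to $1$. Your determinant expansion and the identity $|h_{11}\overline{h_{21}}+h_{12}\overline{h_{22}}|^2 + |h_{11}h_{22}-h_{12}h_{21}|^2 = (\SNR_1+\INR_1)(\SNR_2+\INR_2)$ is exactly the algebra behind the paper's evaluation of $I(x_1,x_2;y_1,\what y_2)$ in Claim \ref{claim_StrongGap}, so you have made explicit what the paper leaves to the reader.
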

\begin{proof}
Following the same line in the proof of Lemma \ref{lem_StrongGap}, we can conclude that the bounds in $\mscr{R}_{\OR}$ satisfies:
\begin{itemize}
\item $R_1$ bound is within $1$ bit to outer bounds;
\item $R_2$ bound is within $1$ bit to outer bounds;
\item $R_1+R_2$ bound is within $1$ bit to outer bounds.
\end{itemize}

This completes the proof.
\end{proof}

\section{One-Round Strategy versus Two-Round Strategy}
In Section \ref{sec_AsymmConstantGap} we show that the two-round strategy proposed in Section \ref{sec_Achievable} along with time-sharing achieves the capacity region to within 2 bits universally. One of the drawbacks of the two-round strategy, however, is the round-trip delay. The quantize-binning receiver cannot proceed to decoding until the other receiver decodes and forwards the bin indices back. The round-trip delay is two times the block length, which can be huge. To avoid such huge delay, fortunately in some cases, the one-round strategy $\STG_{\OR}$ described in Section \ref{subsec_StrongAchieve} suffices. One of such cases is the strong interference regime. This can be easily justified in the corresponding linear deterministic channel (LDC). At strong interference, all transmitted signals in the LDC are common. There is no useful information lies below the noise level since the signal is corrupted by the noise. Hence, quantize-binning at the noise level is sufficient to convey the useful information.

Another such cases is the symmetric set-up, where
\begin{align}
&\SNR = \SNR_1 = \SNR_2,\ \INR = \INR_1 = \INR_2;\ \C = \C_{12} = \C_{21}.
\end{align}

For the symmetric set-up, a natural performance measure is the symmetric capacity, defined as follows:
\begin{definition}[Symmetric Capacity]
\begin{align}
C_{\sym} &:= \sup \left\{ R: (R,R) \in \mscr{C}\right\}.
\end{align}
\end{definition}

It turns out that the one-round strategy suffices to achieve $C_{\sym}$ to within a constant number of bits.
\begin{theorem}[Constant Gap to the Symmetric Capacity]\label{thm_ORSymm}
$ $\par
The one-round strategy $\STG_{\OR}$ can achieve the symmtric capacity to within 3 bits.
\end{theorem}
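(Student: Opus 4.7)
The plan is as follows. In the symmetric setup, the interference regime is either weak ($\SNR > \INR$) or strong ($\SNR \le \INR$); mixed interference is impossible. The strong case is handled immediately by Lemma~\ref{lem_StrongGap}, which already places $\mscr{R}_{\OR}$ within 1 bit per user of the full capacity region and hence within 1 bit of $C_{\sym}$, well inside the 3-bit budget.

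For the weak regime, I would specialize Theorem~\ref{thm_ErrProb} to the symmetric parameters with superposition split $Q_p = 1/\INR$ and quantization distortions $\Delta_1 = \Delta_2 = \max\{\SNR_p, 1\}$, so that $\xi_1 = \xi_2 \le \log 3$. Applying Fourier-Motzkin elimination to the four constraints at each receiver produces a polyhedral description in $(R_1, R_2)$. Restricting to the diagonal $R_1 = R_2 = R$, the achievable symmetric rate equals the minimum of three kinds of bounds: an $R$-bound from the $R_{1c}+R_{1p}$ constraint at receiver~1 (or its mirror), a $2R$-bound from summing the $R_{1c}+R_{2c}+R_{1p}$ constraint at receiver~1 with its mirror at receiver~2, and a $3R$-bound from combining a sum-type constraint at one receiver with an individual-type constraint at the other. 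These three would then be compared, term by term, against the symmetric versions of the outer bounds in Lemma~\ref{lem_OutBd}: the cut-set \eqref{eq_CutSetBd} for the $R$-bound; the ETW-type bound \eqref{eq_ETWBound} or SIMO cut-set \eqref{eq_SIMOBd} for the $2R$-bound; and \eqref{eq_Slope2Bd} or \eqref{eq_Slope2CutSetBd} for the $3R$-bound.

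The main obstacle will be the $3R$ step, because the corresponding $2R_1+R_2$ outer bound is precisely the one that forced time-sharing in the two-round analysis of Lemma~\ref{lem_WeakGap} via the troublesome constraint \eqref{eq_Trouble}. Here, however, Theorem~\ref{thm_ErrProb} shows that $\STG_{\OR}$ drops the $R_{2c}$-alone constraint \eqref{eq_CommRate} entirely, since receiver~1 is not required to decode $m_{2c}$ successfully. As a consequence the Fourier-Motzkin output produces only the looser $3R$-bounds arising from a sum constraint plus an individual constraint, and these can be compared directly with \eqref{eq_Slope2Bd} (or \eqref{eq_Slope2CutSetBd}) to within $9$ bits, which gives a $3$-bit per-user gap on $C_{\sym}$ after dividing by $3$. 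The $R$ and $2R$ comparisons, after absorbing the $\log 3$ rate loss from $\xi$ and the standard constants from evaluating the Gaussian mutual informations, yield tighter per-user gaps of at most $1$ and $1.5$ bits respectively and thus do not control the overall bound. Assembling the three gaps gives the claimed $3$-bit bound on the gap to $C_{\sym}$.
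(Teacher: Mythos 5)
Your outline of the strong case is exactly the paper's: Lemma~\ref{lem_StrongGap} gives a $1$-bit gap, done. For the weak case your plan is workable but both more elaborate than the paper's and, more importantly, premised on predictions that are wrong in two places, so the route you describe would not actually locate the binding $3$-bit gap.

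First, the paper never runs Fourier--Motzkin. It simply sets $R_{1c}=R_{2c}$, $R_{1p}=R_{2p}$ in Theorem~\ref{thm_ErrProb}, so the four receiver-$1$ constraints collapse to $R\le B:=\min\{I(x_{2c},x_1;y_1|x_{1c})+(\C-\xi)^+,\,I(x_{2c},x_1;y_1,\what y_2|x_{1c})\}$, $R\le C:=I(x_1;y_1|x_{2c})$, and $2R\le A+D$ with $A,D$ the $R_{1p}$ and $R_{1c}+R_{2c}+R_{1p}$ bounds; these three terms are then compared directly against the four terms of Corollary~\ref{cor_SymBd}. Second, your prediction that the $3R$ step is ``the main obstacle'' is misplaced: along the diagonal $R_1=R_2$ the one-round region's $2R_1+R_2$ facet ($3R\le A+B+D$, from $R_{1p}\le A$, $R_{1c}+R_{2p}\le B$, $R_{1c}+R_{2c}+R_{1p}\le D$) is \emph{redundant}, being implied by $R_1+R_2\le 2B$ and $R_1+R_2\le A+D$; correspondingly the outer bounds \eqref{eq_Slope2Bd}, \eqref{eq_Slope2CutSetBd} restricted to the diagonal are implied by \eqref{eq_ZBound} and \eqref{eq_CutSetBd}/\eqref{eq_ETWBound}, which is precisely why Corollary~\ref{cor_SymBd} lists no $3R$-type term at all. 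Third, your estimates of ``at most $1$ and $1.5$ bits'' for the $R$ and $2R$ comparisons are too optimistic: the controlling gap is in the $R$-bound, specifically the term $I(x_{2c},x_1;y_1,\what y_2|x_{1c})$ compared against $\log(1+\SNR+\INR)$, which the paper shows is within exactly $3$ bits (the $+1$ from $\Delta=\max\{\SNR_p,1\}$, $+1$ from $\INR_p\le 1$, $+1$ from $\SNR+\INR<1+2\SNR$). The $2R$-bound term involving the SIMO cut-set gives only $(1+\log 5)/2<2$ bits. So if you execute your plan you will arrive at $3$ bits, but from the $R$-constraint, not the $3R$ one you flagged.
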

\begin{proof}
See Appendix \ref{app_PfORSymm}.
\end{proof}

The justification in the corresponding LDC is again simple. Since the performance measure in which we are interested is the symmetric capacity, we can without loss of generality assume that both transmitters are transmitting at full private rate, that is, the entropy of each user's private signals is equal to the number of levels below the private signal level. Therefore at each receiver, there is no useful information below the private signal level, and quantize-binning at the private signal level suffices to convey the useful information.

\section{Generalized Degrees of Freedom Characterization}\label{sec_Dof}
With the characterization of the capacity region to within a constant number of bits, we attempt to answer the original fundamental question: how much interference can one bit of receiver cooperation mitigate? For simplicity, we consider the symmetric set-up.


By Lemma \ref{lem_OutBd} and Theorem \ref{thm_ApproxCapacity}, we have the characterization of the symmetric capacity to within 2 bits:
\begin{corollary}[Approximate Symmetric Capacity]\label{cor_SymBd}
Let $\overline{C}_{\sym}$ be the minimum of the below four terms:
\begin{align}
&\log(1+\SNR)+ \min\left\{ \C,\log\left(1+\frac{\INR}{1+\SNR}\right) \right\},\\
&\log\left(1+\INR+\frac{\SNR}{1+\INR}\right) + \C,\\
&\frac{1}{2}\log\left(1+\SNR+\INR\right) + \frac{1}{2}\log\left(1+\frac{\SNR}{1+\INR}\right) + \frac{1}{2}\C,\\
&\frac{1}{2}\log\lp 1+2\SNR+2\INR + |h_{11}h_{22} - h_{12}h_{21}|^2 \rp.
\end{align}

Then, $\overline{C}_{\sym}-2 \le C_{\sym} \le \overline{C}_{\sym}$.
\end{corollary}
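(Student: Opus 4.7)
The plan is to prove the two inequalities $C_{\sym}\le\overline{C}_{\sym}$ and $C_{\sym}\ge\overline{C}_{\sym}-2$ separately, leveraging only Lemma \ref{lem_OutBd} for the converse and Theorem \ref{thm_ApproxCapacity} for the achievability.

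For the converse, I would specialize the outer region $\overline{\mscr{C}}$ of Lemma \ref{lem_OutBd} to the symmetric channel $\SNR_1=\SNR_2=\SNR$, $\INR_1=\INR_2=\INR$, $\C_{12}=\C_{21}=\C$ and evaluate along the diagonal $R_1=R_2=R$. The single-user cut-set bound \eqref{eq_CutSetBd} gives the first quantity of $\overline{C}_{\sym}$ verbatim. Setting $R_1=R_2=R$ and dividing by two, the sum-rate bounds \eqref{eq_ETWBound}, \eqref{eq_ZBound}, and \eqref{eq_SIMOBd} yield respectively the second, third, and fourth quantities. Since every $(R,R)\in\overline{\mscr{C}}$ must satisfy each of these, $C_{\sym}\le\overline{C}_{\sym}$ follows.

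For the achievability, I invoke the two-bit gap result Theorem \ref{thm_ApproxCapacity}, which gives $\overline{\mscr{C}}\subseteq\mscr{R}\oplus([0,2]\times[0,2])$. Once it is known that $(\overline{C}_{\sym},\overline{C}_{\sym})$ lies in $\overline{\mscr{C}}$, there exists a point $(R_1^\ast,R_2^\ast)\in\mscr{R}$ with both coordinates at least $\overline{C}_{\sym}-2$; downward-monotonicity of $\mscr{R}$ then shows $(\overline{C}_{\sym}-2,\overline{C}_{\sym}-2)\in\mscr{R}\subseteq\mscr{C}$, hence $C_{\sym}\ge\overline{C}_{\sym}-2$.

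The main obstacle is confirming that the symmetric rate point $(\overline{C}_{\sym},\overline{C}_{\sym})$ actually belongs to $\overline{\mscr{C}}$; the four constraints appearing in the definition of $\overline{C}_{\sym}$ are enforced by construction, but the $2R_1+R_2$ and $R_1+2R_2$ bounds \eqref{eq_Slope2Bd} and \eqref{eq_Slope2CutSetBd} are not. For \eqref{eq_Slope2Bd}, its symmetric form $3R\le\log A+\log B+\log D+2\C$ with $A=1+\INR+\SNR/(1+\INR)$, $B=1+\SNR/(1+\INR)$, $D=1+\SNR+\INR$ is precisely the termwise sum of the symmetric specializations of \eqref{eq_ETWBound} (giving $R\le\log A+\C$) and \eqref{eq_ZBound} (giving $2R\le\log D+\log B+\C$), so it is implied by them. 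An analogous convex-combination argument combining \eqref{eq_CutSetBd} with \eqref{eq_SIMOBd} absorbs \eqref{eq_Slope2CutSetBd}. Once these dominations of the slope-2 constraints in the symmetric regime are verified, the argument closes.
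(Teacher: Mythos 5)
Your converse is correct and matches the paper's intent exactly: each of the four terms is the symmetric specialization of one of the outer bounds~\eqref{eq_CutSetBd}, \eqref{eq_ETWBound}, \eqref{eq_ZBound}, \eqref{eq_SIMOBd}. You are also right that the achievability hinges entirely on whether $(\overline{C}_{\sym},\overline{C}_{\sym})\in\overline{\mscr{C}}$, i.e., on showing the slope-$2$ facets are redundant on the diagonal. Your treatment of~\eqref{eq_Slope2Bd} is fine: writing $A=1+\INR+\frac{\SNR}{1+\INR}$, $B=1+\frac{\SNR}{1+\INR}$, $D=1+\SNR+\INR$, the symmetric~\eqref{eq_Slope2Bd} reads $3R\le\log A+\log B+\log D+2\C$, which is exactly the sum of $R\le\log A+\C$ (from~\eqref{eq_ETWBound}) and $2R\le\log D+\log B+\C$ (from~\eqref{eq_ZBound}), so it is implied.

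The claimed absorption of~\eqref{eq_Slope2CutSetBd} by ``\eqref{eq_CutSetBd}$+$\eqref{eq_SIMOBd}'' does not hold, and this is a genuine gap. Combining the symmetric~\eqref{eq_CutSetBd} (in either its $\log(1+\SNR)+\C$ or its $\log(1+\SNR+\INR)$ form) with~\eqref{eq_SIMOBd} gives
$3R\le\log(1+\SNR)+\C+\log E$ or $3R\le\log(1+\SNR+\INR)+\log E$, where $E=1+2\SNR+2\INR+|h_{11}h_{22}-h_{12}h_{21}|^2$.
Neither is termwise dominated by the symmetric~\eqref{eq_Slope2CutSetBd} RHS $\log A'+\log(1+\SNR+\INR)+\C$, with $A'=\big(\INR(1+\SNR+\INR)+E\big)/(1+\INR)$. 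For instance, the second form would require $\log E\le \log A'+\C$, i.e.\ $(1+\INR)E\le 2^\C(\INR(1+\SNR+\INR)+E)$, which already fails at $\C=0$ since $E>1+\SNR+\INR$. Worse, $(\overline{C}_{\sym},\overline{C}_{\sym})\notin\overline{\mscr{C}}$ can actually occur: e.g.\ $\SNR=10^5$, $\INR=10^4$, $\C\approx12.7$, and $|h_{11}h_{22}-h_{12}h_{21}|^2=(\SNR-\INR)^2$ give $\overline{C}_{\sym}\approx16.46$ while the symmetric~\eqref{eq_Slope2CutSetBd} imposes $R\lesssim16.43$. Consequently the black-box invocation $\overline{\mscr{C}}\subseteq\mscr{R}\oplus[0,2]^2$ cannot by itself deliver $C_{\sym}\ge\overline{C}_{\sym}-2$. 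To close the proof you would instead need the \emph{per-facet} gap constants from Claims~\ref{claim_WeakGap}--\ref{claim_StrongGap}: the $2R_1+R_2$ achievability gap is $\log 48$ bits, which on the symmetric rate costs only $(\log 48)/3<2$ bits; this slack is what compensates for the slope-$2$ outer constraint cutting slightly below the four-term minimum, and making that trade-off precise (over all regimes and both slope-$2$ bounds) is the remaining work.
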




\subsection{Generalized Degrees of Freedom}
To study the behavior of the system performance in the linear region, we use the notion of \emph{generalized degrees of freedom} (g.d.o.f.), which is originally proposed in \cite{EtkinTse_07}. A natural extension from the definition in \cite{EtkinTse_07} would be the following: let
\begin{align}
&\lim_{\SNR\rightarrow\infty}\frac{\log\INR}{\log\SNR} = \alpha;\ \lim_{\SNR\rightarrow\infty}\frac{\C}{\log\SNR} = \kappa,
\end{align}
and define the number of generalized degrees of freedom per user as
\begin{align}\label{eq_Dof}
d := \lim_{\begin{subarray}{c} \mathrm{fix}\ \alpha,\kappa\\ \SNR\rightarrow\infty\end{subarray}}\frac{C_{\sym}}{\log\SNR},
\end{align}
if the limit exists. With fixed $\alpha$ and $\kappa$, however, there are certain channel realizations under which \eqref{eq_Dof} has different values and hence the limit does not exist. This happens when $\alpha = 1$, where the phases of the channel gains matter both in inner and outer bounds. In particular, its value can depend on whether the system MIMO matrix is well-conditioned or not.



From the above discussion we see that the limit does not exist, since for different channel phases and different $\INR$ settings the value of \eqref{eq_Dof} may be different. The reason is that, the original notion proposed in \cite{EtkinTse_07} cannot capture the impact of {\it phases} in MIMO situations, while from Corollary \ref{cor_SymBd} we see that our results depend on phases heavily, if the receiver-cooperative link capacity $\C$ is so large that MIMO sum-rate cut-set bound becomes active. Therefore, instead of claiming that the limit (\ref{eq_Dof}) exists for {\it all} channel realizations, we pose a reasonable distribution, namely, i.i.d. uniform distribution, on the phases, show that the limit exists {\it almost surely}, and define the limit to be the number of {\it generalized degrees of freedom} per user.

\begin{lemma}\label{prop_ConvDof}
Let 
\begin{align}
|h_{ij}| = g_{ij},\ \angle h_{ij}=\Theta_{ij},\ \forall i,j\in\{1,2\},
\end{align}
where $\Theta_{ij}$ are i.i.d. uniformly distributed over $[0,2\pi]$. Then the limit (\ref{eq_Dof}) exists almost surely, and is defined as the number of generalized degrees of freedom (per user) in the system.
\end{lemma}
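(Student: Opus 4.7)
The plan is to reduce the limit in question to an analysis of the explicit outer bound $\overline{C}_{\sym}$ from Corollary \ref{cor_SymBd}. Since $\overline{C}_{\sym}-2 \le C_{\sym} \le \overline{C}_{\sym}$, dividing by $\log\SNR$ and letting $\SNR\to\infty$ with $\alpha,\kappa$ held fixed shows that $C_{\sym}/\log\SNR$ converges if and only if $\overline{C}_{\sym}/\log\SNR$ does, and to the same value. Because a minimum of finitely many convergent sequences converges to the minimum of the limits, it suffices to compute the individual limit of each of the four terms composing $\overline{C}_{\sym}$ and take their minimum.

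The first three terms in Corollary \ref{cor_SymBd} are functions only of the real magnitudes $\SNR$, $\INR$, and $\C$, so each, divided by $\log\SNR$, has a deterministic limit that is an elementary piecewise-linear function of $(\alpha,\kappa)$ read off by inspection. The only phase-dependent term is
\begin{equation*}
T \,:=\, \tfrac{1}{2}\log\bigl(1+2\SNR+2\INR+|h_{11}h_{22}-h_{12}h_{21}|^2\bigr).
\end{equation*}
Using $|h_{11}|^2=|h_{22}|^2=\SNR$ and $|h_{12}|^2=|h_{21}|^2=\INR$, I expand
\begin{equation*}
|h_{11}h_{22}-h_{12}h_{21}|^2 \,=\, \SNR^2+\INR^2-2\,\SNR\cdot\INR\,\cos\Phi, \qquad \Phi \,:=\, \Theta_{11}+\Theta_{22}-\Theta_{12}-\Theta_{21} \bmod 2\pi.
\end{equation*}
Since the $\Theta_{ij}$ are i.i.d.\ uniform on $[0,2\pi]$, a standard convolution (or characteristic-function) argument shows $\Phi$ is itself uniform on $[0,2\pi]$, and in particular $\Pr\{\cos\Phi = 1\}=0$.

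For every realization with $\cos\Phi\neq 1$ (an event of probability one), I would show that $\log|h_{11}h_{22}-h_{12}h_{21}|^2/\log\SNR \to 2\max(1,\alpha)$: for $\alpha\ne 1$ one of $\SNR^2$ or $\INR^2$ deterministically dominates the other two summands, while for $\alpha=1$ the elementary inequality $1+c^2-2c\cos\Phi>0$ (valid for any $c>0$ whenever $\cos\Phi\ne 1$) rules out cancellation as $\INR/\SNR$ accumulates at any value $c$. Hence $T/\log\SNR \to \max(1,\alpha)$ almost surely, and taking the minimum of the four limits yields the existence of $d$ almost surely.

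The main obstacle is precisely the $\alpha=1$ case: the three quantities $\SNR^2,\INR^2,\SNR\cdot\INR$ all share the same exponential order, so a pathological deterministic alignment $\cos\Phi=1$ could collapse $|h_{11}h_{22}-h_{12}h_{21}|^2$ to $(\SNR-\INR)^2$, whose exponential order relative to $\log\SNR$ is not controlled by $\alpha$ alone and can even oscillate along different subsequences. The continuous-phase hypothesis is exactly the minimal assumption that assigns probability zero to this exceptional event, restoring almost-sure convergence and making $d$ a genuine function of $(\alpha,\kappa)$.
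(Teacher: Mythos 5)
Your proposal is correct and follows essentially the same approach as the paper's proof: you isolate the single phase-dependent term $\frac{1}{2}\log\lp 1+2\SNR+2\INR+|h_{11}h_{22}-h_{12}h_{21}|^2\rp$, expand the cross-determinant as $\SNR^2+\INR^2-2\cos\Theta\,\SNR\,\INR$ with $\Theta$ uniform on $[0,2\pi]$, handle $\alpha\ne 1$ by dominance, and note that for $\alpha=1$ the only pathological alignment $\cos\Theta=1$ is a null event. The paper states the $\alpha=1$ lower bound via the half-angle rewriting $(1+\SNR+\INR)^2-4\cos^2(\Theta/2)\SNR\,\INR$ rather than your $1+c^2-2c\cos\Theta$ factoring, but these are the same estimate.
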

\begin{proof}
We leave the proof in Appendix \ref{app_PfConvDof}.
\end{proof}

Now that the number of g.d.o.f. is well-defined, we can give the following theorem:
\begin{theorem}[Number of Generalized Degrees of Freedom Per User]\label{thm_Dof}
We have a direct consequence from Corollary \ref{cor_SymBd}:\par
For $0 \le \alpha < 1$,
\begin{align}
d =  \min\left\{ 1, \max\lp \alpha,1-\alpha\rp + \kappa, 1-\frac{\alpha-\kappa}{2} 
\right\}.
\end{align}

For $\alpha \ge 1$,
\begin{align}
d = \min\left\{
\alpha, 1 + \kappa, \frac{\alpha+\kappa}{2} 
\right\}.
\end{align}
\end{theorem}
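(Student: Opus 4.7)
The plan is to evaluate, term by term, the limit of $\overline{C}_{\sym}/\log\SNR$ from Corollary \ref{cor_SymBd} in the regime $\SNR\to\infty$, $\log\INR/\log\SNR\to\alpha$, $\C/\log\SNR\to\kappa$. Since Corollary \ref{cor_SymBd} guarantees $\overline{C}_{\sym}-2 \le C_{\sym} \le \overline{C}_{\sym}$, the additive $2$-bit gap becomes $o(1)$ after dividing by $\log\SNR$ and can be discarded, so $d = \lim_{\SNR\to\infty}\overline{C}_{\sym}/\log\SNR$ whenever this limit exists in the almost-sure sense guaranteed by Lemma \ref{prop_ConvDof}.

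Three of the four bounds in Corollary \ref{cor_SymBd} involve only $\SNR$, $\INR$, and $\C$, all of which grow as pure powers of $\SNR$ with fixed exponents, so their prelogs can be read off directly. The first (individual-rate plus cooperation) bound contributes $1+\min\{\kappa,(\alpha-1)^+\}$, because $\INR/(1+\SNR)$ has exponent $\alpha-1$. The Z-channel type second bound contributes $\max(\alpha,1-\alpha)+\kappa$ when $\alpha\le 1$ and $\alpha+\kappa$ when $\alpha>1$, since the two summands $\INR$ and $\SNR/(1+\INR)$ have exponents $\alpha$ and $(1-\alpha)^+$. The $\tfrac12$-weighted third bound contributes $\tfrac12\max(1,\alpha)+\tfrac12(1-\alpha)^++\tfrac12\kappa$, which simplifies to $1-\tfrac12(\alpha-\kappa)$ for $\alpha\le 1$ and to $\tfrac12(\alpha+\kappa)$ for $\alpha\ge 1$.

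The one delicate step, which I expect to be the main obstacle, is the fourth (MIMO/SIMO cut-set) bound $\tfrac12\log(1+2\SNR+2\INR+|h_{11}h_{22}-h_{12}h_{21}|^2)$, whose determinant term depends on channel phases. Writing $h_{11}h_{22}-h_{12}h_{21} = g_{11}g_{22}e^{i(\Theta_{11}+\Theta_{22})}-g_{12}g_{21}e^{i(\Theta_{12}+\Theta_{21})}$, the two phase sums are independent and uniformly distributed on $[0,2\pi]$ under the hypothesis of Lemma \ref{prop_ConvDof}, so they almost surely do not produce a cancellation that reduces the order of magnitude. Consequently $|h_{11}h_{22}-h_{12}h_{21}|^2 = \Theta(\max(\SNR^2,\INR^2))$ almost surely, and the prelog of the fourth bound equals $\max(1,\alpha)$. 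This is precisely where Lemma \ref{prop_ConvDof} is essential, since without the uniform-phase hypothesis the determinant could collapse on a measure-zero but non-empty set, and the resulting limit along $\alpha=1$ would be ambiguous.

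Combining these four prelogs and pruning dominated terms yields the two claimed formulas. For $\alpha<1$, term 4 contributes $1$ which coincides with term 1 after noting $\min\{\kappa,(\alpha-1)^+\}=0$, and terms 2 and 3 furnish the remaining active bounds, giving $d=\min\{1,\max(\alpha,1-\alpha)+\kappa,1-\tfrac12(\alpha-\kappa)\}$. For $\alpha\ge 1$, term 1 collapses to $\min\{1+\kappa,\alpha\}$, term 4 contributes $\alpha$, term 2's value $\alpha+\kappa$ exceeds $\alpha$ and is therefore dominated and redundant, and term 3 contributes $\tfrac12(\alpha+\kappa)$, yielding $d=\min\{\alpha,1+\kappa,\tfrac12(\alpha+\kappa)\}$. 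Apart from the phase-dependent argument for term 4, every step is routine power counting.
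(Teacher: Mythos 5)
Your proposal is correct and takes the same route the paper intends (the paper explicitly calls the theorem ``a direct consequence from Corollary~\ref{cor_SymBd}'' and omits the computation). You correctly read off the prelogs of the four bounds: term~1 gives $1+\min\{\kappa,(\alpha-1)^+\}=\min\{1+\kappa,\alpha\}$ for $\alpha\ge 1$ and $1$ for $\alpha<1$; term~2 gives $\max(\alpha,(1-\alpha)^+)+\kappa$; term~3 gives $\tfrac12\max(1,\alpha)+\tfrac12(1-\alpha)^++\tfrac12\kappa$; and term~4 gives $\max(1,\alpha)$ almost surely by Lemma~\ref{prop_ConvDof}, with the phase hypothesis needed exactly at $\alpha=1$. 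The pruning of dominated terms (term~4 redundant for $\alpha<1$, term~2 redundant for $\alpha\ge 1$) is also accurate, and the two formulas agree at $\alpha=1$, so the split is consistent.
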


Numerical plots for g.d.o.f. are given in Fig. \ref{fig_Dof}. We observe that at different values of $\alpha$, the gain from cooperation varies. By investigating the g.d.o.f., we conclude that at high $\SNR$, when $\INR$ is below 50\% of $\SNR$ in dB scale, one-bit cooperation per direction buys roughly one-bit gain per user until full receiver cooperation performance is reached, while when $\INR$ is between 67\% and 200\% of $\SNR$ in dB scale, one-bit cooperation per direction buys roughly half-bit gain per user until saturation.

\begin{figure}[htbp]
{\center
\includegraphics[width=4in]{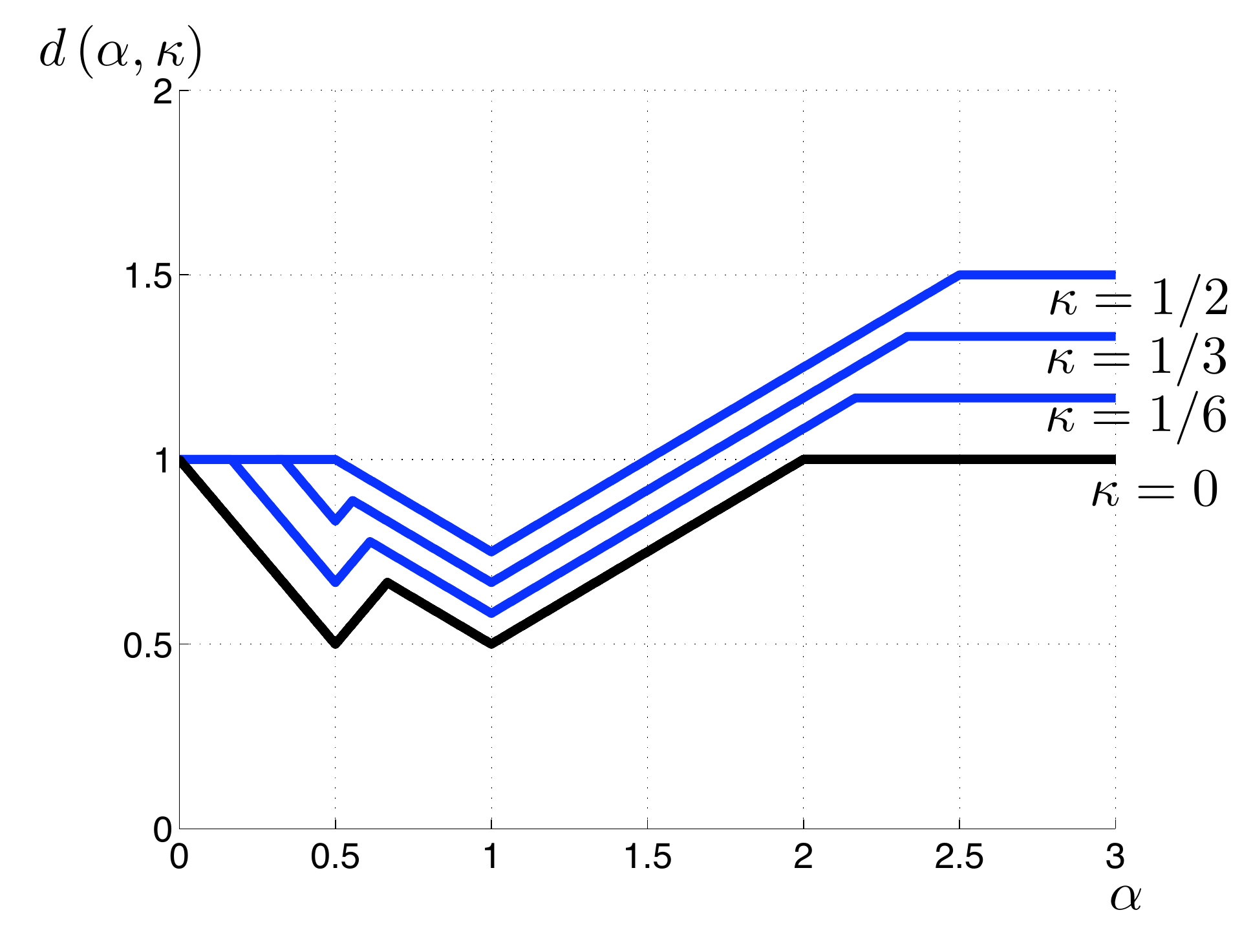}
\caption{Generalized Degrees of Freedom}
\label{fig_Dof}
}
\end{figure}

\subsection{Gain from Limited Receiver Cooperation}
The fundamental behavior of the gain from receiver cooperation is explained in the rest of this section, by looking at two particular points: $\alpha=\frac{1}{2}$ and $\alpha=\frac{2}{3}$. Furthermore, we use the linear deterministic channel (LDC) for illustration.
\begin{figure}[htbp]
{\center
\subfigure[]{\includegraphics[width=3in]{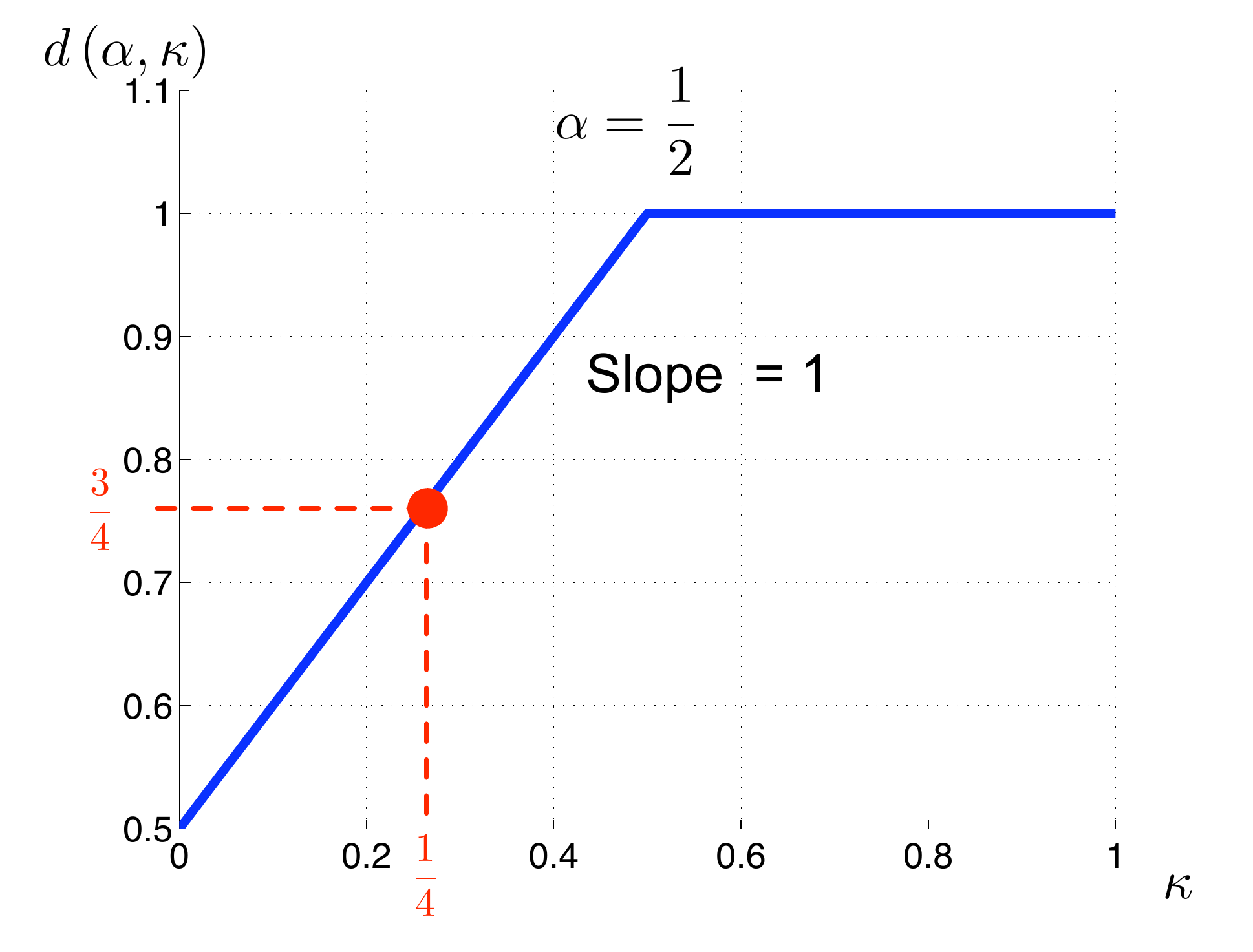}}
\subfigure[]{\includegraphics[width=3in]{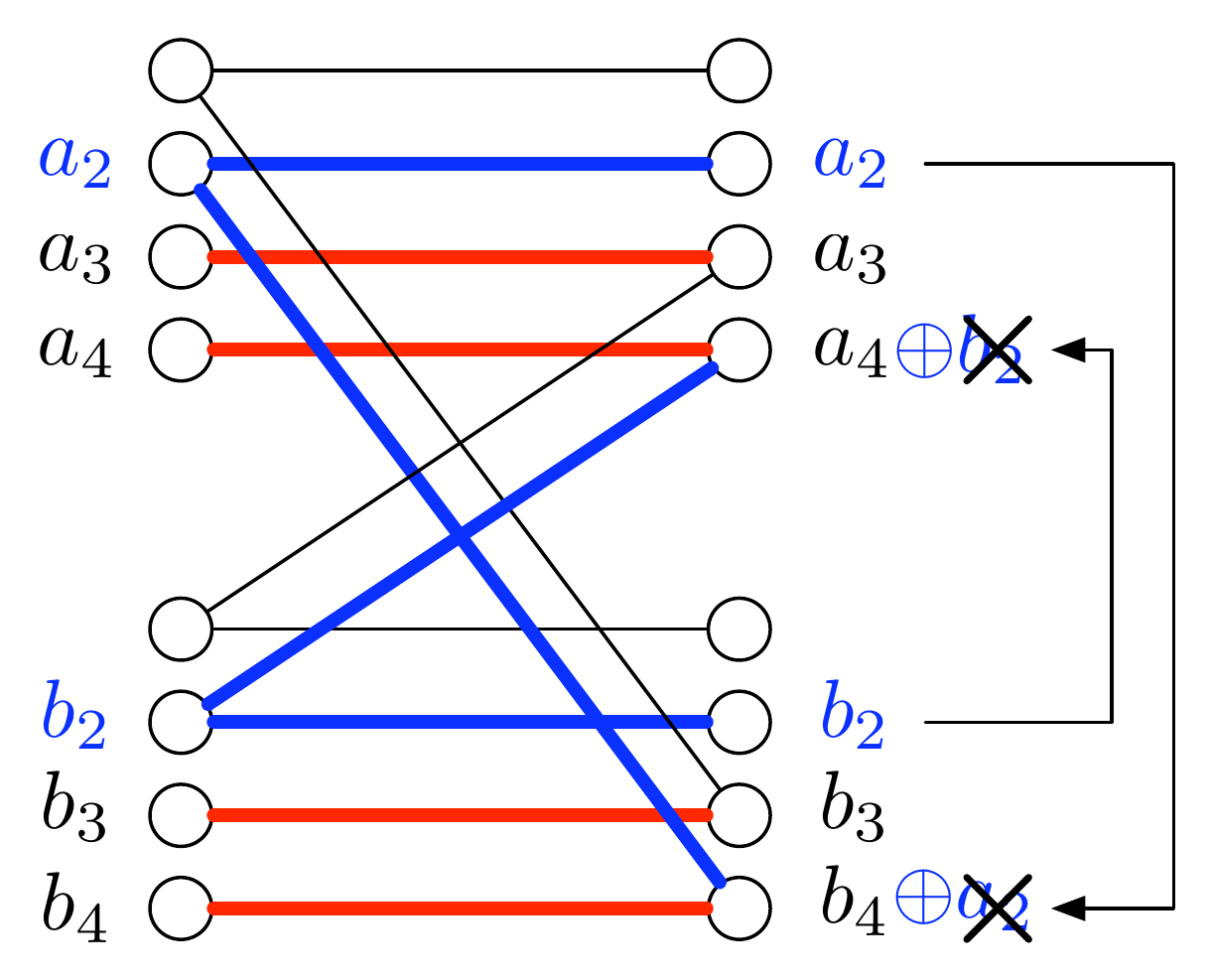}}
\subfigure[]{\includegraphics[width=3in]{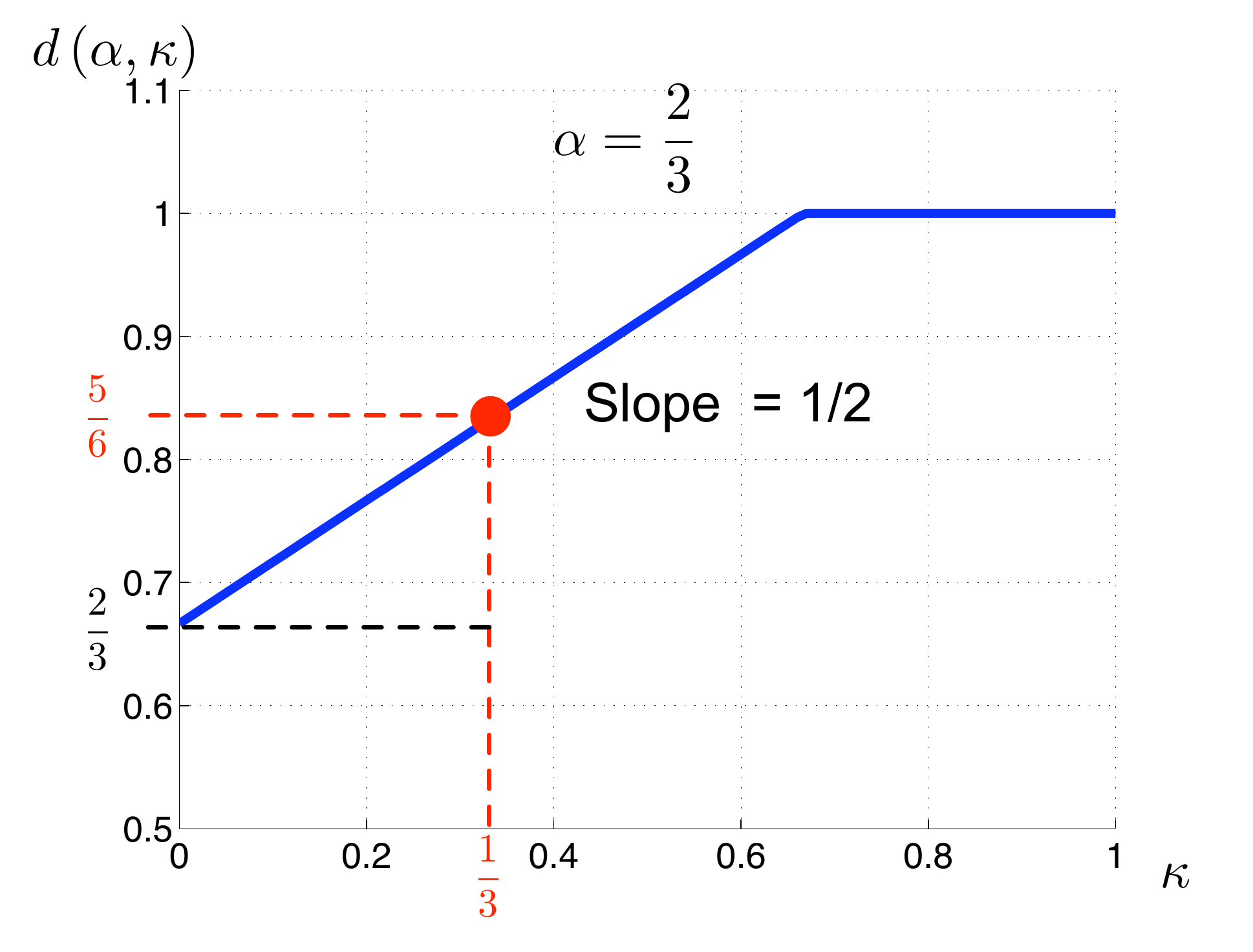}}
\subfigure[]{\includegraphics[width=3in]{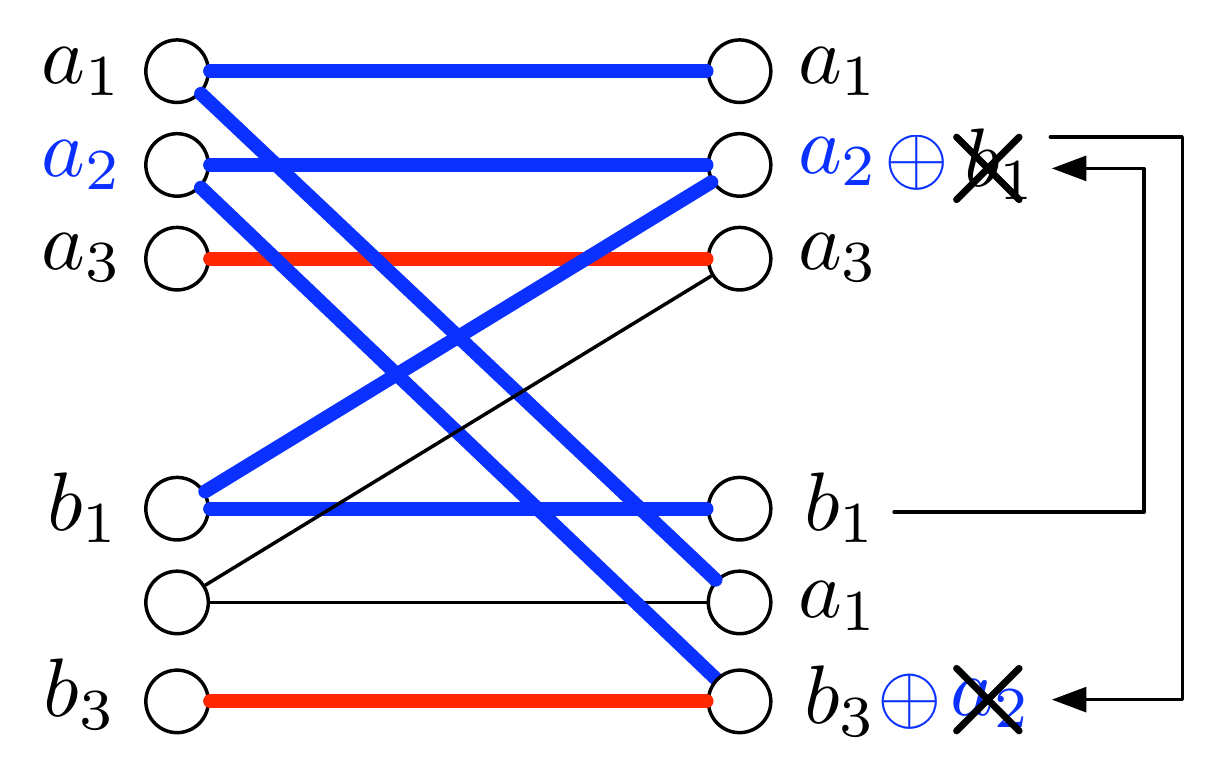}}
\caption{Gain from Cooperation}
\label{fig_Regimes}
}
\end{figure}

At $\alpha = \frac{1}{2}$, the plot of $d$ versus $\kappa$ is given in Fig. \ref{fig_Regimes}.(a). The slope is 1 until full receiver cooperation performance is reached, implying that one-bit cooperation buys one more bit per user. We look at a particular point $\kappa=\frac{1}{4}$ and use its corresponding LDC (Fig. \ref{fig_Regimes}.(b)) to provide insights. Note that 1 bit in the LDC corresponds to $\frac{1}{4}\log\SNR$ in the Gaussian channel, and since $\C \approx \frac{1}{4}\log\SNR$, in the corresponding LDC each receiver is able to sent one-bit information to the other. Without cooperation, the optimal way is to turn on bits not causing interference, that is, the \emph{private} bits $a_3,a_4,b_3,b_4$. We cannot turn on more bits without cooperation since it causes collisions, for example, at the fourth level of receiver 2 if we turn on $a_2$ bit. Now with receiver cooperation, we want to support two more bits $a_2,b_2$. Note that prior to turning on $a_2,b_2$, there are ``holes" left in receiver signal spaces, and turning on each of these bits only causes one collision at one receiver. Therefore, we need 1 bit in each direction to resolve the collision at each receiver. We can achieve 3 bits per user in the corresponding LDC and $d = \frac{3}{4}$ in the Gaussian channel. We cannot turn on more bits in the LDC since it causes collisions while no cooperation capability is left.

At $\alpha = \frac{2}{3}$, the plot of $d$ versus $\kappa$ is given in Fig. \ref{fig_Regimes}.(c). The slope is $\frac{1}{2}$ until full receiver cooperation performance is reached, implying that two-bit cooperation buys one more bit per user. We look at a particular point $\kappa=\frac{1}{3}$ and use its corresponding LDC (Fig. \ref{fig_Regimes}.(d)) to provide insights. Note that now 1 bit in the LDC corresponds to $\frac{1}{3}\log\SNR$ in the Gaussian channel, and since $\C \approx \frac{1}{3}\log\SNR$, in the corresponding LDC each receiver is able to sent one-bit information to the other. Without cooperation, the optimal way is to turn on bits $a_1,a_3,b_1,b_3$. We cannot turn on more bits without cooperation since it causes collisions, for example, at the second level of receiver 2 if we turn on $a_2$ bit. Now with receiver cooperation, we want to support one more bit $a_2$. Note that prior to turning on $a_2$, there are no ``holes" left in receiver signal spaces, and turning on $a_2$ causes collisions at \emph{both} receivers. Therefore, we need 2 bits in total to resolve collisions at both receivers. We can achieve 5 bits in total in the corresponding LDC and $d = \frac{5}{6}$ in the Gaussian channel. We cannot turn on more bits in the LDC since it causes collision while no cooperation capability is left.

From above examples and illustrations, we see that whether \emph{one cooperation bit buys one more bit} or \emph{two cooperation bits buy one more bit} depends on whether there are ``holes" in receiver signal spaces before increasing data rates. The ``holes" play a central role not only in why conventional compress-forward is suboptimal in certain regimes, as mentioned in the previous section, but also in the fundamental behavior of the gain from receiver cooperation. We notice that in \cite{PrabhakaranViswanath_09}, there is a similar behavior about the gain from cooperation as discussed in Section 3.2. of \cite{PrabhakaranViswanath_09}. We conjecture that the behavior can be explained via the concept of ``holes" as well.

\begin{figure}[hbtp]
{\center
\subfigure[$\kappa=0$]{\includegraphics[width=3in]{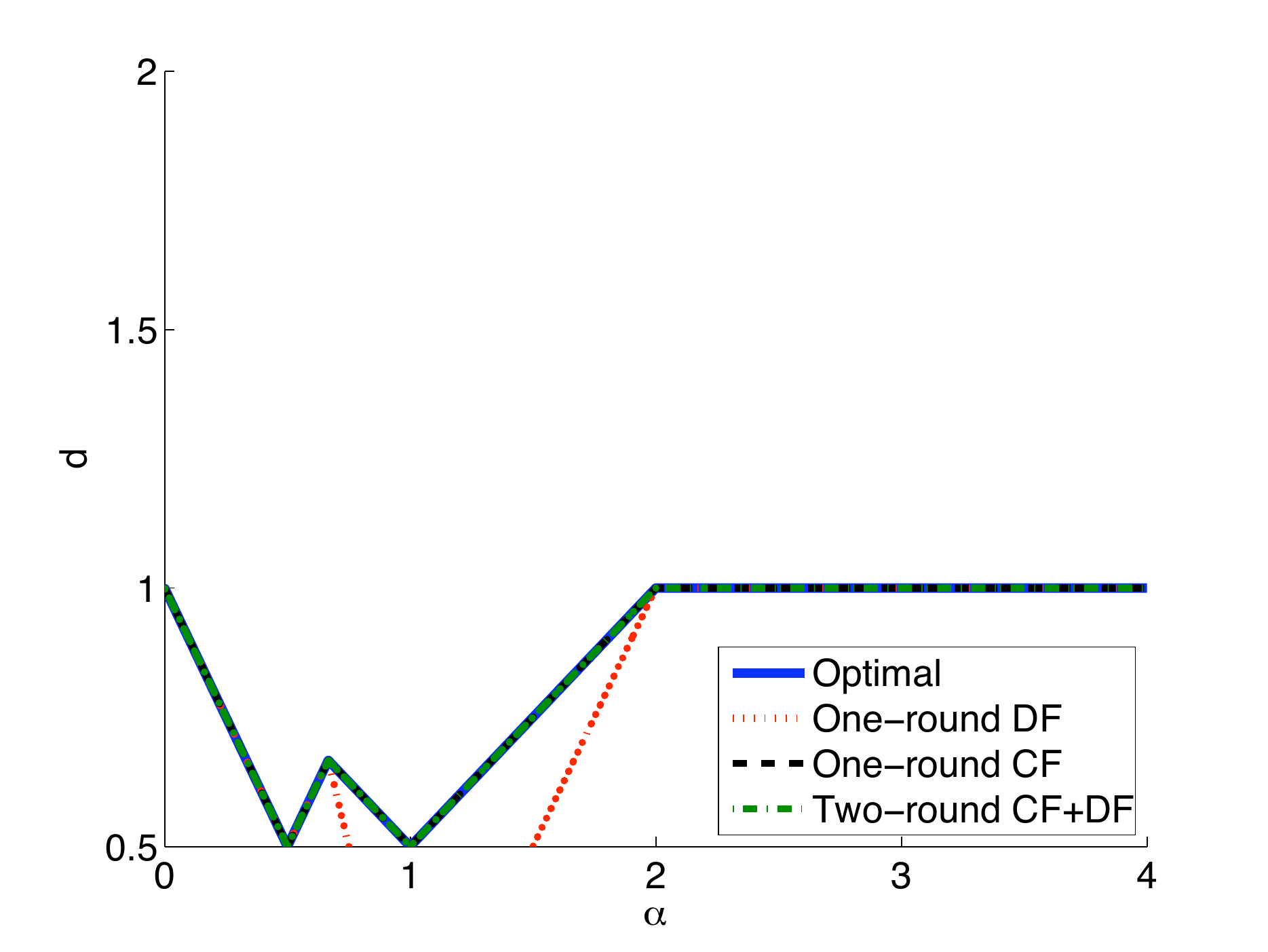}}
\subfigure[$\kappa=0.2$]{\includegraphics[width=3in]{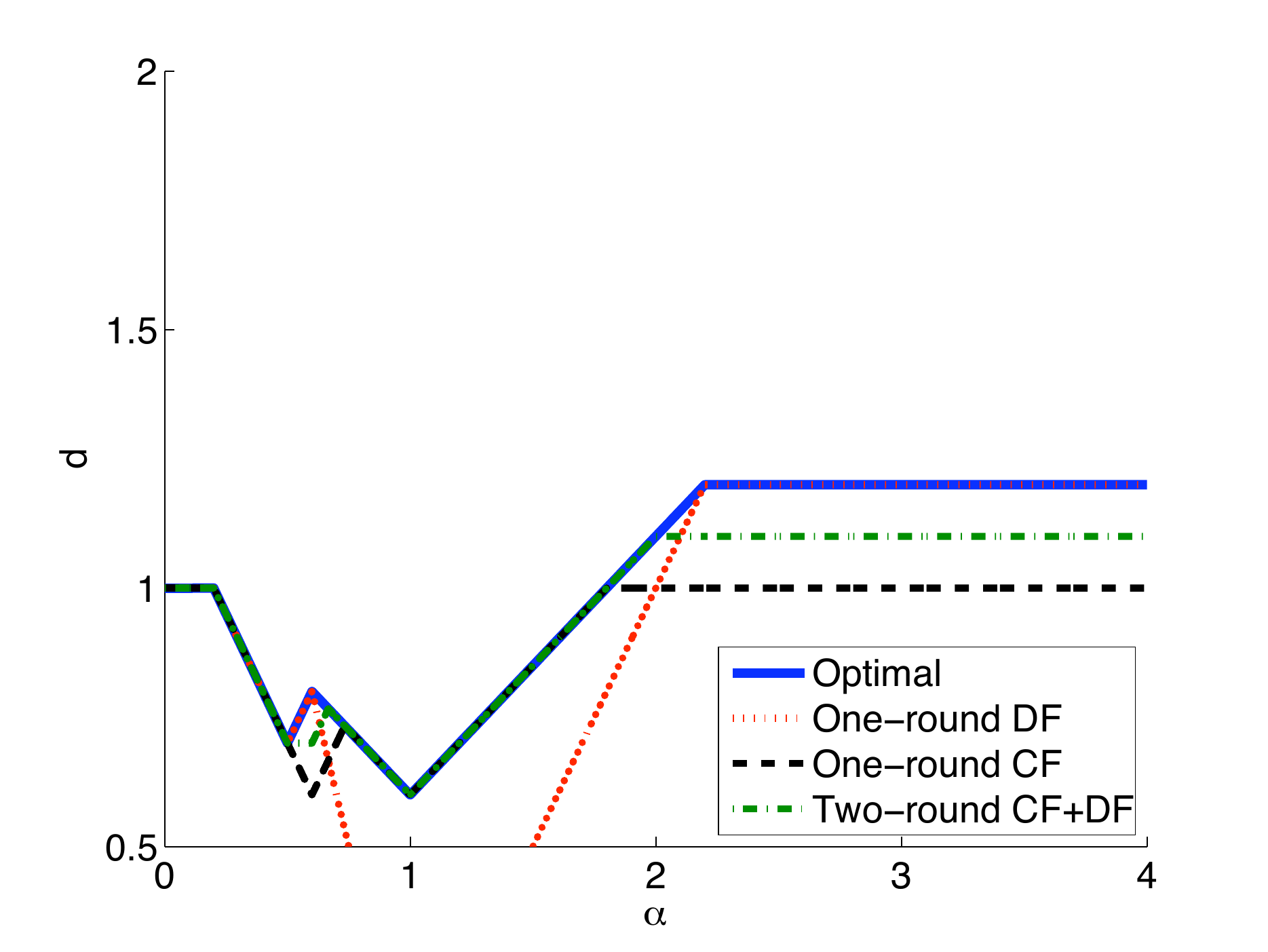}}
\subfigure[$\kappa=0.5$]{\includegraphics[width=3in]{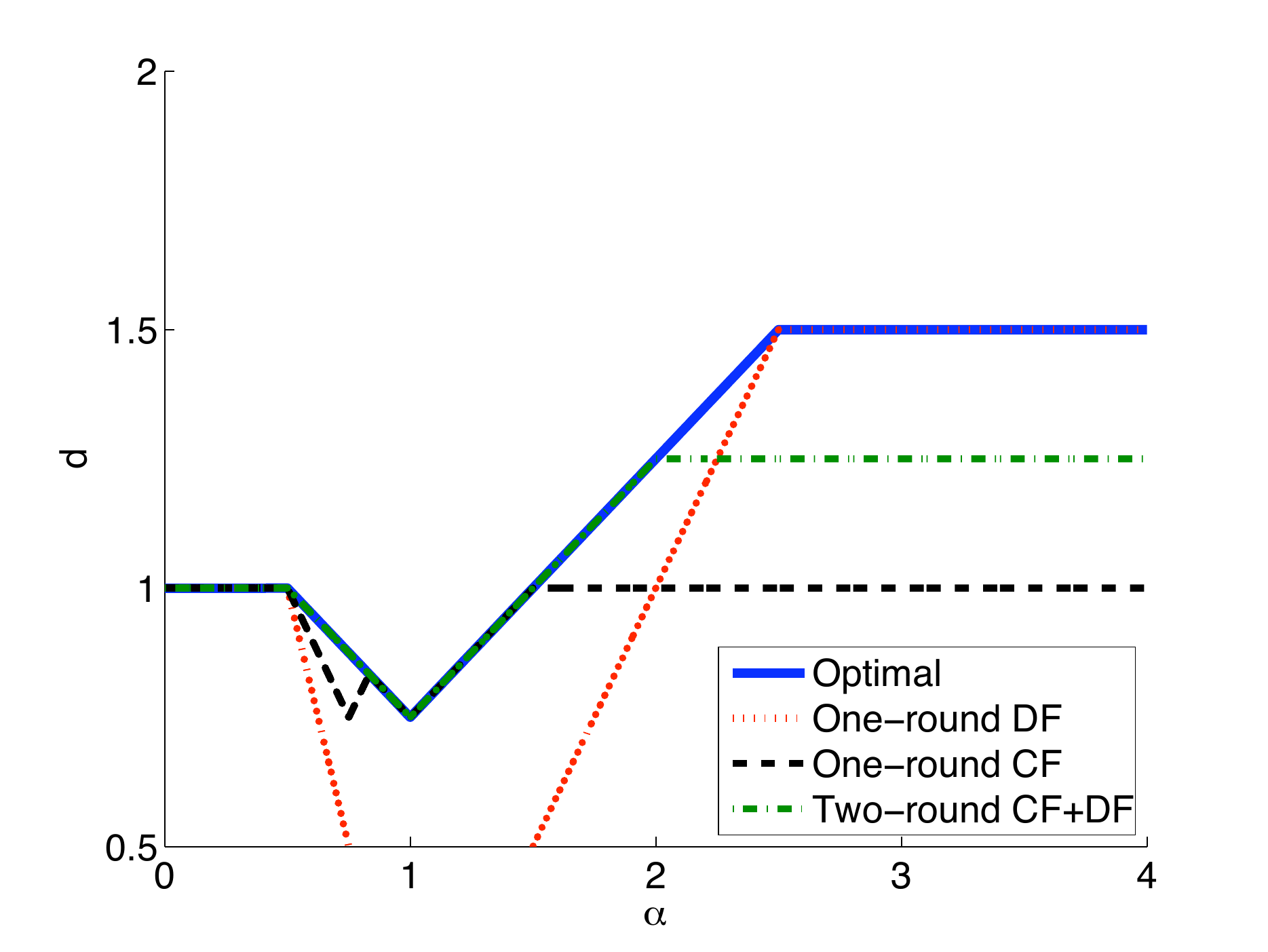}}
\subfigure[$\kappa=0.8$]{\includegraphics[width=3in]{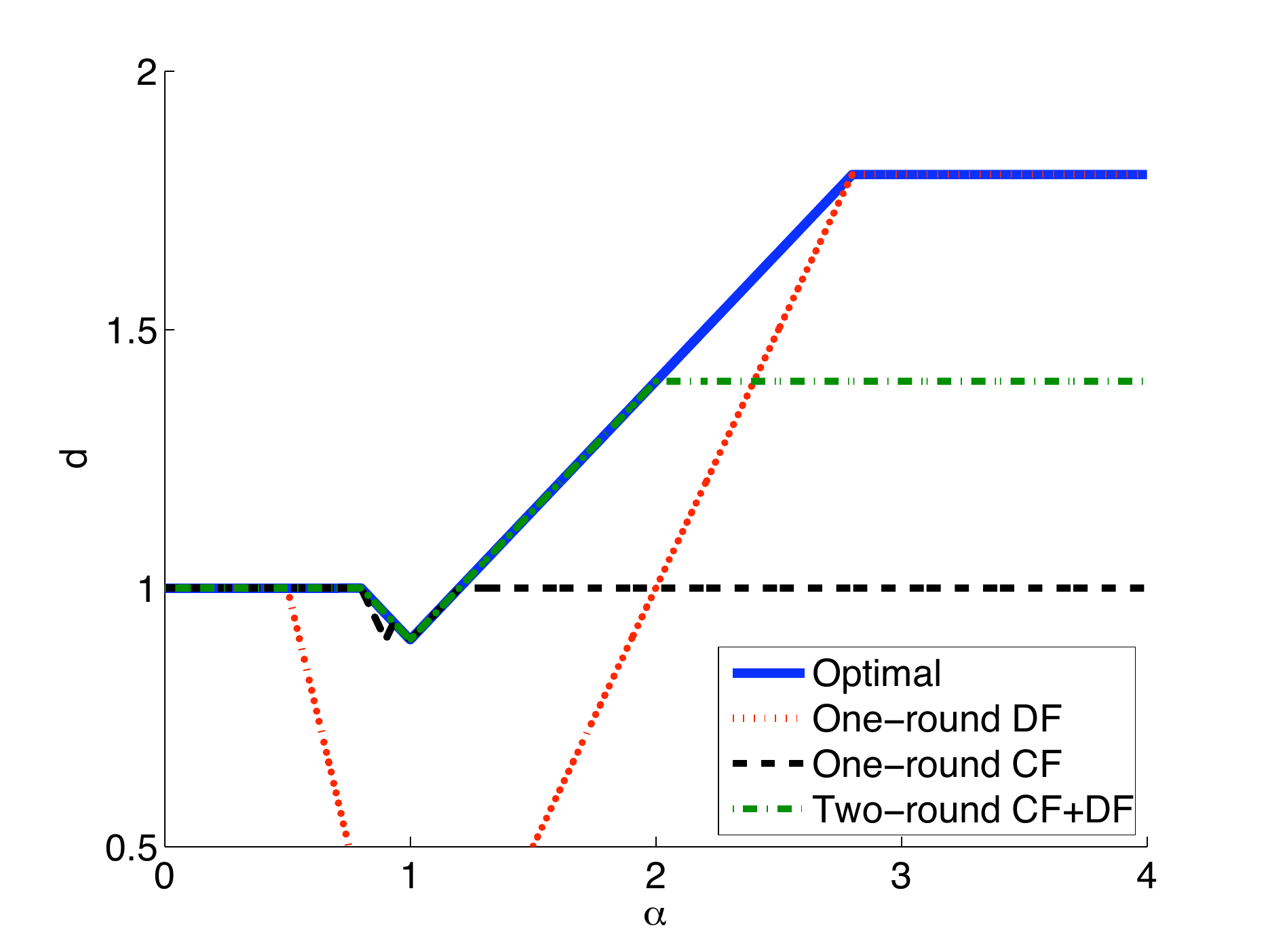}}
\caption{Number of Generalized Degrees of Freedom}
\label{fig_Dof2}
}
\end{figure}

\subsection{Comparison with Suboptimal Strategies}
Pointed out by the motivating example in Section \ref{sec_Example}, conventional compress-forward and decode-forward are not good for receiver cooperation to mitigate interference in certain regimes, which are used in \cite{Simeone_08} and \cite{YuZhou_08}. These suboptimal schemes include:
\begin{itemize}
\item[(1)] One-round compress-forward (CF) strategy: the conventional compress-forward is used for the two receivers to first exchange information and then decode.
\item[(2)] One-round decode-forward (DF) strategy: at the first stage both receivers decode one of the common messages with stronger signal strength without help from the receiver-cooperative links, by treating other signals as noise. Both then bin-and-forward the decoded information to each other. At the second stage, both receivers make use of the bin index send over receiver-cooperative links to decode and enhance the rate.
\item[(3)] Two-round CF+DF strategy: at the first stage one of the receivers, say, receiver 1, compresses its received signal and forwards it to the other receiver. At the second stage, receiver 2 decodes with the side information received at the first round, and then bin-and-forwards the decoded information to receiver 1. Then at the third stage receiver 1 decodes with the help from receiver-cooperative links.
\end{itemize}

Comparisons of these strategies in terms of the number of generalized degrees of freedom for different scaling exponents $\alpha$ of $\log\INR$ and $\kappa$ of $\C$ are depicted in Fig. \ref{fig_Dof2}. None of them achieves the optimal g.d.o.f. universally.
Note that although the two-round CF+DF strategy outperforms one-round CF/DF strategies, it cannot achieve the optimal number of g.d.o.f. for all $\alpha$'s and $\kappa$'s. One-round strategy based on our cooperative protocol, on the other hand, is sufficient to achieve the symmetric capacity to within 3 bits universally and hence achieves the optimal number of g.d.o.f. for all $\alpha$'s and $\kappa$'s.

\bibliographystyle{ieeetr}

\begin{thebibliography}{10}

\bibitem{Sato_81}
H.~Sato, ``The capacity of the gaussian interference channel under strong
  interference,'' {\em IEEE Transactions on Information Theory}, vol.~27,
  pp.~786--788, November 1981.

\bibitem{HanKobayashi_81}
T.~S. Han and K.~Kobayashi, ``A new achievable rate region for the interference
  channel,'' {\em IEEE Transactions on Information Theory}, vol.~27,
  pp.~49--60, January 1981.

\bibitem{EtkinTse_07}
R.~Etkin, D.~N.~C. Tse, and H.~Wang, ``Gaussian interference channel capacity
  to within one bit,'' {\em IEEE Transactions on Information Theory}, vol.~54,
  pp.~5534--5562, December 2008.

\bibitem{MotahariKhandani_09}
A.~S. Motahari and A.~K. Khandani, ``Capacity bounds for the gaussian
  interference channel,'' {\em IEEE Transactions on Information Theory},
  vol.~55, pp.~620 -- 643, February 2009.

\bibitem{ShangKramer_09}
X.~Shang, G.~Kramer, and B.~Chen, ``A new outer bound and the
  noisy-interference sum--rate capacity for gaussian interference channels,''
  {\em IEEE Transactions on Information Theory}, vol.~55, pp.~689 -- 699,
  February 2009.

\bibitem{AnnapureddyVeeravalli_08}
V.~S. Annapureddy and V.~V. Veeravalli, ``Gaussian interference networks: Sum
  capacity in the low interference regime and new outer bounds on the capacity
  region,'' {\em Submitted to IEEE Transactions on Information Theory},
  February 2008, \url{http://arxiv.org/abs/0802.3495}.

\bibitem{Willems_83}
F.~M.~J. Willems, ``The discrete memoryless mulitple access channel with
  partially cooperating encoders,'' {\em IEEE Transactions on Information
  Theory}, vol.~29, pp.~441--445, May 1983.

\bibitem{BrossLapidoth_08}
S.~I. Bross, A.~Lapidoth, and M.~A. Wigger, ``The gaussian mac with
  conferencing encoders,'' {\em Proceedings of IEEE International Symposium on
  Information Theory}, July 2008.

\bibitem{MaricYates_07}
I.~Mari\'{c}, R.~D. Yates, and G.~Kramer, ``Capacity of interference channels
  with partial transmitter cooperation,'' {\em IEEE Transactions on Information
  Theory}, vol.~53, pp.~3536--3548, October 2007.

\bibitem{DaboraServetto_06}
R.~Dabora and S.~D. Servetto, ``Broadcast channels with cooperating decoders,''
  {\em IEEE Transactions on Information Theory}, vol.~52, pp.~5438--5454,
  December 2006.

\bibitem{Simeone_08}
O.~Simeone, D.~G{\"u}nd{\"u}z, H.~V. Poor, A.~J. Goldsmith, and S.~Shamai,
  ``Compound multiple access channels with partial cooperation,'' {\em IEEE
  Transactions on Information Theory}, vol.~55, pp.~2425--2441, June 2009.

\bibitem{YuZhou_08}
W.~Yu and L.~Zhou, ``Gaussian z-interference channel with a relay link:
  Achievability region and asymptotic sum capacity,'' {\em Submitted to IEEE
  Transactions on Information Theory}, September 2008,
  \url{http://www.comm.utoronto.ca/~weiyu/z_relay.pdf}.

\bibitem{Host-Madsen_06}
A.~H$\o$st-Madsen, ``Capacity bounds for cooperative diversity,'' {\em IEEE
  Transactions on Information Theory}, vol.~52, pp.~1522--1544, April 2006.

\bibitem{PrabhakaranViswanath_09}
V.~Prabhakaran and P.~Viswanath, ``Interference channels with destination
  cooperation,'' {\em Submitted to IEEE Transactions on Information Theory},
  July 2009. \url{http://arxiv.org/abs/0907.2702}.

\bibitem{CoverElGamal_79}
T.~M. Cover and A.~A. $\textrm{El Gamal}$, ``Capacity theorems for the relay
  channel,'' {\em IEEE Transactions on Information Theory}, vol.~25,
  pp.~572--584, September 1979.

\bibitem{KramerGastpar_05}
G.~Kramer, M.~Gastpar, and P.~Gupta, ``Cooperative strategies and capacity
  theorems for relay networks,'' {\em IEEE Transactions on Information Theory},
  vol.~51, pp.~3037--3063, September 2005.

\bibitem{CoverKim_07}
T.~M. Cover and Y.-H. Kim, ``Capacity of a class of deterministic relay
  channels,'' {\em Proceedings of IEEE International Symposium on Information
  Theory}, June 2007.

\bibitem{Kim_07}
Y.-H. Kim, ``Coding techniques for primitive relay channels,'' {\em Proceedings
  of Allerton Conference on Communication, Control, and Computing}, September
  2007.

\bibitem{AvestimehrDiggavi_09}
A.~S. Avestimehr, S.~N. Diggavi, and D.~N.~C. Tse, ``Wireless network
  information flow: A deterministic approach,'' {\em Submitted to IEEE
  Transactions on Information Theory}, June 2009,
  \url{http://arxiv.org/abs/0906.5394}.

\bibitem{BreslerTse_08}
G.~Bresler and D.~N.~C. Tse, ``The two-user gaussian interference channel: a
  deterministic view,'' {\em European Transactions on Telecommunications},
  vol.~19, pp.~333 -- 354, June 2008.

\bibitem{CoverThomas_06}
T.~M. Cover and J.~A. Thomas, {\em Elements of Information Theory}.
\newblock Wiley-Interscience, 2nd~ed., 2006.

\bibitem{ChongMotani_08}
H.-F. Chong, M.~Motani, H.~K. Garg, and H.~E. Gamal, ``On the han-kobayashi
  region for the interference channel,'' {\em IEEE Transactions on Information
  Theory}, vol.~54, pp.~3188 -- 3195, July 2008.

\end{thebibliography}

\appendices

\section{Proof of Theorem \ref{thm_CodingThm}}\label{app_ErrorAnalysis}
We will first describe the strategy in detail and analyze the error probability rigorously.
\subsection{Description of the Strategy}\label{subsec_AchDes}
In the following, consider all $i,j\in\{1,2\}$ and $i\ne j$. 
{\flushleft \it Codebook generation}:\par
Transmitter $i$ splits its message $m_i \rightarrow (m_{ic} , m_{ip})$. Consider block length-$N$ encoding. First we generate $2^{NR_{ic}}$ common codewords $\{x^N_{ic}(m_{ic})$, $1\le m_{ic}\le 2^{NR_{ic}}\}$, according to distribution $p\lp x_{ic}^N\rp = \prod_{n=1}^N p\big( x_{ic}[n] \big)$ with $x_{ic}[n]\sim\mcal{CN}(0,Q_{ic})$ for all $n$. Then for each common codeword $x^N_{ic}(m_{ic})$ serving as a {\it cloud center}, we generate $2^{NR_{ip}}$ codewords $\{ x^N_i(m_{ic},m_{ip})$, $1\le m_{ip}\le 2^{NR_{ip}}\}$, according to conditional distribution $p\lp x_i^N|x_{ic}^N\rp = \prod_{n=1}^N p\big( x_i[n]| x_{ic}[n] \big)$ such that for all $n$, $x_i[n] = x_{ic}[n] + x_{ip}[n]$, where $x_{ip}[n] \sim \mcal{CN}(0, Q_{ip})$ and independent of everything else. The power split configuration is such that $Q_{ip}+Q_{ic}=1$, $\INR_{jp}:= Q_{ip}|h_{ji}|^2 \le 1$ if $\SNR_i > \INR_j$, and no such split if $\SNR_i \le \INR_j$. Hence, $Q_{ip} = \min\lbp 1,\frac{1}{\INR_j}\rbp$ if $\SNR_i > \INR_j$, and $Q_{ip}=0$ otherwise.

For receiver $2$ serving as relay, it generates a quantization codebook $\widehat{\mathscr{Y}}_2$, of size $\big|\widehat{\mathscr{Y}}_2\big|=2^{N\widehat{R}_2}$, randomly according to marginal distribution $p(\what{y}_2^N) = \int p(y_2^N)p(\widehat{y}_2^N | y_2^N) dy_2^N$, where $p(\widehat{y}_2^N | y_2^N) = \prod_{n=1}^N p\Big(\widehat{y}_2[n] \big| y_2[n] \Big)$. The conditional distribution is such that for all $n$, $\what{y}_2[n] = y_2[n] + \what{z}_2[n]$, where $\what{z}_2[n] \sim \mcal{CN}(0,\Delta_2)$, independent of everything else. Parameters $\widehat{R}_2$ and $\Delta_2$ are to be specified later. For each element in codebook $\widehat{\mathscr{Y}}_2$, map it into $\{1,\ldots,2^{N\C_{21}}\}$ through a uniformly generated random mapping $b_2: \what{\mscr{Y}}_2\rightarrow \{1,\ldots,2^{N\C_{21}}\}, \what{y}_2^N \mapsto l_{21}$ ({\it binning}).

For receiver $1$ serving as relay, it generates two binning functions $b_{1}^{(1c)}$ and $b_{1}^{(2c)}$ independently according to uniform distributions, such that the message set $\{1\le m_{ic} \le 2^{NR_{ic}}\}$ is partitioned into $2^{\lambda_1^{(ic)}N\C_{12}}$ bins, for $i=1,2$, where $0\le \lambda_1^{(ic)} \le 1$, $\lambda_1^{(1c)} + \lambda_1^{(2c)} = 1$, and  
\begin{align}
b_1^{(ic)}: \{1,\ldots,2^{NR_{ic}}\} \rightarrow \{1,\ldots,2^{\lambda_1^{(ic)}N\C_{12}}\},\ m_{ic} \mapsto l_{12}^{(ic)} \in \{1,\ldots,2^{\lambda_1^{(ic)}N\C_{12}}\}.
\end{align} 
The superscript notation ``$(ic)$" denotes which message set is partitioned into bins, while the subscript ``$1$" denotes the binning procedure is at receiver $1$.

{\flushleft \it Encoding}:\par
Transmitter $i$ sends out signals according to its message and the codebook. Receiver $2$, serving as relay, chooses the quantization codeword which is jointly typical with $y_2^N$ (if there is more than one, it chooses the one with the smallest index), and then sends out the bin index $l_{21}$ for the quantization codeword. After decoding $\lp m_{1c}, m_{1p}, m_{2c}\rp$ (to be specified below), receiver 1 sends out bin indices $\lp l_{12}^{(1c)},l_{12}^{(2c)}\rp$ according to binning functions $\lp b_1^{(1c)},b_1^{(2c)}\rp$.

{\flushleft \it Decoding at receiver 1}:\par
To draw comparison with the decoding procedure in the conventional compress-forward, the above decoding can be interpreted as a two-stage procedure as follows. It first constructs a {\it list} of message triples (both users' common messages and its own private message), each element of which indices a codeword triple that is jointly typical with its received signal from the transmitter-receiver link. Then, for each message triple in this list, it constructs an {\it ambiguity set} of quantization codewords, each element of which is jointly typical with the codeword triple and the received signal. Finally, it searches through all ambiguity sets and finds one that contains a quantization codeword with the same bin index it received. If there is no such unique ambiguity set, it declares an error. The two-stage interpretation is illustrated in Fig. \ref{fig_Decode}.

To be specific, upon receiving signal $y_{1}$ and receiver-cooperative side information $l_{21}$, receiver $i$ constructs a list of candidates
\begin{align}
L(y^N_1) := \left\{ \ul{m}:=(m_{1c},m_{1p},m_{2c}) \big\lvert \lp x_{1c}^N(m_{1c}), x_{1}^N(m_{1c},m_{1p}), x_{2c}^N(m_{2c}), y_1^N\rp \in A_{\epsilon}^{(N)}\right\},
\end{align}
where $A_{\epsilon}^{(N)}$ denotes the set of jointly $\epsilon$-typical $N$-sequences, correspondingly \cite{CoverThomas_06}.

For each element $\ul{m} \in L(y^N_1)$, construct an ambiguity set of quantization codewords
\begin{align}
B(\ul{m}) := \left\{ \what{y}^N_2\in\what{\mscr{Y}}_2\ \Big\lvert \lp \what{y}^N_2, x_{1c}^N(m_{1c}), x_{1}^N(m_{1c},m_{1p}), x_{2c}^N(m_{2c}), y_i^N \rp\in A_{\epsilon}^{(N)} \right\}.
\end{align}
Declare the transmitted message is $\what{\ul{m}}$ if there exists an unique $\what{\ul{m}}$ such that $\exists\ \what{y}^N_2\in B(\what{\ul{m}})$ with $b_2(\what{y}^N_2) = l_{21}$. Otherwise, declare an error.

\begin{figure}[htbp]
{\center
\subfigure[Error Event (1)]{\includegraphics[width=3in]{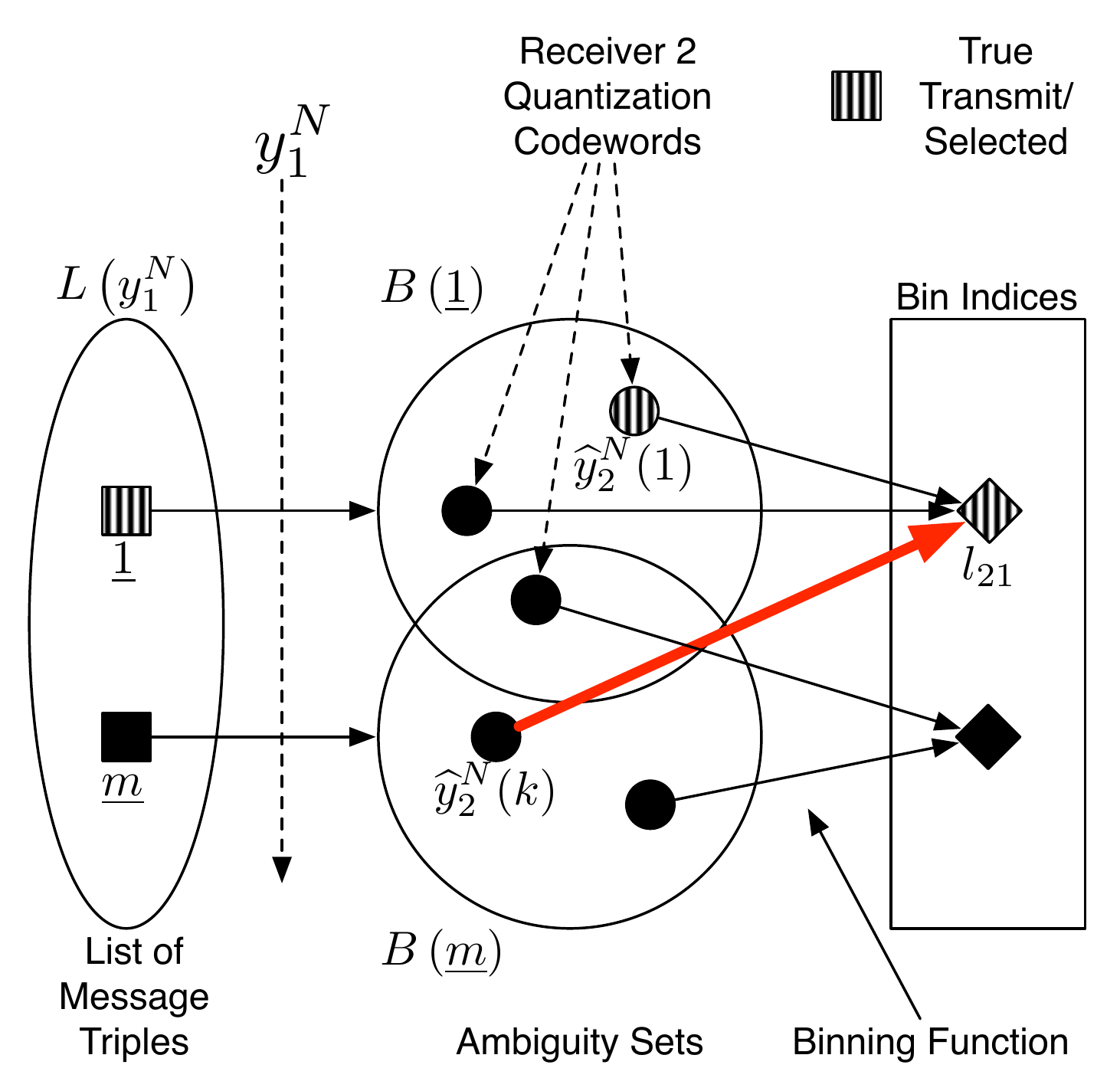}}
\subfigure[Error Event (2)]{\includegraphics[width=3in]{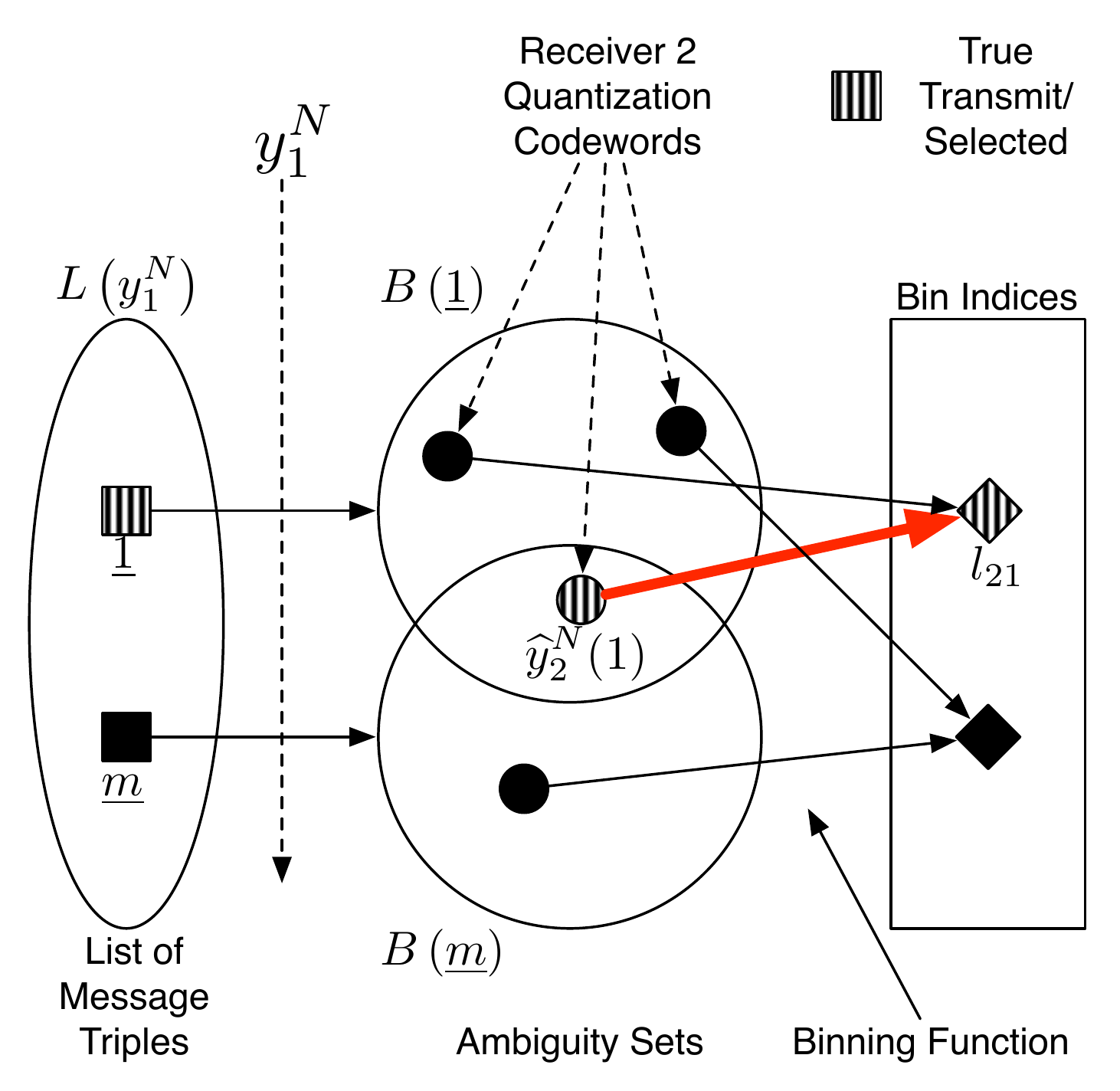}}
\caption{Decoding at Receiver 1 and Error Events}
\label{fig_Decode}
}
\end{figure}

{\flushleft \it Decoding at receiver 2}:\par
After receiving bin indices $\lp l_{12}^{(1c)},l_{12}^{(2c)}\rp$, receiver 2 searches for an unique message triple $( m_{2c},m_{2p},m_{1c} )$ such that $\lp x_{2c}^N(m_{2c}), x_{2}^N(m_{2c},m_{2p}), x_{1c}^N(m_{1c}), y_2^N\rp \in A_{\epsilon}^{(N)}$, and $b_1^{(ic)}\lp m_{ic}\rp = l_{12}^{(ic)}$, for $i=1,2$. If there is no such unique triple, it declares an error.

\subsection{Analysis}\label{subsec_Analysis}
{\flushleft \it Error probability analysis at receiver 1}:\par
Without loss of generality, assume that all transmitted messages are 1's. For simplicity, we first focus on the case where receiver 1 aims to decode while receiver 2 serves as a relay to help it decode.

At receiver 1, due to law of large numbers, the probability that the truly transmitted $\ul{1}:=(m_{1c}=1,m_{2c}=1,m_{1p}=1) \notin L(y^N_1)$ goes to zero as $N\rightarrow \infty$. Besides, the probability that $B(\ul{1})$ does not contain the truly selected $\what{y}^N_2$ is also negligible when $N$ is sufficiently large. Consider the following error events: 

First, there is no quantization codeword jointly typical with received signals. This probability goes to zero as $N\rightarrow \infty$ if $\what{R}_2 \ge I(\what{y}_2;y_2)$, which is a known result in the source coding literature. 

Second, there exists $\ul{m} \ne \ul{1}$ such that both of them are in the candidate list $L(y^N_1)$, and the ambiguity set $B(\ul{m})$ contains some quantization codeword $\what{y}^N_2$ with bin index $b_2(\what{y}^N_2) = l_{21}$. This event can further be distinguished into two cases: First, this $\what{y}^N_2\in B(\ul{m})$ is not the actual selected quantization codeword (illustrated in Fig. \ref{fig_Decode}.(a)); second,  this $\what{y}^N_2\in B(\ul{m})$ is indeed the selected quantization codeword (illustrated in Fig. \ref{fig_Decode}.(b)). In the following we analyze the error probability of these two typical error events.


Again, refer to Fig. \ref{fig_Decode}. for illustration. Define error events as follows: for any nonempty $S\subseteq\{1c,1p,2c\}$,
{\flushleft
$E^{(1)}_{S}:=$ the event that there exists some $\ul{m}\ne \ul{1}$, (where $m_{s}\ne1, \forall s\in S$ and $m_{s}= 1, \forall s\notin S$), such that $\ul{m} \in L(y_1^N)$ and $B(\ul{m})$ contains some $\what{y}^N_2(k)$, $k\in\{1,2,\ldots,2^{N\what{R}_2}\}$ with $b_2(\what{y}^N_2(k)) = l_{21}$. Note: this $\what{y}_2^N(k)$ is not the truly selected quantization codeword $\what{y}_2^N(1)$.
}
{\flushleft
$E^{(2)}_{S}:=$ the event that there exists some $\ul{m}\ne \ul{1}$, (where $m_{s}\ne1, \forall s\in S$ and $m_{s}= 1, \forall s\notin S$), such that $\ul{m} \in L(y_1^N)$ and $B(\ul{m})$ contains $\what{y}^N_2(1)$.
}

\subsubsection{Probability of $E^{(1)}_{S}$}
Consider the probability of the error event $E^{(1)}_{S}$:
\begin{align}
&\Pr\lbp E^{(1)}_{S} \rbp\\
&\le \sum_{\begin{subarray}{c}\ul{m}: m_s\ne1,\\ \forall s\in S \end{subarray}} \sum_{k\ne 1}
\Pr\lbp \ul{m} \in L(y_1^N), \what{y}_2^N(k)\in B(\ul{m}), b_2(\what{y}^N_2(k)) = l_{21} \rbp\\
&= \sum_{\begin{subarray}{c}\ul{m}: m_s\ne1,\\ \forall s\in S \end{subarray}} \sum_{k\ne 1}
\Pr\lbp \lp\what{y}_2^N(k), \ul{x}^N(\ul{m}),y_1^N\rp \in A_{\epsilon}^{(N)}, b_2(\what{y}^N_2(k)) = l_{21} \rbp\\
&\overset{\aaaa}{=} 2^{-N\C_{21}}\sum_{\begin{subarray}{c}\ul{m}: m_s\ne1,\\ \forall s\in S \end{subarray}} \sum_{k\ne 1}
\Pr\lbp \lp\what{y}_2^N(k), \ul{x}^N(\ul{m}),y_1^N\rp \in A_{\epsilon}^{(N)}\rbp\\
&\le 2^{N\lp\sum_{s\in S}R_s\rp}2^{-N\C_{21}} \sum_{k\ne 1}
\Pr\lbp \lp\what{y}_2^N(k), \ul{x}^N(\ul{m}),y_1^N\rp \in A_{\epsilon}^{(N)}\rbp,
\end{align}
where (a) is due to the independent uniform binning.

For notational convenience we use $\ul{x}^N(\ul{m})$ to denote the vector of codewords corresponding to message $\ul{m}$, that is, $\big( x_{1c}^N(m_{1c}), x_{1}^N(m_{1c},m_{1p}), x_{2c}^N(m_{2c})\big)$.

Note that for $k\ne 1$, $\what{y}_2^N(k)$ is independent of $\lp\ul{x}^N(\ul{m}),y_1^N\rp$. We then make use of Theorem 15.2.2 in \cite{CoverThomas_06}, which upper bounds the volume of conditional joint $\epsilon$-typical set $A_{\epsilon}^{(N)}\big(\what{y}_2 \big| \ul{x}^N,y_1^N\big)$ given that $\lp\ul{x}^N,y_1^N\rp \in A_{\epsilon}^{(N)}$:
\begin{align}
&\sum_{k\ne 1}\Pr\lbp \lp\what{y}_2^N(k), \ul{x}^N(\ul{m}),y_1^N\rp \in A_{\epsilon}^{(N)} \rbp\\
&\le 2^{N\what{R}_2} \int_{\lp\what{y}_2^N, \ul{x}^N,y_1^N\rp \in A_{\epsilon}^{(N)}} p\big(\what{y}_2^N\big)p\big(\ul{x}^N,y_1^N\big) d \what{y}_2^N d\ul{x}^N dy_1^N\\
&\overset{}{\le} 2^{N\what{R}_2} \int_{\lp\ul{x}^N,y_1^N\rp \in A_{\epsilon}^{(N)}} p\big(\ul{x}^N,y_1^N\big) d\ul{x}^N dy_1^N \int_{\what{y}_2^N\in A_{\epsilon}^{(N)}\big(\what{y}_2 \big| \ul{x}^N,y_1^N\big)} p\big(\what{y}_2^N\big) dy_2^N\\
&\le 2^{N\what{R}_2} \int_{\lp\ul{x}^N,y_1^N\rp \in A_{\epsilon}^{(N)}} p\big(\ul{x}^N,y_1^N\big) d\ul{x}^N dy_1^N\int_{\what{y}_2^N\in A_{\epsilon}^{(N)}\big(\what{y}_2 \big| \ul{x}^N,y_1^N\big)}2^{-N\lp h(\what{y}_2) -\epsilon \rp} dy_2^N \\
&\overset{\bbbb}{\le} 2^{N\lp h(\what{y}_2|x_{1c},x_1,x_{2c},y_1)+2\epsilon\rp}\cdot2^{-N\lp h(\what{y}_2) - \epsilon\rp}\cdot 2^{N\what{R}_2} \int_{\lp\ul{x}^N,y_1^N\rp \in A_{\epsilon}^{(N)}} p\big(\ul{x}^N,y_1^N\big) d\ul{x}^N dy_1^N\\
&= 2^{N\what{R}_2}2^{-N\lp I(\what{y}_2;x_{1c},x_1,x_{2c},y_1)-3\epsilon\rp} \int_{\lp\ul{x}^N,y_1^N\rp \in A_{\epsilon}^{(N)}} p\big(\ul{x}^N,y_1^N\big) d\ul{x}^N dy_1^N\\
&= \Pr\lbp \ul{m} \in L(y_1^N)\rbp \cdot 2^{N\what{R}_2}2^{-N\lp I(\what{y}_2;x_{1c},x_1,x_{2c},y_1)-3\epsilon\rp},
\end{align}
where (b) is due to Theorem 15.2.2 in \cite{CoverThomas_06}. Besides, according to the results in \cite{ChongMotani_08},
\begin{align}
&\Pr\lbp \ul{m} \in L(y_1^N)\rbp \le \left\{
\begin{array}{ll}
2^{-N\lp I\lp x_{1};y_1|x_{1c},x_{2c}\rp - \epsilon' \rp} & S = \{1p\}\\
2^{-N\lp I\lp x_{1};y_1|x_{2c}\rp - \epsilon' \rp} & S = \{1c\}\\
2^{-N\lp I\lp x_{2c};y_1|x_{1}\rp - \epsilon' \rp} & S = \{2c\}\\
2^{-N\lp I\lp x_{2c},x_{1};y_1| x_{1c}\rp - \epsilon' \rp} & S = \{1p,2c\}\\
2^{-N\lp I\lp x_{1};y_1 | x_{2c}\rp - \epsilon' \rp} & S = \{1p,1c\}\\
2^{-N\lp I\lp x_{1},x_{2c};y_1 \rp - \epsilon' \rp} & S = \{2c,1c\}\\
2^{-N\lp I\lp x_{1},x_{2c};y_1 \rp - \epsilon' \rp} & S = \{1p,2c,1c\}
\end{array}
\right. , 
\end{align}
where $\epsilon' = 4\epsilon$. 
Note that unlike in the interference channel without cooperation as in \cite{ChongMotani_08}, here we require receiver 1 to decode $m_{2c}$ correctly. Hence, the event when $S=\{2c\}$ does cause an error. Therefore, the probability of the first kind of error event vanishes as $N\rightarrow\infty$ if 
\begin{align}
R_{1p} &\le I\lp x_{1};y_1|x_{1c},x_{2c}\rp + \C_{21} - \what{R}_2 + I(\what{y}_2;x_{1c},x_1,x_{2c},y_1)\\
R_{2c} &\le I\lp x_{2c};y_1|x_{1}\rp + \C_{21} - \what{R}_2 + I(\what{y}_2;x_{1c},x_1,x_{2c},y_1)\\
R_{2c}+R_{1p} &\le I\lp x_{2c},x_{1};y_1|x_{1c}\rp + \C_{21} - \what{R}_2 + I(\what{y}_2;x_{1c},x_1,x_{2c},y_1) \\
R_{1c}+R_{1p} &\le I\lp x_{1};y_1|x_{2c}\rp + \C_{21} - \what{R}_2 + I(\what{y}_2;x_{1c},x_1,x_{2c},y_1) \\
R_{1c}+R_{2c}+R_{1p} &\le I\lp x_{1},x_{2c};y_1 \rp + \C_{21} - \what{R}_2 + I(\what{y}_2;x_{1c},x_1,x_{2c},y_1).
\end{align}

On the other hand, since we can rewrite
\begin{align}
&\Pr\lbp E^{(1)}_{S} \rbp \le \\
&\sum_{\begin{subarray}\ul{m}: m_s\ne1,\\ \forall s\in S\end{subarray}} \Pr\lbp \ul{m} \in L(y_1^N)\rbp \cdot
\Pr\lbp \exists\ k\ne 1,\what{y}_2^N(k)\in B(\ul{m}), b_2(\what{y}^N_2(k)) = l_{21} \Big\lvert \ul{m} \in L(y_1^N)\rbp\\
&\le 2^{N\lp\sum_{s\in S}R_s\rp} \Pr\lbp \ul{m} \in L(y_1^N)\rbp.
\end{align}

Hence, the probability of the first kind of error event vanishes as $N\rightarrow\infty$ if 
\begin{align}
R_{1p} &\le I\lp x_{1};y_1|x_{1c},x_{2c}\rp + \lp\C_{21} - \what{R}_2 + I(\what{y}_2;x_{1c},x_1,x_{2c},y_1)\rp^+\\
R_{2c} &\le I\lp x_{2c};y_1|x_{1}\rp + \lp\C_{21} - \what{R}_2 + I(\what{y}_2;x_{1c},x_1,x_{2c},y_1)\rp^+\\
R_{2c}+R_{1p} &\le I\lp x_{2c},x_{1};y_1|x_{1c}\rp + \lp\C_{21} - \what{R}_2 + I(\what{y}_2;x_{1c},x_1,x_{2c},y_1)\rp^+ \\
R_{1c}+R_{1p} &\le I\lp x_{1};y_1|x_{2c}\rp + \lp\C_{21} - \what{R}_2 + I(\what{y}_2;x_{1c},x_1,x_{2c},y_1)\rp^+ \\
R_{1c}+R_{2c}+R_{1p} &\le I\lp x_{1},x_{2c};y_1 \rp + \lp\C_{21} - \what{R}_2 + I(\what{y}_2;x_{1c},x_1,x_{2c},y_1)\rp^+.
\end{align}

Finally, plug in $\what{R}_2 = I(\what{y}_2;y_2)$ and by Markov relation: $(x_{1c},x_1,x_{2c},y_1) - y_2 - \what{y}_2$, we get the rate loss term
\begin{align}
\xi_1 &:= \what{R}_2 - I(\what{y}_2;x_{1c},x_1,x_{2c},y_1) = I(\what{y}_2;y_2) - I(\what{y}_2;x_{1c},x_1,x_{2c},y_1)\\
&= I(\what{y}_2;y_2|x_{1c},x_1,x_{2c},y_1).
\end{align}

\subsubsection{Probability of $E^{(2)}_{S}$}
Consider the probability of the error event $E^{(2)}_{S}$:
\begin{align}
\Pr\lbp E^{(2)}_{S} \rbp
&\le \sum_{\ul{m}: m_s\ne1, \forall s\in S} \Pr\lbp \what{y}_2^N(1)\in B(\ul{m}) , \ul{m} \in L(y_1^N)\rbp \\
&=  \sum_{\ul{m}: m_s\ne1, \forall s\in S} \Pr\lbp \lp\what{y}_2^N(1), \ul{x}^N(\ul{m}),y_1^N\rp \in A_{\epsilon}^{(N)} \rbp \\
&\le \left\{
\begin{array}{ll}
2^{N\lp\sum_{s\in S}R_s\rp}\cdot2^{-N\lp I\lp x_{1};y_1,\what{y}_2|x_{1c},x_{2c}\rp - \epsilon' \rp} & S = \{1p\}\\
2^{N\lp\sum_{s\in S}R_s\rp}\cdot2^{-N\lp I\lp x_{1};y_1,\what{y}_2|x_{2c}\rp - \epsilon' \rp} & S = \{1c\}\\
2^{N\lp\sum_{s\in S}R_s\rp}\cdot2^{-N\lp I\lp x_{2c};y_1,\what{y}_2|x_{1}\rp - \epsilon' \rp} & S = \{2c\}\\
2^{N\lp\sum_{s\in S}R_s\rp}\cdot2^{-N\lp I\lp x_{2c},x_{1};y_1,\what{y}_2| x_{1c}\rp - \epsilon' \rp} & S = \{1p,2c\}\\
2^{N\lp\sum_{s\in S}R_s\rp}\cdot2^{-N\lp I\lp x_{1};y_1,\what{y}_2 | x_{2c}\rp - \epsilon' \rp} & S = \{1p,1c\}\\
2^{N\lp\sum_{s\in S}R_s\rp}\cdot2^{-N\lp I\lp x_{1},x_{2c};y_1,\what{y}_2 \rp - \epsilon' \rp} & S = \{2c,1c\}\\
2^{N\lp\sum_{s\in S}R_s\rp}\cdot2^{-N\lp I\lp x_{1},x_{2c};y_1,\what{y}_2 \rp - \epsilon' \rp} & S = \{1p,2c,1c\}
\end{array}
\right. ,
\end{align}
where $\epsilon' = 4\epsilon$. Note that the event when $S=\{2c\}$ does cause an error. Hence, the probability of the second kind of error event vanishes as $N\rightarrow\infty$ if 
\begin{align}
R_{1p} &\le I\lp x_{1};y_1,\what{y}_2 | x_{1c},x_{2c}\rp\\
R_{2c} &\le I\lp x_{2c};y_1,\what{y}_2 | x_{1}\rp\\
R_{2c}+R_{1p} &\le I\lp x_{2c},x_{1};y_1,\what{y}_2 | x_{1c}\rp\\
R_{1c}+R_{1p} &\le I\lp x_{1};y_1,\what{y}_2 | x_{2c}\rp\\
R_{1c}+R_{2c}+R_{1p} &\le I\lp x_{1},x_{2c};y_1,\what{y}_2 \rp.
\end{align}

{\flushleft \it Error probability analysis at receiver 2}:\par 
After receiving the two bin indices, receiver 2 can decode $(m_{1c},m_{2c},m_{2p})$, with effectively smaller candidate message sets, (namely, the bins,) for $m_{1c}$ and $m_{2c}$. Following the same line as \cite{ChongMotani_08}, it can be shown that (we omit the detailed analysis here), for all $0\le \lambda_1^{(ic)} \le 1$ and $\lambda_1^{(1c)} + \lambda_1^{(2c)} = 1$, the following region is achievable:
\begin{align}
R_{2p} &\le I\lp x_{2};y_2|x_{2c},x_{1c}\rp\\
R_{1c}+R_{2p} &\le I\lp x_{1c},x_{2};y_2|x_{2c}\rp + \lambda_1^{(1c)}\C_{12}\\
R_{2c}+R_{2p} &\le I\lp x_{2};y_2|x_{1c}\rp + \lambda_1^{(2c)}\C_{12}\\
R_{2c}+R_{1c}+R_{2p} &\le I\lp x_{2},x_{1c};y_2 \rp + \C_{12}.
\end{align}
Note that the performance of decoding the private message $m_{2p}$ does not gain from cooperation, since receiver 1 does not decode the private message $m_{2p}$.

Taking convex hull over all possible $\lambda_1^{(1c)}\in[0,1]$. Note that the bounds for $R_{2p}$ and $R_{2c}+R_{1c}+R_{2p}$ remain unchanged. Project the three-dimensional rate region to a two-dimensional space for any fixed $R_{2p} = r_{2p}$, we see that the convexifying procedure results in the following region:
\begin{align}
R_{1c}+r_{2p} &\le I\lp x_{1c},x_{2};y_2|x_{2c}\rp + \C_{12}\\
R_{2c}+r_{2p} &\le I\lp x_{2};y_2|x_{1c}\rp + \C_{12}\\
R_{2c}+R_{1c}+r_{2p} &\le I\lp x_{2},x_{1c};y_2 \rp + \C_{12}.
\end{align}

Hence the following rate region is achievable for receiver 2 to decode successfully:
\begin{align}
R_{2p} &\le I\lp x_{2};y_2|x_{2c},x_{1c}\rp\\
R_{1c}+R_{2p} &\le I\lp x_{1c},x_{2};y_2|x_{2c}\rp + \C_{12}\\
R_{2c}+R_{2p} &\le I\lp x_{2};y_2|x_{1c}\rp + \C_{12}\\
R_{2c}+R_{1c}+R_{2p} &\le I\lp x_{2},x_{1c};y_2 \rp + \C_{12}.
\end{align}


\section{Proof of Lemma \ref{lem_OutBd}}\label{app_PfOutBd}
{\flushleft(1) \emph{Bounds (\ref{eq_CutSetBd}) on $R_1, R_2$}}
\begin{proof}
One can directly use cut-set bounds. As an alternative, we give the following proof in which the decomposition of mutual informations is made clear.

We have the following bounds by Fano's inequality, data-processing inequality, and chain rule: if $R_1$ is achievable,
\begin{align}
&N(R_1-\epsilon_N)\\
&\overset{\aaaa}{\le} I(x_1^N;y_1^N,u_{21}^N) \overset{\bbbb}{\le} I(x_1^N;y_1^N,u_{21}^N,x_2^N) \overset{\cccc}{=}  I(x_1^N;y_1^N,u_{21}^N|x_2^N)\\
&\overset{\dddd}{=} I(x_1^N;y_1^N|x_2^N) +  I(x_1^N;u_{21}^N|y_1^N,x_2^N) = h(h_{11}x_1^N+z_1^N) - h(z_1^N) + I(x_1^N;u_{21}^N|y_1^N,x_2^N)\\
&\overset{\eeee}{\le} N\log(1+\SNR_1) + I(x_1^N;u_{21}^N|y_1^N,x_2^N),
\end{align}
where $\epsilon_N\rightarrow 0$ as $N\rightarrow \infty$. (a) is due to Fano's inequality and data processing inequality. (b) is due to the genie giving side information $x_1^N$ to receiver 1, ie., {\it conditioning reduces entropy}. (c) is due to the fact that $x_1^N$ and $x_2^N$ are independent. (d) is due to chain rule. (e) is due to the fact that i.i.d. Gaussian distribution maximizes differential entropy under covariance constraints.  

To upper bound $I(x_1^N;u_{21}^N|y_1^N,x_2^N)$, which corresponds to the enhancement from cooperation, we make use of the fact that $u_{21}^N$ is a function of $(y_1^N,y_2^N)$:
\begin{align}
&I(x_1^N;u_{21}^N|y_1^N,x_2^N)\\
& = h(x_1^N|y_1^N,x_2^N) - h(x_1^N|u_{21}^N,y_1^N,x_2^N) \overset{\aaaa}{\le} h(x_1^N|y_1^N,x_2^N) - h(x_1^N|u_{21}^N,y_1^N,x_2^N,y_2^N)\\
& \overset{\bbbb}{=} h(x_1^N|y_1^N,x_2^N) - h(x_1^N|y_1^N,x_2^N,y_2^N) = I(x_1^N;y_2^N|y_1^N,x_2^N)\\
& = h(y_2^N|y_1^N,x_2^N) - h(y_2^N|y_1^N,x_2^N,x_1^N) = h(h_{21}x_1^N+z_2^N|h_{11}x_1^N+z_1^N) - h(z_2^N)\\
&\le N\log\left(1+\frac{\INR_2}{1+\SNR_1}\right).
\end{align} 
(a) is due to the fact that conditioning reduces entropy. (b) is due to the fact that $u_{21}^N$ is a function of $(y_1^N,y_2^N)$.

Besides, it is trivial to see that $I(x_1^N;u_{21}^N|y_1^N,x_2^N)\le H(u_{21}^N) \le N\C_{21}$. Hence, (and similarly for $R_2$),
\begin{align}
R_1 &\le \log(1+\SNR_1)+ \min\left\{ \C_{21},\log\left(1+\frac{\INR_2}{1+\SNR_1}\right) \right\}\\
R_2 &\le \log(1+\SNR_2)+ \min\left\{ \C_{12},\log\left(1+\frac{\INR_1}{1+\SNR_2}\right) \right\}
\end{align}
\end{proof}

{\flushleft(2) \emph{Bounds (\ref{eq_ETWBound}) on $R_1+R_2$}}
\begin{figure}[htbp]
{\center
\includegraphics[width=2.5in]{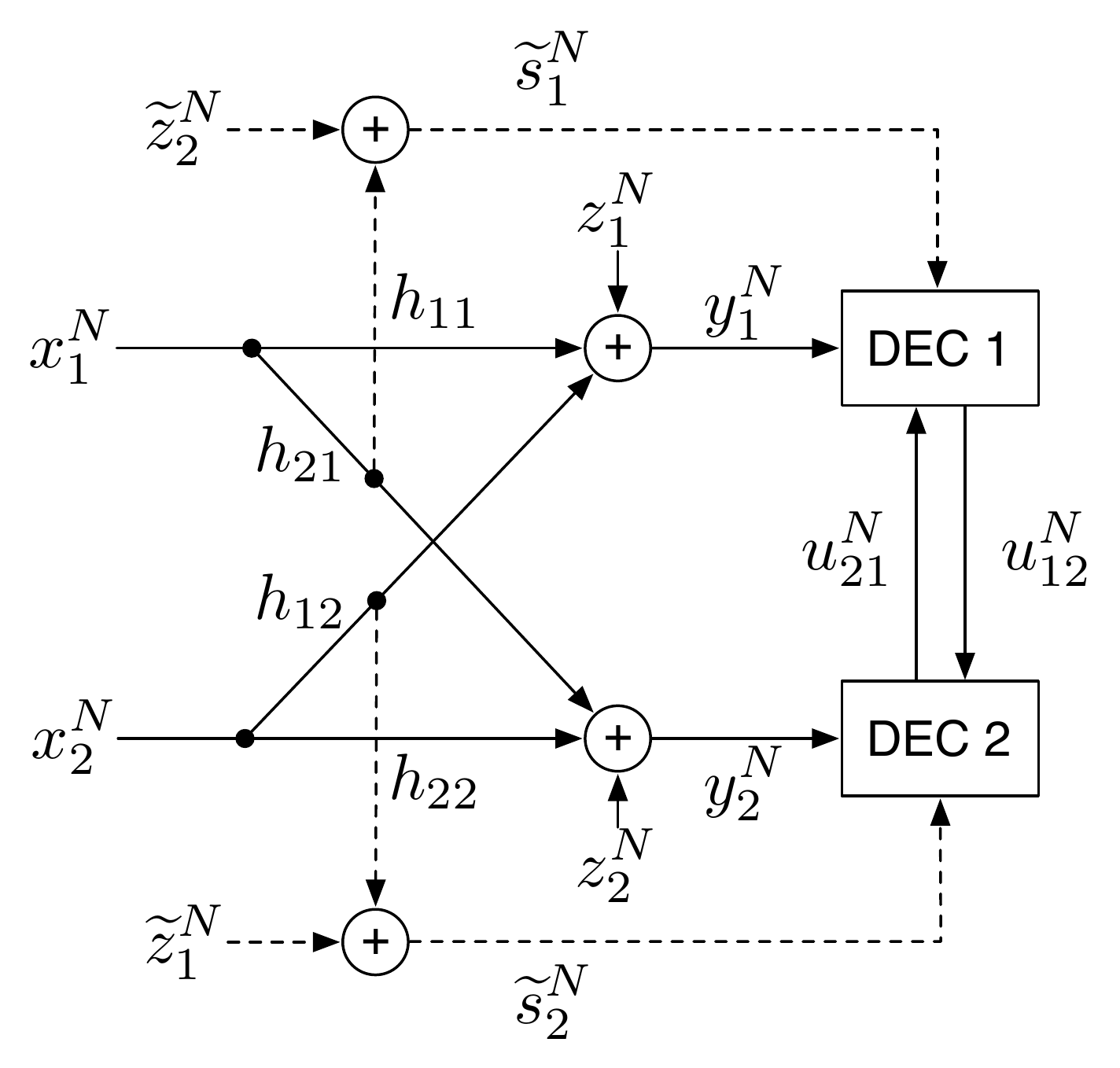}
\caption{Side Information Structure for Bound \eqref{eq_ETWBound}}
\label{fig_GenieOutBd1}
}
\end{figure}

\begin{proof}
Define
\begin{align}
s_1 &:= h_{21}x_1+z_2,\ s_2:= h_{12}x_2+z_1,\\
\wtild{s}_1 &:= h_{21}x_1+\wtild{z}_2,\ \wtild{s}_2:= h_{12}x_2+\wtild{z}_1,
\end{align}
where $\wtild{z}_1,\wtild{z}_2$ are i.i.d. $\mcal{CN}(0,1)$'s, independent of everything else. Note that $s_i$ and $\wtild{s}_i$ have the same marginal distribution, for $i=1,2$.

A genie gives side information $\wtild{s}_i^N$ to receiver $i$ (refer to Fig. \ref{fig_GenieOutBd1}.) Making use of Fano's inequality, data processing inequality, and the fact that Gaussian distribution maximizes conditional entropy subject to conditional variance constraints, we have: if $(R_1,R_2)$ is achievable,
\begin{align}
&N(R_1+R_2-\epsilon_N)\\
&\overset{\aaaa}{\le} I(x_1^N;y_1^N,u_{21}^N) +  I(x_2^N;y_2^N,u_{12}^N)\\
&\overset{\bbbb}{=} I(x_1^N;y_1^N) +  I(x_2^N;y_2^N) + I(x_1^N;u_{21}^N|y_1^N) +  I(x_2^N;u_{12}^N|y_2^N)\\
&\overset{\cccc}{\le} I(x_1^N;y_1^N,\wtild{s}_1^N) +  I(x_2^N;y_2^N,\wtild{s}_2^N) + H(u_{21}^N) + H(u_{12}^N)\\
&\overset{\dddd}{\le} h(y_1^N,\wtild{s}_1^N) - h(s_2^N,\wtild{z}_2^N) +  h(y_2^N,\wtild{s}_2^N) - h(s_1^N,\wtild{z}_1^N) + N\C_{21} + N\C_{12}\\
&\overset{\eeee}{=} h(y_1^N|\wtild{s}_1^N) + h(\wtild{s}_1^N) - h(s_2^N) - h(\wtild{z}_2^N) +  h(y_2^N|\wtild{s}_2^N) + h(\wtild{s}_1^N) - h(s_1^N) - h(\wtild{z}_1^N)\\&\quad + N\C_{21} + N\C_{12}\\
&\overset{}{=} h(y_1^N|\wtild{s}_1^N) - h(\wtild{z}_2^N) +  h(y_2^N|\wtild{s}_2^N) - h(\wtild{z}_1^N) + N\C_{21} + N\C_{12}\\
&\overset{\ffff}{\le} N\left\{\log\left(1+\INR_1+\frac{\SNR_1}{1+\INR_2}\right) + \log\left(1+\INR_2+\frac{\SNR_2}{1+\INR_1}\right) + \C_{21}+\C_{12}\right\},
\end{align}
where $\epsilon_N\rightarrow 0$ as $N\rightarrow \infty$. (a) follows from Fano's inequality and data processing inequality. (b) is due to chain rule. (c) is due to the genie giving side information $\wtild{s}_i^N$ to receiver $i$, $i=1,2$, and $I(x_i^N;u_{ji}^N|y_i^N) \le H(u_{ji}^N)$. (d) is due to the fact that $H(u_{ji}^N) \le N\C_{ji}$. (e) is due to chain rule. (f) is due to the fact that i.i.d. Gaussian distribution maximizes conditional entropy subject to conditional variance constraints. Note that alternatively the genie can give side informations $s_i^N$ to receiver $i$, as in \cite{EtkinTse_07}.

Hence,
\begin{align}
R_1+R_2 &\le \log\left(1+\INR_1+\frac{\SNR_1}{1+\INR_2}\right) + \log\left(1+\INR_2+\frac{\SNR_2}{1+\INR_1}\right)\\&\quad + \C_{21}+\C_{12}.
\end{align}
\end{proof}

{\flushleft(3) \emph{Bounds (\ref{eq_ZBound}) on $R_1+R_2$}}
\begin{figure}[htbp]
{\center
\includegraphics[width=3in]{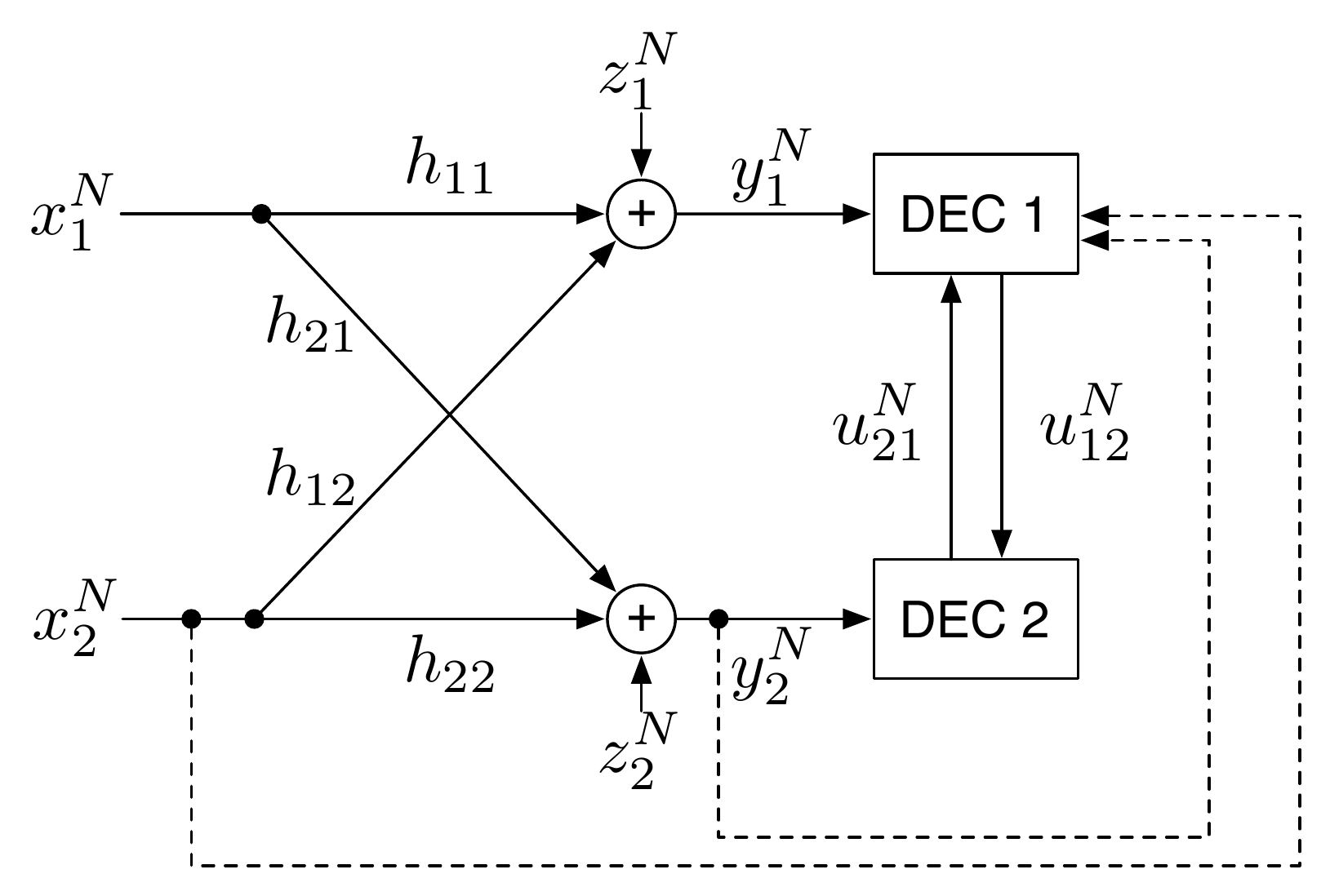}
\caption{Side Information Structure for Bound \eqref{eq_ZBound}}
\label{fig_GenieOutBd2}
}
\end{figure}

\begin{proof}
A genie gives side information $x_2^N$ and $y_2^N$ to receiver $1$ (refer to Fig. \ref{fig_GenieOutBd2}.) Making use of Fano's inequality, data processing inequality, the fact that $u_{21}^N$ is a function of $(y_1^N,y_2^N)$, and the fact that Gaussian distribution maximizes conditional entropy subject to conditional variance constraints, we have: if $(R_1,R_2)$ is achievable,
\begin{align}
&N(R_1+R_2-\epsilon_N)\\
&\le I(x_1^N;y_1^N,u_{21}^N) +  I(x_2^N;y_2^N,u_{12}^N)\\
&\overset{\aaaa}{\le} I(x_1^N;y_1^N,u_{21}^N,y_2^N,x_2^N) +  I(x_2^N;y_2^N) + I(x_2^N;u_{12}^N|y_2^N)\\
&\overset{\bbbb}{\le} I\lp x_1^N;y_1^N,u_{21}^N,y_2^N|x_2^N\rp + h\lp y_2^N\rp - h\lp s_1^N\rp + H\lp u_{12}^N\rp\\
&\overset{\cccc}{=} I\lp x_1^N;y_1^N,y_2^N|x_2^N\rp + h\lp y_2^N\rp - h\lp s_1^N\rp + H\lp u_{12}^N\rp\\
&= h\lp h_{11}x_1^N+z_1^N , s_1^N\rp - h\lp z_1^N,z_2^N\rp + h\lp y_2^N\rp - h\lp s_1^N\rp + H\lp u_{12}^N\rp\\
&= h\lp h_{11}x_1^N+z_1^N | s_1^N\rp - h\lp z_1^N,z_2^N\rp + h\lp y_2^N\rp + H\lp u_{12}^N\rp\\
&\overset{}{\le} N\log\left(1+\frac{\SNR_1}{1+\INR_2}\right) + N\log\left(1+\SNR_2+\INR_2\right) + N\C_{12},
\end{align}
where $\epsilon_N\rightarrow 0$ as $N\rightarrow \infty$. (a) is due to chain rule and the genie giving side information $x_2^N$ and $y_2^N$ to receiver 1. (b) is due to the fact that $x_1^N$ and $x_2^N$ are independent, and $I(x_2^N;u_{12}^N|y_2^N) \le H(u_{12}^N)$. (c) is due to the fact that $u_{21}^N$ is a function of $(y_1^N,y_2^N)$.

Hence, (and similarly if we gives side information $x_1^N$ to receiver $2$), we have
\begin{align}
R_1+R_2 &\le \log\left(1+\SNR_2+\INR_2\right) + \log\left(1+\frac{\SNR_1}{1+\INR_2}\right) + \C_{12}\\
R_1+R_2 &\le \log\left(1+\SNR_1+\INR_1\right) + \log\left(1+\frac{\SNR_2}{1+\INR_1}\right) + \C_{21}
\end{align}
\end{proof}

{\flushleft(4) \emph{Bounds (\ref{eq_SIMOBd}) on $R_1+R_2$}}
\begin{proof}
This is straightforward cut-set upper bound: if $(R_1,R_2)$ is achievable,
\begin{align}
&N(R_1+R_2-\epsilon_N)\\
&\le I\lp x_1^N, x_2^N ; y_1^N, y_2^N \rp = h\lp y_1^N,y_2^N\rp - h\lp z_1^N,z_2^N\rp\\
&\le N\log\lp 1+\SNR_1+\SNR_2+\INR_1+\INR_2 + |h_{11}h_{22} - h_{12}h_{21}|^2 \rp,
\end{align}
where $\epsilon_N\rightarrow 0$ as $N\rightarrow \infty$.

Hence,
\begin{align}
&R_1+R_2 \le \log\lp 1+\SNR_1+\SNR_2+\INR_1+\INR_2 + |h_{11}h_{22} - h_{12}h_{21}|^2 \rp.
\end{align}
\end{proof}

{\flushleft(5) \emph{Bounds (\ref{eq_Slope2Bd}) on $2R_1+R_2$ and $R_1+2R_2$}}
\begin{figure}[htbp]
{\center
\includegraphics[width=2.5in]{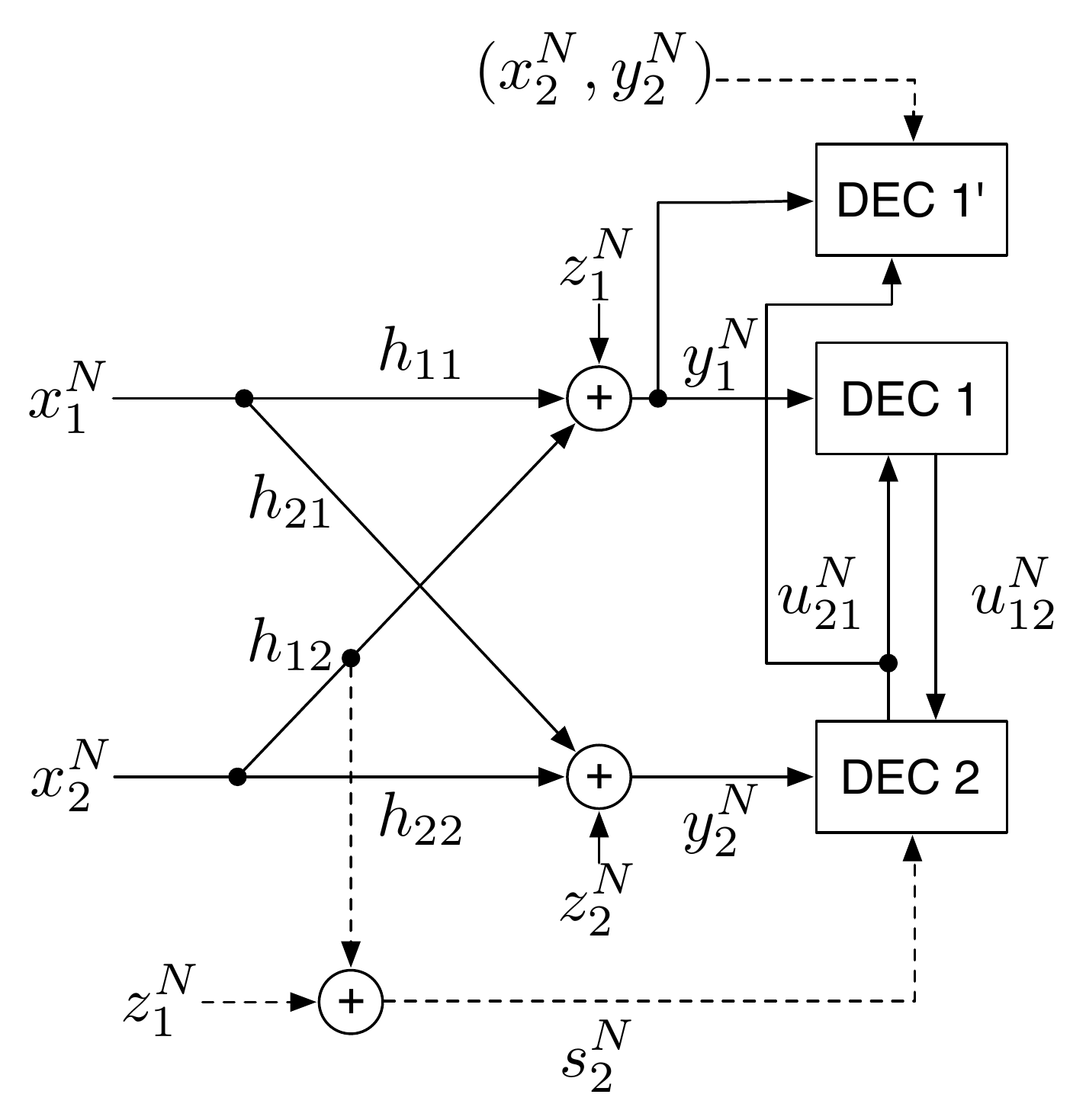}
\caption{Side Information Structure for Bound \eqref{eq_Slope2Bd}}
\label{fig_GenieOutBd3}
}
\end{figure}

\begin{proof}
A genie gives side information $x_2^N$ and $y_2^N$ to one of the two receiver $1$'s, and side information $s_2^N$ to receiver $2$ (refer to Fig. \ref{fig_GenieOutBd3}.) Making use of Fano's inequality, data processing inequality, the fact that $u_{21}^N$ is a function of $(y_1^N,y_2^N)$, and the fact that Gaussian distribution maximizes conditional entropy subject to conditional variance constraints, we have: if $(R_1,R_2)$ is achievable,
\begin{align}
&N(2R_1+R_2-\epsilon_N)\\
&\overset{\aaaa}{\le} I(x_1^N;y_1^N,u_{21}^N) +  I(x_1^N;y_1^N,u_{21}^N)+ I(x_2^N;y_2^N,u_{12}^N)\\
&\overset{\bbbb}{\le} I(x_1^N;y_1^N,u_{21}^N,y_2^N,x_2^N) + I(x_1^N;y_1^N) + I(x_2^N;y_2^N,s_2^N)+ I(x_1^N;u_{21}^N|y_1^N) +I(x_2^N;u_{12}^N|y_2^N)\\
&\overset{\cccc}{\le} I(x_1^N;y_1^N,u_{21}^N,y_2^N|x_2^N) + I(x_1^N;y_1^N) + I(x_2^N;y_2^N,s_2^N)+ H\lp u_{21}^N\rp + H\lp u_{12}^N\rp\\
&\overset{\dddd}{=} I(x_1^N;y_1^N,y_2^N|x_2^N) + I(x_1^N;y_1^N) + I(x_2^N;y_2^N,s_2^N)+ H\lp u_{21}^N\rp + H\lp u_{12}^N\rp\\
&= h\lp h_{11}x_1^N+z_1^N , s_1^N\rp - h\lp z_1^N,z_2^N\rp + h(y_1^N) - h(s_2^N) + h(y_2^N,s_2^N) - h(s_1^N,z_1^N)\\&\quad + H\lp u_{21}^N\rp + H\lp u_{12}^N\rp\\
&= h\lp h_{11}x_1^N+z_1^N | s_1^N\rp - h\lp z_1^N,z_2^N\rp + h(y_1^N) + h(y_2^N|s_2^N) - h(z_1^N)+ H\lp u_{21}^N\rp + H\lp u_{12}^N\rp\\
&\le N \Bigg\{ \log\left(1+\frac{\SNR_1}{1+\INR_2}\right) + \log\lp 1+ \SNR_1+\INR_1\rp + \log\left(1+\INR_2+\frac{\SNR_2}{1+\INR_1}\right)\\&\quad\quad\quad + \C_{21}+\C_{12} \Bigg\},
\end{align}
where $\epsilon_N\rightarrow 0$ as $N\rightarrow \infty$. (a) follows from Fano's inequality and data processing inequality. (b) is due to chain rule and the genie giving side information $x_2^N$ and $y_2^N$ to one of the receiver 1's and side information $s_2^N$ to receiver 2. (c) is due to the fact that $x_1^N,x_2^N$ are independent and $I(x_i^N;u_{ji}^N|y_i^N) \le H(u_{ji}^N)$. (d) is due to the fact that $u_{21}^N$ is a function of $(y_1^N,y_2^N)$.

Hence, 
\begin{align}
2R_1 + R_2 & \le \log\left(1+\INR_2+\frac{\SNR_2}{1+\INR_1}\right) + \log\left(1+\frac{\SNR_1}{1+\INR_2}\right)\\&\quad + \log\lp 1+ \SNR_1+\INR_1\rp + \C_{21}+\C_{12}
\end{align}

Similarly, 
\begin{align}
R_1 + 2R_2 & \le \log\left(1+\INR_1+\frac{\SNR_1}{1+\INR_2}\right) + \log\left(1+\frac{\SNR_2}{1+\INR_1}\right)\\&\quad + \log\lp 1+ \SNR_2+\INR_2\rp + \C_{12}+\C_{21}
\end{align}
\end{proof} 

{\flushleft(6) \emph{Bounds (\ref{eq_Slope2CutSetBd}) on $2R_1+R_2$ and $R_1+2R_2$}}
\begin{figure}[htbp]
{\center
\includegraphics[width=3in]{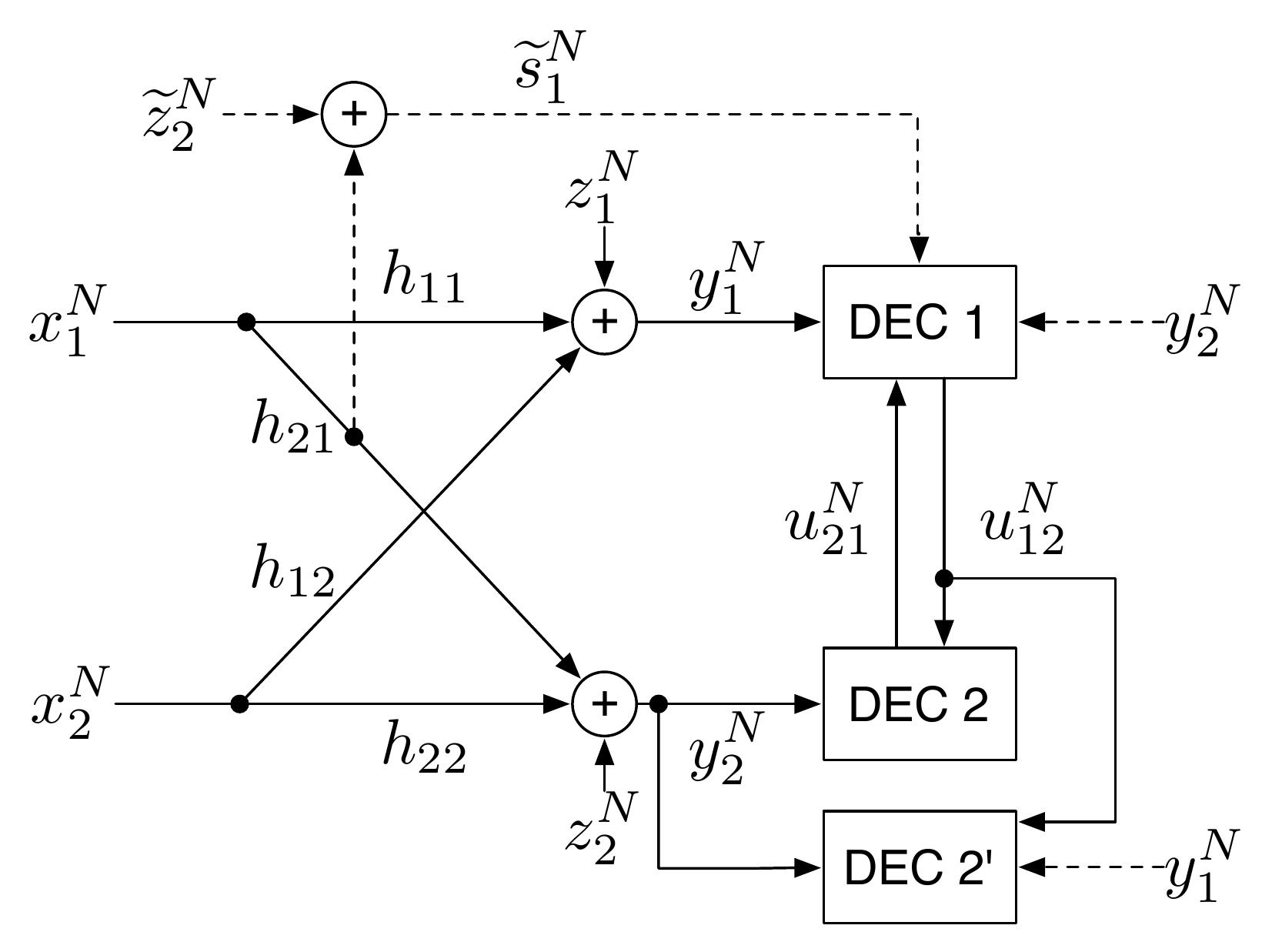}
\caption{Side Information Structure for Bound \eqref{eq_Slope2CutSetBd}}
\label{fig_GenieOutBd4}
}
\end{figure}

\begin{proof}
A genie gives side information $\wtild{s}_1^N,y_2^N$ to receiver $1$, and side information $y_1^N$ to one of the receiver 2's (refer to Fig. \ref{fig_GenieOutBd4}.) Making use of Fano's inequality, data processing inequality, the fact that $u_{12}^N, u_{21}^N$ are functions of $(y_1^N,y_2^N)$, and the fact that Gaussian distribution maximizes conditional entropy subject to conditional variance constraints, we have: if $(R_1,R_2)$ is achievable,
\begin{align}
&N(R_1+2R_2-\epsilon_N)\\
&\le I(x_1^N;y_1^N,u_{21}^N) +  I(x_2^N;y_2^N,u_{12}^N)+ I(x_2^N;y_2^N,u_{12}^N)\\
&\overset{\aaaa}{\le} I(x_1^N;y_1^N,u_{21}^N,y_2^N,\wtild{s}_1^N) + I(x_2^N;y_2^N,u_{12}^N,y_1^N)+ I(x_2^N;y_2^N) + I(x_2^N;u_{12}^N|y_2^N)\\
&\overset{\bbbb}{\le} I(x_1^N;y_1^N,u_{21}^N,y_2^N|\wtild{s}_1^N) + I(x_1^n; \wtild{s}_1^N)+ I(x_2^N;y_2^N,u_{12}^N,y_1^N)+ I(x_2^N;y_2^N) + H(u_{12}^N)\\
&\overset{\cccc}{\le} I(x_1^N;y_1^N,y_{2}^N|\wtild{s}_1^N) + I(x_2^N;y_1^N,y_2^N) + h(\wtild{s}_1^N) - h(z_2^N) + h(y_2^N) - h(s_1^N) + N\C_{12}\\
&\overset{\dddd}{\le} I(x_1^N;y_1^N,y_{2}^N|\wtild{s}_1^N) + I(x_2^N;y_1^N,y_2^N | x_1^N,\wtild{s}_1^N) + h(y_2^N) - h(z_2^N) + N\C_{12}\\
&= I(x_1^N,x_2^N;y_1^N,y_{2}^N|\wtild{s}_1^N) + h(y_2^N) - h(z_2^N) + N\C_{12}\\
&= h(y_1^N,y_{2}^N|\wtild{s}_1^N) + h(y_2^N) - h(z_1^N,z_2^N) - h(z_2^N) + N\C_{12}\\
&\overset{\eeee}{\le} N\log\lp 1+ \frac{\SNR_1}{1+\INR_2} + \INR_1 + \SNR_2 + \frac{\INR_2}{1+\INR_2} + \frac{|h_{11}h_{22} - h_{12}h_{21}|^2}{1+\INR_2}\rp\\
&\quad + N\log\lp1+\SNR_2+\INR_2\rp + N\C_{12},
\end{align}
where $\epsilon_N\rightarrow 0$ as $N\rightarrow \infty$. (a) is due to the genie giving side information $\wtild{s}_1^N,y_2^N$ to receiver $1$, and side information $y_1^N$ to one of the receiver 2's. (b) is due to chain rule and the fact that $I(x_2^N;u_{12}^N|y_2^N) \le H(u_{12}^N)$. (c) is due to the fact that $u_{21}^N$ and $u_{12}^N$ are both functions of $(y_1^N,y_2^N)$, and that $H(u_{12}^N)\le N\C_{12}$. (d) is due to the fact that conditioning reduces entropy and that $x_2^N$ and $(x_1^N,\wtild{s}_1^N)$ are independent. (e) is due to the fact that Gaussian distribution maximizes conditional entropy subject to conditional variance constraints.

Hence,
\begin{align}
R_1+2R_2 &\le \log\lp 1+ \frac{\SNR_1}{1+\INR_2} + \INR_1 + \SNR_2 + \frac{\INR_2}{1+\INR_2} + \frac{|h_{11}h_{22} - h_{12}h_{21}|^2}{1+\INR_2}\rp\\
&\quad + \log\lp1+\SNR_2+\INR_2\rp + \C_{12}
\end{align}

Similarly,
\begin{align}
2R_1+R_2 &\le \log\lp 1+ \frac{\SNR_2}{1+\INR_1} + \INR_2 + \SNR_1 + \frac{\INR_1}{1+\INR_1} + \frac{|h_{11}h_{22} - h_{12}h_{21}|^2}{1+\INR_1}\rp\\
&\quad + \log\lp1+\SNR_1+\INR_1\rp + \C_{21}
\end{align}
\end{proof}


\section{Proof of Claim \ref{claim_Corner}, Claim \ref{claim_WeakGap}, Claim \ref{claim_MixedGap}, and Claim \ref{claim_StrongGap}}\label{app_PfClaims}
\subsection{Proof of Claim \ref{claim_Corner}}
\begin{proof}
To show (a), since we have four possible $R_1+2R_2$ bounds, we distinguish into 4 cases: 
{\flushleft (1)} If the bound
\begin{align}
R_1+2R_2 &\le I\lp x_1,x_{2c}; y_1| x_{1c}\rp + I\lp x_{1c},x_2; y_2\rp + I\lp x_2; y_2| x_{1c},x_{2c}\rp + \C_{12} + \lp\C_{21}-\xi_1\rp^+
\end{align}
is active, note that the point $(R^*_1,R^*_2)$ where the $R_1+2R_2$ bound and the $2R_1+R_2$ bound \eqref{eq_Trouble} intersect, satisfies
\begin{align}
&3R_1^*+3R_2^*\\
&=  I\lp x_1,x_{2c}; y_1,\what{y}_2\rp + I\lp x_1; y_1| x_{1c},x_{2c}\rp + I\lp x_{1c},x_2; y_2| x_{2c}\rp + \C_{12}\\
&\quad + I\lp x_1,x_{2c}; y_1| x_{1c}\rp + I\lp x_{1c},x_2; y_2\rp + I\lp x_2; y_2| x_{1c},x_{2c}\rp + \C_{12} + \lp\C_{21}-\xi_1\rp^+\\
&= I\lp x_1,x_{2c}; y_1,\what{y}_2\rp + I\lp x_2; y_2| x_{1c},x_{2c}\rp\\
&\quad + I\lp x_1; y_1| x_{1c},x_{2c}\rp + I\lp x_{1c},x_2; y_2\rp + \C_{12}\\
&\quad + I\lp x_1,x_{2c}; y_1| x_{1c}\rp + I\lp x_{1c},x_2; y_2| x_{2c}\rp + \C_{12} + \lp\C_{21}-\xi_1\rp^+\\
&= \eqref{eq_SumBd2} + \eqref{eq_SumBd5} + \eqref{eq_SumBd3},
\end{align}
which is greater than three times the active sum rate bound.

{\flushleft (2)} If the bound
\begin{align}
R_1+2R_2 &\le I\lp x_1,x_{2c}; y_1| x_{1c}\rp + I\lp x_{2c};y_1|x_1\rp + I\lp x_{1c},x_2; y_2| x_{2c}\rp + I\lp x_2; y_2| x_{1c},x_{2c}\rp\\&\quad + \C_{12} + \lp\C_{21}-\xi_1\rp^+
\end{align}
is active, note that the point $(R^*_1,R^*_2)$ where the $R_1+2R_2$ bound and the $2R_1+R_2$ bound \eqref{eq_Trouble} intersect, satisfies
\begin{align}
&3R_1^*+3R_2^*\\
&=  I\lp x_1,x_{2c}; y_1,\what{y}_2\rp + I\lp x_1; y_1| x_{1c},x_{2c}\rp + I\lp x_{1c},x_2; y_2| x_{2c}\rp + \C_{12}\\
&\quad + I\lp x_1,x_{2c}; y_1| x_{1c}\rp + I\lp x_{2c};y_1|x_1\rp + I\lp x_{1c},x_2; y_2| x_{2c}\rp + I\lp x_2; y_2| x_{1c},x_{2c}\rp\\&\quad + \C_{12} + \lp\C_{21}-\xi_1\rp^+\\
&= I\lp x_1,x_{2c}; y_1,\what{y}_2\rp + I\lp x_2; y_2| x_{1c},x_{2c}\rp\\
&\quad + I\lp x_1; y_1| x_{1c},x_{2c}\rp + I\lp x_{2c};y_1|x_1\rp + I\lp x_{1c},x_2; y_2| x_{2c}\rp + \C_{12}\\
&\quad + I\lp x_1,x_{2c}; y_1| x_{1c}\rp + I\lp x_{1c},x_2; y_2| x_{2c}\rp + \C_{12} + \lp\C_{21}-\xi_1\rp^+\\
&= \eqref{eq_SumBd2} + \eqref{eq_SumBd6} + \eqref{eq_SumBd3},
\end{align}
which is greater than three times the active sum rate bound.

{\flushleft (3)} If the bound
\begin{align}
R_1+2R_2 &\le I\lp x_1,x_{2c}; y_1,\what{y}_2| x_{1c}\rp + I\lp x_{1c},x_2; y_2\rp + I\lp x_2; y_2| x_{1c},x_{2c}\rp + \C_{12}
\end{align}
is active, note that the point $(R^*_1,R^*_2)$ where the $R_1+2R_2$ bound and the $2R_1+R_2$ bound \eqref{eq_Trouble} intersect, satisfies
\begin{align}
&3R_1^*+3R_2^*\\
&=  I\lp x_1,x_{2c}; y_1,\what{y}_2\rp + I\lp x_1; y_1| x_{1c},x_{2c}\rp + I\lp x_{1c},x_2; y_2| x_{2c}\rp + \C_{12}\\
&\quad + I\lp x_1,x_{2c}; y_1,\what{y}_2| x_{1c}\rp + I\lp x_{1c},x_2; y_2\rp + I\lp x_2; y_2| x_{1c},x_{2c}\rp + \C_{12}\\
&= I\lp x_1,x_{2c}; y_1,\what{y}_2\rp + I\lp x_2; y_2| x_{1c},x_{2c}\rp\\
&\quad + I\lp x_1; y_1| x_{1c},x_{2c}\rp + I\lp x_{1c},x_2; y_2\rp + \C_{12}\\
&\quad +  I\lp x_1,x_{2c}; y_1,\what{y}_2| x_{1c}\rp + I\lp x_{1c},x_2; y_2| x_{2c}\rp + \C_{12}\\
&= \eqref{eq_SumBd2} + \eqref{eq_SumBd5} + \eqref{eq_SumBd4},
\end{align}
which is greater than three times the active sum rate bound.

{\flushleft (4)} If the bound
\begin{align}
&R_1+2R_2 \le\\ &I\lp x_1,x_{2c}; y_1,\what{y}_2| x_{1c}\rp + I\lp x_{2c};y_1|x_1\rp + I\lp x_{1c},x_2; y_2| x_{2c}\rp + I\lp x_2; y_2| x_{1c},x_{2c}\rp + \C_{12}
\end{align}
is active, note that the point $(R^*_1,R^*_2)$ where the $R_1+2R_2$ bound and the $2R_1+R_2$ bound \eqref{eq_Trouble} intersect, satisfies
\begin{align}
&3R_1^*+3R_2^*\\
&=  I\lp x_1,x_{2c}; y_1,\what{y}_2\rp + I\lp x_1; y_1| x_{1c},x_{2c}\rp + I\lp x_{1c},x_2; y_2| x_{2c}\rp + \C_{12}\\
&\quad + I\lp x_1,x_{2c}; y_1,\what{y}_2| x_{1c}\rp + I\lp x_{2c};y_1|x_1\rp + I\lp x_{1c},x_2; y_2| x_{2c}\rp + I\lp x_2; y_2| x_{1c},x_{2c}\rp + \C_{12}\\
&= I\lp x_1,x_{2c}; y_1,\what{y}_2\rp + I\lp x_2; y_2| x_{1c},x_{2c}\rp\\
&\quad + I\lp x_1; y_1| x_{1c},x_{2c}\rp + I\lp x_{2c};y_1|x_1\rp + I\lp x_{1c},x_2; y_2| x_{2c}\rp + \C_{12}\\
&\quad +  I\lp x_1,x_{2c}; y_1,\what{y}_2| x_{1c}\rp + I\lp x_{1c},x_2; y_2| x_{2c}\rp + \C_{12}\\
&= \eqref{eq_SumBd2} + \eqref{eq_SumBd6} + \eqref{eq_SumBd4},
\end{align}
which is greater than three times the active sum rate bound.

Hence, we conclude that in case (a), the corner point where $R_1+R_2$ bound and $R_1+2R_2$ bound intersect can be achieved.



To show (b), since we have two possible $R_2$ bounds, we distinguish into 2 cases:
{\flushleft (1)} If the bound
\begin{align}
R_2 &\le I\lp x_2; y_2| x_{1c}\rp + \C_{12}
\end{align}
is active, note that the point $(R^*_1,R^*_2)$ where the $R_2$ bound and the $2R_1+R_2$ bound \eqref{eq_Trouble} intersect, satisfies
\begin{align}
&2R_1^*+2R_2^*\\
&=  I\lp x_1,x_{2c}; y_1,\what{y}_2\rp + I\lp x_1; y_1| x_{1c},x_{2c}\rp + I\lp x_{1c},x_2; y_2| x_{2c}\rp + \C_{12}\\
&\quad + I\lp x_2; y_2| x_{1c}\rp + \C_{12}\\
&= I\lp x_1,x_{2c}; y_1,\what{y}_2\rp + I\lp x_2; y_2| x_{1c},x_{2c}\rp\\
&\quad + I\lp x_1; y_1| x_{1c},x_{2c}\rp + I\lp x_{1c},x_2; y_2\rp + \C_{12}\\
&\quad +  I\lp x_2; y_2| x_{1c}\rp + I\lp x_{1c}; y_2|x_{2c}\rp - I\lp x_{1c},x_2;y_2\rp + \C_{12}\\
&= \eqref{eq_SumBd2} + \eqref{eq_SumBd5} + \Big[ I\lp x_{1c}; y_2|x_{2c}\rp - I\lp x_{1c}; y_2\rp + \C_{12} \Big]\\
&\overset{(**)}{\ge } \eqref{eq_SumBd2} + \eqref{eq_SumBd5},
\end{align}
which is greater than two times the active sum rate bound. $(**)$ is due to
\begin{align}
I\lp x_{1c}; y_2|x_{2c}\rp &= I\lp x_{1c}; y_2,x_{2c}\rp - I\lp x_{1c};x_{2c}\rp = I\lp x_{1c}; y_2,x_{2c}\rp\\
&\ge I\lp x_{1c}; y_2\rp,
\end{align}
since $x_{1c}$ and $x_{2c}$ are independent.

{\flushleft (2)} If the bound
\begin{align}
R_2 &\le I\lp x_{2c}; y_1| x_1\rp + I\lp x_2; y_2| x_{1c}, x_{2c}\rp
\end{align}
is active, note that the point $(R^*_1,R^*_2)$ where the $R_2$ bound and the $2R_1+R_2$ bound \eqref{eq_Trouble} intersect, satisfies
\begin{align}
&2R_1^*+2R_2^*\\
&=  I\lp x_1,x_{2c}; y_1,\what{y}_2\rp + I\lp x_1; y_1| x_{1c},x_{2c}\rp + I\lp x_{1c},x_2; y_2| x_{2c}\rp + \C_{12}\\
&\quad + I\lp x_{2c}; y_1| x_1\rp + I\lp x_2; y_2| x_{1c}, x_{2c}\rp\\
&= I\lp x_1,x_{2c}; y_1,\what{y}_2\rp + I\lp x_2; y_2| x_{1c},x_{2c}\rp\\
&\quad +I\lp x_1; y_1| x_{1c},x_{2c}\rp + I\lp x_{2c};y_1|x_1\rp + I\lp x_{1c},x_2; y_2| x_{2c}\rp + \C_{12}\\
&= \eqref{eq_SumBd2} + \eqref{eq_SumBd6},
\end{align}
which is greater than two times the active sum rate bound.

Hence, we conclude that in case (b), the corner point where $R_1+R_2$ bound and $R_2$ bound intersect can be achieved.
\end{proof}

\subsection{Proof of Claim \ref{claim_WeakGap}}
\begin{proof}
(Keep in mind $\Delta_2 = \max\{\SNR_{2p},1\}$)
{\flushleft (1)} $R_1$ bound:
We have two bounds. First, $I\lp x_1; y_1| x_{2c}\rp = \log\lp1+\frac{\SNR_1}{1+\INR_{1p}}\rp$, which is within 2 bits to the upper bound $\log\lp1+\SNR_1+\INR_2\rp$. Second, 
\begin{align}
&I\lp x_1;y_1| x_{1c},x_{2c}\rp + I\lp x_{1c},x_2; y_2| x_{2c}\rp +\C_{12}\\
&= \log\lp1+\frac{\SNR_{1p}}{1+\INR_{1p}}\rp + \log\lp \frac{1+\SNR_{2p}+\INR_2}{1+\INR_{2p}}\rp + \C_{12}\\
&\ge \log\lp\frac{1+\SNR_1+\INR_2}{1+\INR_{1p}}\rp - 1.
\end{align}
Hence, if the second bound is active, it is within 2 bits to the upper bound\\ $\log\lp1+\SNR_1+\INR_2\rp$.

{\flushleft (2)} $R_2$ bound:
We have two bounds. First, $I\lp x_2; y_2| x_{1c}\rp + \C_{12} = \log\lp1+\frac{\SNR_2}{1+\INR_{2p}}\rp +\C_{12}$. If the first bound is active, it is within 2 bits to the upper bound $\log\lp1+\SNR_2+\INR_1\rp$. Second,
\begin{align}
&I\lp x_{2c}; y_1| x_1\rp + I\lp x_2; y_2| x_{1c}, x_{2c}\rp\\
&= \log\lp\frac{1+\INR_1}{1+\INR_{1p}}\rp + \log\lp\frac{1+\SNR_{2p}+\INR_{2p}}{1+\INR_{2p}}\rp\\
&\ge \log\lp\frac{1+\SNR_2+\INR_1}{1+\INR_{2p}}\rp - 1.
\end{align}
Hence, the second bound is within 2 bits to the upper bound $\log\lp1+\SNR_2+\INR_1\rp$.

{\flushleft (3)} $R_1+R_2$ bound:
We have six bounds for $R_1+R_2$, investigated as follows:
\begin{itemize}
\item First,
\begin{align}
&I\lp x_1,x_{2c}; y_1\rp + I\lp x_2; y_2| x_{1c},x_{2c}\rp + \lp\C_{21}-\xi_1\rp^+\\
&= \log\lp\frac{1+\SNR_1+\INR_1}{1+\INR_{1p}}\rp + \log\lp 1+\frac{\SNR_{2p}}{1+\INR_{2p}}\rp + \lp\C_{21}-\xi_1\rp^+,
\end{align}
which is within $2+\log3 = \log12$ bits to the upper bound
\begin{align}
R_1+R_2 &\le \log\left(1+\SNR_1+\INR_1\right) + \log\left(1+\frac{\SNR_2}{1+\INR_1}\right) + \C_{21}.
\end{align}
\item Second, 
\begin{align}
&I\lp x_1,x_{2c}; y_1,\what{y}_2\rp + I\lp x_2; y_2| x_{1c},x_{2c}\rp \\
&= \log\lp\frac{(1+\Delta_2)(1+\SNR_1+\INR_1)+\SNR_2+\INR_2+|h_{11}h_{22}-h_{12}h_{21}|^2}{(1+\Delta_2)(1+\INR_{1p})+\SNR_{2p}}\rp\\
&\quad + \log\lp1+\frac{\SNR_{2p}}{1+\INR_{2p}}\rp\\
&\overset{\aaaa}{\ge} \log\lp\frac{1+\SNR_1+\INR_1+\SNR_2+\INR_2+|h_{11}h_{22}-h_{12}h_{21}|^2}{5\Delta_2}\rp\\&\quad + \log\lp1+\SNR_{2p}\rp -1\\
&= \log\lp 1+\SNR_1+\INR_1+\SNR_2+\INR_2+|h_{11}h_{22}-h_{12}h_{21}|^2\rp\\&\quad + \log\lp \frac{1+\SNR_{2p}}{\Delta_2}\rp - \log10\\
&\overset{\bbbb}{\ge} \log\lp 1+\SNR_1+\INR_1+\SNR_2+\INR_2+|h_{11}h_{22}-h_{12}h_{21}|^2\rp - \log10,
\end{align}
where (a) is due to $(1+\Delta_2)(1+\INR_{1p})+\SNR_{2p} \le (1+\Delta_2)2 + \Delta_2 \le 5\Delta_2$ since $\Delta_2 = \max\{\SNR_{2p},1\}$ and $\INR_{1p}\le 1$. (b) is due to $\Delta_2 = \max\{\SNR_{2p},1\} \le 1+\SNR_{2p}$.

This lower bound is within $\log10$ bits to the upper bound
\begin{align}
R_1+R_2 &\le \log\lp1+\SNR_1+\INR_1+\SNR_2+\INR_2+|h_{11}h_{22}-h_{12}h_{21}|^2\rp.
\end{align}


\item Third,
\begin{align}
&I\lp x_1,x_{2c}; y_1| x_{1c}\rp + I\lp x_{1c},x_2; y_2| x_{2c}\rp + \C_{12} + \lp\C_{21}-\xi_1\rp^+\\
&= \log\lp\frac{1+\SNR_{1p}+\INR_1}{1+\INR_{1p}}\rp + \log\lp\frac{1+\SNR_{2p}+\INR_2}{1+\INR_{2p}}\rp + \C_{12} + \lp\C_{21}-\xi_1\rp^+,
\end{align}
which is within $2+\log3 = \log12$ bits to the upper bound
\begin{align}
R_1+R_2 &\le \log\left(1+\INR_1+\frac{\SNR_1}{1+\INR_2}\right) + \log\left(1+\INR_2+\frac{\SNR_2}{1+\INR_1}\right)\\&\quad + \C_{21}+\C_{12}.
\end{align}

\item Fourth,
\begin{align}
&I\lp x_1,x_{2c}; y_1,\what{y}_2| x_{1c}\rp + I\lp x_{1c},x_2; y_2| x_{2c}\rp + \C_{12}\\
&= I\lp x_{2c}; y_1,\what{y}_2| x_{1c}\rp + I\lp x_1; y_1,\what{y}_2| x_{1c},x_{2c}\rp + I\lp x_{1c},x_2; y_2| x_{2c}\rp + \C_{12}\\
&\ge I\lp x_{2c}; \what{y}_2| x_{1c}\rp + I\lp x_1; y_1| x_{1c},x_{2c}\rp + I\lp x_{1c},x_2; y_2| x_{2c}\rp + \C_{12}\\
&\overset{\aaaa}{\ge} I\lp x_{2c}; y_2| x_{1c}\rp -1 + I\lp x_1; y_1| x_{1c},x_{2c}\rp + I\lp x_{1c},x_2; y_2| x_{2c}\rp + \C_{12}\\
&\overset{\bbbb}{\ge} I\lp x_1; y_1| x_{1c},x_{2c}\rp + I\lp x_{1c},x_2; y_2\rp + \C_{12} - 1\\
&= \log\lp1+\frac{\SNR_{1p}}{1+\INR_{1p}}\rp + \log\lp\frac{1+\SNR_2+\INR_2}{1+\INR_{2p}}\rp + \C_{12}-1,
\end{align}
which is within $3$ bits to the upper bound
\begin{align}
R_1+R_2 &\le \log\left(1+\SNR_2+\INR_2\right) + \log\left(1+\frac{\SNR_1}{1+\INR_2}\right) + \C_{12}.
\end{align}
Note that (a) is due to
\begin{align}
&I\lp x_{2c}; \what{y}_2| x_{1c}\rp = \log\lp\frac{1+\Delta_2+\INR_{2p}+\SNR_2}{1+\Delta_2+\INR_{2p}+\SNR_{2p}}\rp\\
&\ge \log\lp\frac{1+\INR_{2p}+\SNR_2}{1+(1+\SNR_{2p})+\INR_{2p}+\SNR_{2p}}\rp\\
&\ge \log\lp\frac{1+\INR_{2p}+\SNR_2}{1+\INR_{2p}+\SNR_{2p}}\rp -1 = I\lp x_{2c}; y_2| x_{1c}\rp - 1.
\end{align}
(b) is due to
\begin{align}
&I\lp x_{2c}; y_2| x_{1c}\rp + I\lp x_{1c},x_2; y_2| x_{2c}\rp =  I\lp x_{2c}; y_2,x_{1c}\rp + I\lp x_{1c},x_2; y_2| x_{2c}\rp\\
&\ge  I\lp x_{2c}; y_2\rp + I\lp x_{1c},x_2; y_2| x_{2c}\rp =  I\lp x_{1c},x_2,x_{2c}; y_2\rp\\
&= I\lp x_{1c},x_2; y_2\rp.
\end{align}


\item Fifth,
\begin{align}
&I\lp x_1; y_1| x_{1c},x_{2c}\rp + I\lp x_{1c},x_2; y_2\rp + \C_{12}\\
&= \log\lp1+\frac{\SNR_{1p}}{1+\INR_{1p}}\rp + \log\lp\frac{1+\SNR_2+\INR_2}{1+\INR_{2p}}\rp + \C_{12},
\end{align}
which is within $2$ bits to the upper bound
\begin{align}
R_1+R_2 &\le \log\left(1+\SNR_2+\INR_2\right) + \log\left(1+\frac{\SNR_1}{1+\INR_2}\right) + \C_{12}.
\end{align}

\item Sixth, 
\begin{align}
&I\lp x_1; y_1| x_{1c},x_{2c}\rp + I\lp x_{2c};y_1|x_1\rp + I\lp x_{1c},x_2; y_2| x_{2c}\rp + \C_{12}\\
&= \log\lp1+\frac{\SNR_{1p}}{1+\INR_{1p}}\rp + \log\lp\frac{1+\INR_1}{1+\INR_{1p}}\rp + \log\lp\frac{1+\SNR_{2p}+\INR_2}{1+\INR_{2p}}\rp + \C_{12}\\
&\ge \log\lp1+\frac{\SNR_{1p}}{1+\INR_{1p}}\rp + \log\lp\frac{1+\SNR_2+\INR_2}{(1+\INR_{1p})(1+\INR_{2p})}\rp + \C_{12},
\end{align}
which is within $3$ bits to the upper bound
\begin{align}
R_1+R_2 &\le \log\left(1+\SNR_2+\INR_2\right) + \log\left(1+\frac{\SNR_1}{1+\INR_2}\right) + \C_{12}.
\end{align}
\end{itemize}

{\flushleft (4)} $2R_1+R_2$ bound: The bound
\begin{align}
&I\lp x_1,x_{2c}; y_1\rp + I\lp x_1; y_1| x_{1c},x_{2c}\rp + I\lp x_{1c},x_2; y_2| x_{2c}\rp + \C_{12} + \lp\C_{21}-\xi_1\rp^+\\
&= \log\lp\frac{1+\SNR_1+\INR_1}{1+\INR_{1p}}\rp + \log\lp1+\frac{\SNR_{1p}}{1+\INR_{1p}}\rp + \log\lp\frac{1+\SNR_{2p}+\INR_2}{1+\INR_{2p}}\rp\\
&\quad + \C_{12} + \lp\C_{21}-\xi_1\rp^+,
\end{align}
which is within $3+\log3 = \log24$ bits to the upper bound
\begin{align}
2R_1 + R_2 & \le \log\left(1+\INR_2+\frac{\SNR_2}{1+\INR_1}\right) + \log\left(1+\frac{\SNR_1}{1+\INR_2}\right)\\&\quad + \log\lp 1+ \SNR_1+\INR_1\rp + \C_{21}+\C_{12}.
\end{align}

{\flushleft (5)} $R_1+2R_2$ bound:
We have six bounds for $R_1+2R_2$, investigated as follows:
\begin{itemize}
\item First,
\begin{align}
&I\lp x_1,x_{2c}; y_1| x_{1c}\rp + I\lp x_{1c},x_2; y_2\rp + I\lp x_2; y_2| x_{1c},x_{2c}\rp + \C_{12} + \lp\C_{21}-\xi_1\rp^+\\
&= \log\lp\frac{1+\SNR_{1p}+\INR_1}{1+\INR_{1p}}\rp + \log\lp\frac{1+\SNR_2+\INR_2}{1+\INR_{2p}}\rp + \log\lp1+\frac{\SNR_{2p}}{1+\INR_{2p}}\rp\\
&\quad + \C_{12} + \lp\C_{21}-\xi_1\rp^+,
\end{align}
which is within $3+\log3 = \log24$ bits to the upper bound
\begin{align}
R_1 + 2R_2 & \le \log\left(1+\INR_1+\frac{\SNR_1}{1+\INR_2}\right) + \log\left(1+\frac{\SNR_2}{1+\INR_1}\right)\\&\quad + \log\lp 1+ \SNR_2+\INR_2\rp + \C_{12}+\C_{21}.
\end{align}

\item Second,
\begin{align}
&I\lp x_1,x_{2c}; y_1| x_{1c}\rp + I\lp x_{2c};y_1|x_1\rp + I\lp x_{1c},x_2; y_2| x_{2c}\rp + I\lp x_2; y_2| x_{1c},x_{2c}\rp\\&\quad + \C_{12} + \lp\C_{21}-\xi_1\rp^+\\
&= \log\lp\frac{1+\SNR_{1p}+\INR_1}{1+\INR_{1p}}\rp + \log\lp\frac{1+\INR_1}{1+\INR_{1p}}\rp + \log\lp\frac{1+\SNR_{2p}+\INR_2}{1+\INR_{2p}}\rp\\&\quad + \log\lp1+\frac{\SNR_{2p}}{1+\INR_{2p}}\rp + \C_{12} + \lp\C_{21}-\xi_1\rp^+\\
&\ge \log\lp\frac{1+\SNR_{1p}+\INR_1}{1+\INR_{1p}}\rp + \log\lp\frac{1+\SNR_2+\INR_2}{(1+\INR_{1p})(1+\INR_{2p})}\rp\\&\quad + \log\lp1+\frac{\SNR_{2p}}{1+\INR_{2p}}\rp + \C_{12} + \lp\C_{21}-\xi_1\rp^+,
\end{align}
which is within $4+\log3 = \log48$ bits to the upper bound
\begin{align}
R_1 + 2R_2 & \le \log\left(1+\INR_1+\frac{\SNR_1}{1+\INR_2}\right) + \log\left(1+\frac{\SNR_2}{1+\INR_1}\right)\\&\quad + \log\lp 1+ \SNR_2+\INR_2\rp + \C_{12}+\C_{21}.
\end{align}

\item Third, 
\begin{align}
&I\lp x_1,x_{2c}; y_1,\what{y}_2| x_{1c}\rp + I\lp x_{1c},x_2; y_2\rp + I\lp x_2; y_2| x_{1c},x_{2c}\rp + \C_{12}\\
&= \log\lp\frac{(1+\Delta_2)(1+\SNR_{1p}+\INR_1) + \SNR_2 + \INR_{2p} + |h_{11}h_{22}-h_{12}h_{21}|^2Q_{1p}}{(1+\Delta_2)(1+\INR_{1p})+\SNR_{2p}}\rp\\
&\quad + \log\lp\frac{1+\SNR_2+\INR_2}{1+\INR_{2p}}\rp + \log\lp1+\frac{\SNR_{2p}}{1+\INR_{2p}}\rp + \C_{12}\\
&\ge \log\lp\frac{1+\SNR_{1p}+\INR_1 + \SNR_2 + \INR_{2p} + |h_{11}h_{22}-h_{12}h_{21}|^2Q_{1p}}{5\Delta_2}\rp\\
&\quad + \log\lp\frac{1+\SNR_2+\INR_2}{1+\INR_{2p}}\rp + \log\lp1+\SNR_{2p}\rp + \C_{12} - 1\\
&\ge \log\lp 1+\SNR_{1p}+\INR_1 + \SNR_2 + \INR_{2p} + |h_{11}h_{22}-h_{12}h_{21}|^2Q_{1p}\rp\\
&\quad + \log\lp\frac{1+\SNR_2+\INR_2}{1+\INR_{2p}}\rp + \C_{12} - 1 - \log5,
\end{align}
which is within $2+\log5 = \log20$ bits to the upper bound
\begin{align}
&\log\lp 1+ \frac{\SNR_1}{1+\INR_2} + \INR_1 + \SNR_2 + \frac{\INR_2}{1+\INR_2} + \frac{|h_{11}h_{22} - h_{12}h_{21}|^2}{1+\INR_2}\rp\\
&\quad + \log\lp1+\SNR_2+\INR_2\rp + \C_{12}.
\end{align}


\item Fourth, 
\begin{align}
&I\lp x_1,x_{2c}; y_1,\what{y}_2| x_{1c}\rp + I\lp x_{2c};y_1|x_1\rp + I\lp x_{1c},x_2; y_2| x_{2c}\rp + I\lp x_2; y_2| x_{1c},x_{2c}\rp + \C_{12}\\
&= \log\lp\frac{(1+\Delta_2)(1+\SNR_{1p}+\INR_1) + \SNR_2 + \INR_{2p} + |h_{11}h_{22}-h_{12}h_{21}|^2Q_{1p}}{(1+\Delta_2)(1+\INR_{1p})+\SNR_{2p}}\rp\\
&\quad +\log\lp\frac{1+\INR_1}{1+\INR_{1p}}\rp + \log\lp\frac{1+\SNR_{2p}+\INR_2}{1+\INR_{2p}}\rp + \log\lp1+\frac{\SNR_{2p}}{1+\INR_{2p}}\rp + \C_{12}\\
&\ge \log\lp\frac{1+\SNR_{1p}+\INR_1 + \SNR_2 + \INR_{2p} + |h_{11}h_{22}-h_{12}h_{21}|^2Q_{1p}}{5\Delta_2}\rp\\
&\quad + \log\lp\frac{1+\SNR_2+\INR_2}{(1+\INR_{1p})(1+\INR_{2p})}\rp + \log\lp1+\SNR_{2p}\rp + \C_{12} - 1\\
&\ge \log\lp 1+\SNR_{1p}+\INR_1 + \SNR_2 + \INR_{2p} + |h_{11}h_{22}-h_{12}h_{21}|^2Q_{1p}\rp\\
&\quad + \log\lp\frac{1+\SNR_2+\INR_2}{(1+\INR_{1p})(1+\INR_{2p})}\rp + \C_{12} - 1 - \log5,
\end{align}
which is within $3+\log5 = \log40$ bits to the upper bound
\begin{align}
&\log\lp 1+ \frac{\SNR_1}{1+\INR_2} + \INR_1 + \SNR_2 + \frac{\INR_2}{1+\INR_2} + \frac{|h_{11}h_{22} - h_{12}h_{21}|^2}{1+\INR_2}\rp\\
&\quad + \log\lp1+\SNR_2+\INR_2\rp + \C_{12}.
\end{align}


\end{itemize}

Therefore, we see that the bounds in $\mscr{R}_{2\ra1\ra2}$ except \eqref{eq_Trouble} satisfies:
\begin{itemize}
\item $R_1$ bound is within $2$ bits to outer bounds;
\item $R_2$ bound is within $2$ bits to outer bounds;
\item $R_1+R_2$ bound is within $\log 12$ bits to outer bounds;
\item $2R_1+R_2$ bound is within $\log 24$ bits to outer bounds;
\item $R_1+2R_2$ bound is within $\log 48$ bits to outer bounds.
\end{itemize}
\end{proof}

\subsection{Proof of Claim \ref{claim_MixedGap}}
\begin{proof}
(Keep in mind $\Delta_2 = 1$)
{\flushleft (1)} $R_1$ bound:
We have two bounds. First, $I\lp x_1; y_1| x_2\rp = \log\lp 1+\SNR_1\rp$, which is within 1 bit to the upper bound $R_1\le \log\lp 1+\SNR_1+\INR_2\rp$. Second,
\begin{align}
&I\lp x_1; y_1| x_{1c},x_2\rp + I\lp x_{1c}; y_2| x_2\rp + \C_{12}\\
&= \log\lp 1+\SNR_{1p}\rp + \log\lp \frac{1+\INR_2}{1+\INR_{2p}}\rp +\C_{12}\\
&\ge \log\lp \frac{1+\SNR_1+\INR_2}{1+\INR_{2p}}\rp + \C_{12}.
\end{align}
Hence, if the second bound is active, it is within 1 bit to the upper bound $\log\lp 1+\SNR_1+\INR_2\rp$.

{\flushleft (2)} $R_2$ bound:
We have two bounds. First, $I\lp x_1; y_1| x_1\rp = \log\lp1+\INR_1\rp$, which is within 1 bit to the upper bound $R_2 \le \log\lp1+\SNR_2+\INR_1\rp$. Second, $I\lp x_2; y_2| x_{1c}\rp + \C_{12} = \log\lp \frac{1+\SNR_2+\INR_{2p}}{1+\INR_{2p}}\rp + \C_{12}$, which is within 1 bit to the upper bound $R_2 \le \log\lp1+\SNR_2\rp + \C_{12}$.

{\flushleft (3)} $R_1+R_2$ bound: We have five bounds, investigated as follows:
\begin{itemize}
\item First, 
\begin{align}
I\lp x_1,x_2; y_1\rp + \lp\C_{21}-\xi_1\rp^+ = \log\lp 1+\SNR_1+\INR_1\rp + \lp\C_{21}-\xi_1\rp^+,
\end{align}
which is within $1+\xi_1= 2$ bits to the upper bound
\begin{align}
R_1+R_2 &\le \log\left(1+\SNR_1+\INR_1\right) + \log\left(1+\frac{\SNR_2}{1+\INR_1}\right) + \C_{21}.
\end{align}

\item Second,
\begin{align}
&I\lp x_1,x_2; y_1,\what{y}_2\rp\\
&= \log\lp \frac{2(1+\SNR_1+\INR_1) + \SNR_2 + \INR_2 + |h_{11}h_{22}-h_{12}h_{21}|^2}{2}\rp
\end{align}
which is within $1$ bit to the upper bound
\begin{align}
R_1+R_2 &\le \log\lp 1+\SNR_1+\SNR_2+\INR_1+\INR_2 + |h_{11}h_{22} - h_{12}h_{21}|^2 \rp.
\end{align}

\item Third, 
\begin{align}
&I\lp x_1; y_1|x_{1c},x_2\rp + I\lp x_{1c},x_2; y_2\rp + \C_{12}\\
&= \log\lp1+\SNR_{1p}\rp + \log\lp\frac{1+\SNR_2+\INR_2}{1+\INR_{2p}}\rp + \C_{12}
\end{align}
which is within $1$ bit to the upper bound
\begin{align}
R_1+R_2 &\le \log\left(1+\SNR_2+\INR_2\right) + \log\left(1+\frac{\SNR_1}{1+\INR_2}\right) + \C_{12}.
\end{align}

\item Fourth, 
\begin{align}
&I\lp x_1,x_2; y_1| x_{1c}\rp + I\lp x_{1c}; y_2| x_2\rp + \C_{12} + \lp\C_{21}-\xi_1\rp^+\\
&= \log\lp 1+\SNR_{1p}+\INR_1\rp + \log\lp\frac{1+\INR_2}{1+\INR_{2p}}\rp + \C_{12} + \lp\C_{21}-\xi_1\rp^+,
\end{align}
which is within $2+\xi_1= 3$ bits to the upper bound
\begin{align}
R_1+R_2 &\le \log\left(1+\INR_1+\frac{\SNR_1}{1+\INR_2}\right) + \log\left(1+\INR_2+\frac{\SNR_2}{1+\INR_1}\right)\\&\quad + \C_{21}+\C_{12}.
\end{align}

\item Fifth, 
\begin{align}
&I\lp x_1,x_2; y_1,\what{y}_2| x_{1c}\rp + I\lp x_{1c}; y_2| x_2\rp + \C_{12}\\
&=I\lp x_{2};y_1,\what{y}_2 | x_{1c}\rp + I\lp x_1;y_1,\what{y}_2 | x_{1c},x_{2}\rp + I\lp x_{1c}; y_2| x_2\rp + \C_{12}\\
&\ge I\lp x_{2};y_1,\what{y}_2 | x_{1c}\rp + I\lp x_1;y_1 | x_{1c},x_{2}\rp + I\lp x_{1c}; y_2| x_2\rp + \C_{12}\\
&= \log\lp \frac{2(1+\SNR_{1p}+\INR_1) + \SNR_2+\INR_{2p} + |h_{11}h_{22} - h_{12}h_{21}|^2Q_{1p}}{2(1+\SNR_{1p})+\INR_{2p}} \rp\\
&\quad + \log\lp1+\SNR_{1p}\rp + \log\lp\frac{1+\INR_2}{1+\INR_{2p}}\rp + \C_{12}\\
&\ge \log\lp \frac{1+\SNR_{1p}+\INR_1 + \SNR_2+\INR_{2p} + |h_{11}h_{22} - h_{12}h_{21}|^2Q_{1p}}{3(1+\SNR_{1p})} \rp\\
&\quad + \log\lp1+\SNR_{1p}\rp + \log\lp\frac{1+\INR_2}{1+\INR_{2p}}\rp + \C_{12}\\
&\ge \log\lp 1+\SNR_{1}+\INR_1 + \SNR_2+\INR_{2} + |h_{11}h_{22} - h_{12}h_{21}|^2 \rp\\
&\quad + \log\lp\frac{1}{1+\INR_{2p}}\rp + \C_{12} - \log3.
\end{align}
Hence, if this bound is active, it is within $1+\log3=\log6$ bits to the upper bound
\begin{align}
R_1+R_2 &\le \log\lp 1+\SNR_1+\SNR_2+\INR_1+\INR_2 + |h_{11}h_{22} - h_{12}h_{21}|^2 \rp.
\end{align}
\end{itemize}

{\flushleft (4)} $R_1+2R_2$ bound: We have two bounds. First,
\begin{align}
&I\lp x_1,x_2; y_1| x_{1c}\rp + I\lp x_{1c},x_2; y_2\rp + \C_{12} + \lp\C_{21}-\xi_1\rp^+\\
&= \log\lp1+\SNR_{1p}+\INR_1\rp + \log\lp\frac{1+\SNR_2+\INR_2}{1+\INR_{2p}}\rp + \C_{12} + \lp\C_{21}-\xi_1\rp^+,
\end{align}
which is within $2+\xi_1 = 3$ bits to the upper bound
\begin{align}
R_1 + 2R_2 & \le \log\left(1+\INR_1+\frac{\SNR_1}{1+\INR_2}\right) + \log\left(1+\frac{\SNR_2}{1+\INR_1}\right)\\&\quad + \log\lp 1+ \SNR_2+\INR_2\rp + \C_{12}+\C_{21}.
\end{align}

Second, 
\begin{align}
&I\lp x_1,x_2; y_1,\what{y}_2| x_{1c}\rp + I\lp x_{1c},x_2; y_2\rp + \C_{12}\\
&=\log\lp \frac{2(1+\SNR_{1p}+\INR_1) + \SNR_2+\INR_{2p} + |h_{11}h_{22} - h_{12}h_{21}|^2Q_{1p}}{2} \rp\\
&\quad + \log\lp\frac{1+\SNR_2+\INR_2}{1+\INR_{2p}}\rp + \C_{12},
\end{align}
which is within $2$ bits to the upper bound
\begin{align}
R_1+2R_2 &\le \log\lp 1+ \frac{\SNR_1}{1+\INR_2} + \INR_1 + \SNR_2 + \frac{\INR_2}{1+\INR_2} + \frac{|h_{11}h_{22} - h_{12}h_{21}|^2}{1+\INR_2}\rp\\
&\quad + \log\lp1+\SNR_2+\INR_2\rp + \C_{12}.
\end{align}

Therefore, we see that the bounds in $\mscr{R}_{2\ra1\ra2}$ satisfies:
\begin{itemize}
\item $R_1$ bound is within $1$ bit to outer bounds;
\item $R_2$ bound is within $1$ bit to outer bounds;
\item $R_1+R_2$ bound is within $3$ bits to outer bounds;
\item $R_1+2R_2$ bound is within $3$ bits to outer bounds.
\end{itemize}
\end{proof}

\subsection{Proof of Claim \ref{claim_StrongGap}}
\begin{proof}
(Keep in mind that $\Delta_1 = \Delta_2 = 1$)
{\flushleft (1)} $R_1$ bound: We have four bounds. First,
\begin{align}
I\lp x_{1};y_1|x_{2}\rp + (\C_{21} - \xi_1)^+ = \log\lp1+\SNR_1\rp + (\C_{21} - \xi_1)^+,
\end{align}
which is within $\xi_1 = 1$ bit to the upper bound $\log\lp1+\SNR_1\rp + \C_{21}$. Second, 
\begin{align}
&I\lp x_{1};y_2|x_{2}\rp + (\C_{12} - \xi_2)^+\\
&= \log\lp1+\INR_2\rp + (\C_{12} - \xi_2)^+ \ge \log\lp1+\SNR_1+\INR_2\rp -1.
\end{align}
Hence if this bound is active, it is within $1$ bit to the upper bound $\log\lp1+\SNR_1+\INR_2\rp$. Finally,
\begin{align}
&I\lp x_{1};y_1,\what{y}_2 | x_{2}\rp = \log\lp\frac{2+2\SNR_1+\INR_2}{2}\rp\\
&I\lp x_{1};y_2,\what{y}_1 | x_{2}\rp = \log\lp\frac{2+\SNR_1+2\INR_2}{2}\rp,
\end{align} 
which are both within $1$ bit to the upper bound $\log\lp1+\SNR_1+\INR_2\rp$.

{\flushleft (2)} $R_2$ bound: By symmetry we have the same gap result as (1).

{\flushleft (3)} $R_1+R_2$ bound:
We have four bounds. First,
\begin{align}
I\lp x_{1},x_{2};y_1 \rp + (\C_{21} - \xi_1)^+ = \log\lp1+\SNR_1+\INR_1\rp + (\C_{21} - \xi_1)^+,
\end{align}
which is within $1+\xi_1=2$ bits to the upper bound 
\begin{align}R_1+R_2\le\log\lp1+\SNR_1+\INR_1\rp + \log\lp1+\frac{\SNR_2}{1+\INR_1}\rp+ \C_{21}.\end{align}
Second, 
\begin{align}
I\lp x_{2},x_{1};y_2 \rp + (\C_{12} - \xi_2)^+ = \log\lp1+\SNR_2+\INR_2\rp + (\C_{12} - \xi_2)^+,
\end{align}
which is within $1+\xi_2=2$ bits to the upper bound 
\begin{align}R_1+R_2\le\log\lp1+\SNR_2+\INR_2\rp + \log\lp1+\frac{\SNR_1}{1+\INR_2}\rp+ \C_{12}.\end{align}
Finally,
\begin{align}
&I\lp x_{1},x_2;y_1,\what{y}_2\rp = \log\lp\frac{2(1+\SNR_1+\INR_1)+\SNR_2+\INR_2+|h_{11}h_{22}-h_{12}h_{21}|^2}{2}\rp\\
&I\lp x_2,x_{1};y_2,\what{y}_1\rp = \log\lp\frac{2(1+\SNR_2+\INR_2)+\SNR_1+\INR_1+|h_{11}h_{22}-h_{12}h_{21}|^2}{2}\rp,
\end{align} 
which are both within $1$ bit to the upper bound
\begin{align}
R_1+R_2 \le \log\lp 1+\SNR_1+\INR_1+\SNR_2+\INR_2+|h_{11}h_{22}-h_{12}h_{21}|^2 \rp.
\end{align}

Therefore, we see that the bounds in $\mscr{R}_{\OR}$ satisfies:
\begin{itemize}
\item $R_1$ bound is within $1$ bit to outer bounds;
\item $R_2$ bound is within $1$ bit to outer bounds;
\item $R_1+R_2$ bound is within $2$ bits to outer bounds.
\end{itemize}

\end{proof}


\section{Proof of Theorem \ref{thm_ORSymm}}\label{app_PfORSymm}
From Section \ref{subsec_StrongAchieve}, we have shown that when $\SNR\le \INR$, 
\begin{align}
R_{\sym,\OR} \le C_{\sym} \le \overline{C}_{\sym} \le R_{\sym,\OR}+1.
\end{align}

Hence we focus on the case $\SNR > \INR$ in the rest of the proof. 

By symmetry and by Theorem \ref{thm_ErrProb}, if $R_{\sym,\OR}\ge 0$ satisfies the following, it is achievable:
\begin{align}
R_{\sym,\OR} &\le \min \lbp I\lp x_{2c},x_{1};y_1|x_{1c}\rp + (\C_{21} - \xi_1)^+, I\lp x_{2c},x_{1};y_1,\what{y}_2 | x_{1c}\rp \rbp\\
R_{\sym,\OR} &\le \min \lbp I\lp x_{1};y_1|x_{2c}\rp + (\C_{21} - \xi_1)^+, I\lp x_{1};y_1,\what{y}_2 | x_{2c}\rp \rbp\\ 
2R_{\sym,\OR} &\le \min \lbp I\lp x_{1},x_{2c};y_1 \rp + (\C_{21} - \xi_1)^+, I\lp x_{1},x_{2c};y_1,\what{y}_2 \rp \rbp\\
&\quad + \min \lbp I\lp x_{1};y_1|x_{1c},x_{2c}\rp + (\C_{21} - \xi_1)^+, I\lp x_{1};y_1,\what{y}_2 | x_{1c},x_{2c}\rp \rbp
\end{align}

Note that since
\begin{align}
&I\lp x_{1};y_1|x_{1c},x_{2c}\rp \le I\lp x_{1};y_1,\what{y}_2 | x_{1c},x_{2c}\rp \le I\lp x_{1};y_1|x_{1c},x_{2c}\rp + \rm{constant},\\
&I\lp x_{1};y_1|x_{2c}\rp \le I\lp x_{1};y_1,\what{y}_2 | x_{2c}\rp \le I\lp x_{1};y_1|x_{2c}\rp + \rm{constant},
\end{align}
a sufficient condition for achievable $R_{\sym,\OR}$ is
\begin{align}
R_{\sym,\OR} &\le \min \lbp I\lp x_{2c},x_{1};y_1|x_{1c}\rp + (\C_{21} - \xi_1)^+, I\lp x_{2c},x_{1};y_1,\what{y}_2 | x_{1c}\rp \rbp\\
R_{\sym,\OR} &\le I\lp x_{1};y_1|x_{2c}\rp\\ 
R_{\sym,\OR} &\le \frac{1}{2}\min \lbp I\lp x_{1},x_{2c};y_1 \rp + (\C_{21} - \xi_1)^+, I\lp x_{1},x_{2c};y_1,\what{y}_2 \rp \rbp\\&\quad + \frac{1}{2}I\lp x_{1};y_1|x_{1c},x_{2c}\rp
\end{align}

{\flushleft (1) $I\lp x_{2c},x_{1};y_1|x_{1c}\rp + (\C_{21} - \xi_1)^+$:}
\begin{align}
I\lp x_{2c},x_{1};y_1|x_{1c}\rp + (\C_{21} - \xi_1)^+ = \log\lp \frac{1+\SNR_p+\INR}{1+\INR_p}\rp +(\C - \xi)^+
\end{align}

The gap to the outer bound $\log\left(1+\INR+\frac{\SNR}{1+\INR}\right) + \C$:
\begin{align}
\rm{gap} &\le \log\lp1+\INR_p\rp +\xi \le 1+\log3. 
\end{align}

{\flushleft (2) $I\lp x_{2c},x_{1};y_1,\what{y}_2 | x_{1c}\rp$:}
\begin{align}
I\lp x_{2c},x_{1};y_1,\what{y}_2 | x_{1c}\rp &= I\lp x_{2c};y_1,\what{y}_2 | x_{1c}\rp + I\lp x_{1};y_1,\what{y}_2 | x_{2c},x_{1c}\rp\\
&\ge I\lp x_{2c};\what{y}_2 | x_{1c}\rp + I\lp x_{1};y_1 | x_{2c},x_{1c}\rp\\
&= \log\lp \frac{1+\Delta+\SNR+\INR_p}{1+\Delta+\SNR_p+\INR_p}\rp + \log\lp \frac{1+\SNR_p+\INR_p}{1+\INR_p}\rp\\
&\overset{\aaaa}{\ge} \log\lp \frac{1+\SNR}{2+2\SNR_p}\rp + \log\lp\frac{1+\SNR_p}{1+\INR_p}\rp\\
&= \log\lp \frac{1+\SNR}{1+\INR_p}\rp - 1,
\end{align}
where (a) is due to $\Delta = \max\lbp 1,\SNR_p\rbp$, $\SNR_p >\INR_p$, and hence
\begin{align}
1+\Delta+\SNR_p+\INR_p = \lbp \begin{array}{ll}
1+2\SNR_p+\INR_p \le 2+2\SNR_p &\textrm{if }\Delta = \SNR_p \ge 1\\
2 + \SNR_p+\INR_p \le 2+2\SNR_p &\textrm{if }\Delta = 1 > \SNR_p
\end{array}\right.
\end{align}

Therefore, the gap to the outer bound $\log\lp1+\SNR+\INR\rp$:
\begin{align}
\rm{gap} &\le 1+\log\lp \frac{1+\SNR+\INR}{1+\SNR}\rp + \log\lp 1+\INR_p\rp\\
&\le 1+\log\lp \frac{2+2\SNR}{1+\SNR}\rp + \log\lp 1+1\rp =3.
\end{align}

{\flushleft (3) $I\lp x_{1};y_1|x_{2c}\rp$:}
\begin{align}
I\lp x_{1};y_1|x_{2c}\rp = \log\lp 1+\SNR+\INR_p\rp - \log\lp 1+\INR_p\rp.
\end{align}

The gap to the outer bound $\log\lp1+\SNR+\INR\rp$:
\begin{align}
\rm{gap} &\le \log\lp \frac{1+\SNR+\INR}{1+\SNR+\INR_p}\rp + \log\lp 1+\INR_p\rp\\
&\le \log\lp \frac{2+2\SNR}{1+\SNR}\rp + \log\lp 1+1\rp = 2.
\end{align}

{\flushleft (4) $\frac{1}{2}I\lp x_{1},x_{2c};y_1 \rp + \frac{1}{2}(\C_{21} - \xi_1)^+ + \frac{1}{2}I\lp x_{1};y_1|x_{1c},x_{2c}\rp$:}
\begin{align}
&\frac{1}{2}I\lp x_{1},x_{2c};y_1 \rp + \frac{1}{2}(\C_{21} - \xi_1)^+ +  \frac{1}{2}I\lp x_{1};y_1|x_{1c},x_{2c}\rp\\
&= \frac{1}{2}\log\lp1+\SNR+\INR\rp + \frac{1}{2}(\C - \xi)^+ + \frac{1}{2}\log\lp1+\SNR_p+\INR_p\rp - \log\lp 1+\INR_p\rp
\end{align}

The gap to the outer bound $\frac{1}{2}\log\left(1+\SNR+\INR\right) + \frac{1}{2}\log\left(1+\frac{\SNR}{1+\INR}\right) + \frac{1}{2}\C$:
\begin{align}
\rm{gap} \le \frac{1}{2}\xi + \log\lp1+\INR_p\rp \le \frac{1}{2}\log3+1.
\end{align}

{\flushleft (5) $\frac{1}{2}I\lp x_{1},x_{2c};y_1,\what{y}_2 \rp+\frac{1}{2}I\lp x_{1};y_1|x_{1c},x_{2c}\rp$:}
\begin{align}
&\frac{1}{2}I\lp x_{1},x_{2c};y_1,\what{y}_2 \rp+\frac{1}{2}I\lp x_{1};y_1|x_{1c},x_{2c}\rp\\
&= \frac{1}{2}\log\lp \frac{\Delta\lp1+\SNR+\INR\rp+1+2\SNR+2\INR+|h_{11}h_{22}-h_{12}h_{21}|^2}{\Delta\lp1+\INR_p\rp+1+\SNR_p+\INR_p}\rp\\
&\quad + \frac{1}{2}\log\lp \frac{1+\SNR_p+\INR_p}{1+\INR_p}\rp\\
&\ge \frac{1}{2}\log\lp \frac{1+2\SNR+2\INR+|h_{11}h_{22}-h_{12}h_{21}|^2}{5\Delta}\rp + \frac{1}{2}\log\lp \frac{\Delta}{1+\INR_p}\rp\\
&= \frac{1}{2}\log\lp 1+2\SNR+2\INR+|h_{11}h_{22}-h_{12}h_{21}|^2\rp- \frac{1}{2}\log\lp 1+\INR_p\rp - \frac{1}{2}\log5
\end{align}

Therefore, the gap to the outer bound $\frac{1}{2}\log\lp 1+2\SNR+2\INR + |h_{11}h_{22} - h_{12}h_{21}|^2 \rp$:
\begin{align}
\rm{gap}\le \frac{1}{2}\log\lp 1+\INR_p\rp + \frac{1}{2}\log5 \le \frac{1+\log5}{2}.
\end{align}

From (1) - (5), we conclude that when $\SNR > \INR$,
\begin{align}
R_{\sym,\OR} \le C_{\sym} \le \overline{C}_{\sym} \le R_{\sym,\OR}+3.
\end{align}

This completes the proof.

\section{Proof of Lemma \ref{prop_ConvDof}}\label{app_PfConvDof}
\begin{proof}
From Corollary \ref{cor_SymBd} we see that except the term
\begin{align}
V:=\frac{1}{2}\log\lp 1+2\SNR+2\INR + |h_{11}h_{22} - h_{12}h_{21}|^2 \rp,
\end{align}
all terms scaled by $\log\SNR$ converges {\it everywhere} as $\SNR\rightarrow\infty$ with $\alpha,\kappa$ fixed. Note that
\begin{align}
&|h_{11}h_{22} - h_{12}h_{21}|^2 = |g_{11}e^{j\Theta_{11}}g_{22}e^{j\Theta_{22}}-g_{12}e^{j\Theta_{12}}g_{21}e^{j\Theta_{21}}|^2\\
&= \Big[g_{11}g_{22}\cos\lp\Theta_{11}+\Theta_{22}\rp - g_{12}g_{21}\cos\lp\Theta_{12}+\Theta_{21}\rp\Big]^2\\
&\quad + \Big[g_{11}g_{22}\sin\lp\Theta_{11}+\Theta_{22}\rp - g_{12}g_{21}\sin\lp\Theta_{12}+\Theta_{21}\rp\Big]^2\\
&= g_{11}^2g_{22}^2+g_{12}^2g_{21}^2 - 2g_{11}g_{22}g_{12}g_{21}\cos\lp \Theta_{11}+\Theta_{22} - \Theta_{12}-\Theta_{21}\rp\\
&= \SNR^2+\INR^2 - 2(\cos\Theta)\SNR\INR,
\end{align}
where $\Theta=\Theta_{11}+\Theta_{22} - \Theta_{12}-\Theta_{21}\mod 2\pi$. Obviously $\Theta$ is uniformly distributed over $[0,2\pi]$. Now, consider the limit
\begin{align}
L(\alpha,\kappa):=\lim_{\begin{subarray}{c} \mathrm{fix}\ \alpha,\kappa\\ \SNR\rightarrow\infty\end{subarray}}\frac{V}{\log\SNR}.
\end{align}

We have the following upper and lower bounds for $V$ due to the fact that $\big| |h_{11}||h_{22}| - |h_{12}||h_{21}|\big| \le |h_{11}h_{22} - h_{12}h_{21}| \le |h_{11}||h_{22}| + |h_{12}||h_{21}|$:
\begin{align}
V &\ge \frac{1}{2}\log\lp 1+2\SNR+2\INR + (\SNR-\INR)^2 \rp;\\
V &\le \frac{1}{2}\log\lp 1+2\SNR+2\INR + (\SNR+\INR)^2 \rp.
\end{align}
Hence, when $\alpha<1$, taking limits at both sides yields $1\le L(\alpha,\kappa) \le 1$ and implies $L(\alpha,\kappa)=1$. Similarly, when $\alpha>1$, taking limits at both sides yields $\alpha \le L(\alpha,\kappa) \le \alpha$ and implies $L(\alpha,\kappa)=\alpha$. When $\alpha=1$, note that 
\begin{align}
V &= \frac{1}{2}\log\lp 1+2\SNR+2\INR + \SNR^2+\INR^2 - 2(\cos\Theta)\SNR\INR \rp\\
&= \frac{1}{2}\log\lp (1+\SNR+\INR)^2 - 4\cos^2\frac{\Theta}{2}\SNR\INR \rp,
\end{align}
and therefore $L(\alpha,\kappa)=1$ if $\Theta \ne 0, 2\pi$. Since the event $\{\Theta=0,2\pi\}$ is of zero measure, the limit $L(\alpha,\kappa)$ exists almost surely.
\end{proof}

\end{document}